\def\VersionLong{}
\def\VersionFinal{}
\ifdefined\VersionLong%
	\newcommand{\LongVersion}[1]{#1}
	\newcommand{\ShortVersion}[1]{}
\else
	\newcommand{\LongVersion}[1]{}
	\newcommand{\ShortVersion}[1]{#1}
\fi

\documentclass[runningheads,a4paper,10pt,envcountsame]{llncs}

\usepackage{comment}
\ifdefined\VersionLong%
        \includecomment{LongVersionBlock}
        \excludecomment{ShortVersionBlock}
\else
        \excludecomment{LongVersionBlock}
        \includecomment{ShortVersionBlock}
\fi
\includecomment{lncs}

\usepackage[ruled,lined,linesnumbered,noend]{algorithm2e}
\DontPrintSemicolon{}
\SetKwInOut{Input}{Input}
\SetKwInOut{Output}{Output}
\SetKw{KwRemove}{remove}
\SetKw{KwPush}{add}
\SetKw{KwPop}{remove}
\SetKw{KwFrom}{from}
\SetKw{KwTo}{to}
\SetKw{KwCompute}{compute}
\SetKw{KwReturn}{return}
\SetKw{KwBreak}{break}
\SetKw{KwPick}{pick}
\SetKw{KwLet}{let}
\SetKw{KwAnd}{and}
\SetKw{KwBe}{be}
\SetKw{KwSplit}{split}
\SetKw{KwInto}{into}
\SetKw{KwFind}{find}
\SetKw{KwFeed}{feed}
\SetKwProg{Fn}{Function}{:}{}
\SetAlgorithmName{Algorithm}{algorithm}{List of Algorithms}

\makeatletter
\renewcommand{\nllabel}[1]
 {{\let\@currentlabel\algocf@currentlabel
  \let\@currentcounter\algocf@currentcounter
  \label{#1}}}%

\renewcommand{\algocf@nl@sethref}[1]{%
  \renewcommand{\theHAlgoLine}{\thealgocfproc.#1}%
  \hyper@refstepcounter{AlgoLine}%
  \gdef\algocf@currentlabel{#1}%
  \gdef\algocf@currentcounter{AlgoLine}%
 }%
\makeatother

\usepackage[svgnames,table]{xcolor}

\usepackage{graphicx}
\usepackage{stmaryrd}
\usepackage{amssymb, amsmath}
\usepackage[
 	colorlinks=true,
 	citecolor=darkgreen,
	linkcolor=darkblue,
	urlcolor=darkpurple,
]{hyperref}
\usepackage[capitalise,english,nameinlink]{cleveref} %
\usepackage{mathtools}

\usepackage[pdf,singlefile]{graphviz}
\usepackage{amsfonts}
\usepackage{tikz}
\usepackage{booktabs}
\usetikzlibrary{graphs}
\usetikzlibrary{automata,positioning,arrows.meta,decorations.pathmorphing,decorations.pathreplacing,decorations.shapes}
\usepackage{multirow}
\usepackage{array}

\usepackage{colortbl}
\definecolor{darkblue}{rgb}{0.0,0.0,0.6}
\definecolor{darkgreen}{rgb}{0, 0.5, 0}
\definecolor{darkpurple}{rgb}{0.7, 0, 0.7}
\definecolor{darkblue}{rgb}{0, 0, 0.7}

\crefalias{AlgoLine}{algoline}
\crefname{algoline}{\text{line}}{\text{lines}} %
\Crefname{algoline}{\text{line}}{\text{lines}} %
\crefname{line}{\text{line}}{\text{lines}} %
\crefname{item}{\text{item}}{\text{items}} %
\crefname{example}{\text{Example}}{\text{Examples}} %
\crefname{assumption}{\text{Assumption}}{\text{Assumptions}} %
\crefname{algorithm}{\text{Algorithm}}{\text{Algorithms}}
\crefname{algocf}{\text{Algorithm}}{\text{Algorithms}}
\Crefname{algocf}{\text{Algorithm}}{\text{Algorithms}}

\ifdefined\VersionWithComments%
	\usepackage{draftwatermark}
	\SetWatermarkText{draft}
 	\SetWatermarkScale{15}
	\SetWatermarkColor[gray]{0.9}
\fi

\ifdefined\VersionWithComments%
	\usepackage[colorinlistoftodos,textsize=footnotesize]{todonotes}
\else
	\usepackage[disable]{todonotes}
\fi

\newcommand{\gennote}[4][]{\todo[linecolor=#3,backgroundcolor=#3!25,bordercolor=#3#1]{#4: #2}}

\newcommand{\instructions}[1]{{\gennote[,inline]{\bfseries #1}{red}{Instructions}}}

\ifdefined\VersionFinal%
\else
	\usepackage[pagewise]{lineno} %
	\linenumbers%
	
\fi

\tikzstyle{defproblem} = [
 draw=black,
 fill=cyan!10,
 text=black,
 line width=0.5pt,
  text width = \linewidth - 1.6 ex - 1pt,
  inner sep = 0.8 ex,
  rounded corners=4pt]
\newcommand{\recallResult}[2]
{%
	\smallskip

	\noindent\fcolorbox{black}{green!15}{
		\begin{minipage}{.95\columnwidth}
			\noindent\textbf{\cref{#1} (recalled).}
			{\em{}#2}
		\end{minipage}
	}

	\smallskip
}

\tikzstyle{rqanswer} = [
 draw=black,
 fill=gray!30,
 text=black,
 line width=0.5pt,
  text width = \linewidth - 1.6 ex - 1pt,
  inner sep = 0.8 ex,
  rounded corners=4pt]

\usepackage{subcaption}

\usepackage{paralist} %
\newenvironment{ienumeration}
	{\begin{inparaenum}[\itshape i\upshape)]}
	{\end{inparaenum}}
 \newenvironment{myitemize}
	{\ifdefined\VersionLong\begin{itemize}\else\begin{inparaitem}[]\fi}
	{\ifdefined\VersionLong\end{itemize}\else\end{inparaitem}\fi}

\usepackage{booktabs}

\newcommand{\setR}{\mathbb{R}}
\newcommand{\setN}{\mathbb{N}}

\newcommand{\R}{\setR}

\newcommand{\Intervals}{\mathcal{I}}
\newcommand{\setRnn}{\R_{\ge 0}}
\newcommand{\Rnn}{\setRnn}

\newcommand{\setdiff}{\triangle}

\newcommand{\KTrue}{\ensuremath{\mathit{true}}}

\newcommand{\Boolean}[1]{\mathbb{B}_+(#1)}

\newcommand{\word}[1][]{w#1}
\newcommand{\suffixes}{\mathit{suff}}
\newcommand{\Alphabet}{\Sigma}
\newcommand{\TimedWords}{\mathcal{T}(\Alphabet)}
\newcommand{\emptyword}{\epsilon}

\newcommand{\loc}{\ell}
\newcommand{\Loc}{L}
\newcommand{\initLoc}{\loc_0}
\newcommand{\InitLoc}{\Loc_0}
\newcommand{\Final}{F}
\newcommand{\action}{a}

\newcommand{\delay}{d}

\newcommand{\letters}[1]{\Phi(#1)}
\newcommand{\partition}{\Pi}
\newcommand{\A}{\mathcal{A}}
\newcommand{\TA}{\A}

\newcommand{\Edge}{\Delta}
\newcommand{\transition}{\delta}
\newcommand{\formula}{\varphi}

\newcommand{\Transitions}{\Edge}
\newcommand{\Lg}{\mathcal{L}}
\newcommand{\lang}{\mathfrak{L}}

\newcommand{\eqQKey}[1][\Lg]{\mathtt{eq}_{#1}}
\newcommand{\eqQ}[2][\Lg]{\eqQKey[#1](#2)}
\newcommand{\nerode}[1]{\equiv_{#1}}
\newcommand{\prefix}{s}
\newcommand{\PrefixSet}{S}
\newcommand{\Basis}{P}
\newcommand{\basis}{p}
\newcommand{\RowsP}{R_{P}}

\newcommand{\suffix}{e}
\newcommand{\SuffixSet}{E}
\newcommand{\targetLg}{\lang_{\mathrm{tgt}}}

\newcommand{\hypothesisA}{\A_{\mathrm{hyp}}}
\newcommand{\evidenceA}{\hypothesisA^{e}}
\newcommand{\Table}{T}
\newcommand{\TableCell}[3][]{\Table#1(#2, #3)}
\newcommand{\TableRow}[1]{\Table[#1]}

\newcommand{\cex}{\mathit{cex}}
\newcommand{\Lstar}{\ensuremath{\mathrm{L}^*}}
\newcommand{\NLstar}{\ensuremath{\mathrm{NL}^*}}
\newcommand{\ALstar}{\ensuremath{\mathrm{AL}^*}}
\newcommand{\ALdstar}{\ensuremath{\mathrm{AL}^{**}}}

\newcommand{\NLstarRTA}{\ensuremath{\mathrm{NL}^*_{\mathrm{RTA}}}}
\newcommand{\ALstarRTA}{\ensuremath{\mathrm{AL}^*_{\mathrm{RTA}}}}
\newcommand{\PhiTable}{\letters{\PrefixSet \cup \SuffixSet}}
\newcommand{\normalizeLetter}{g_{\alpha}}
\newcommand{\normalize}{\ensuremath{G}_{\alpha}}
\newcommand{\normalized}[1]{#1_{\alpha}}

\newcommand{\tbcolor}{\cellcolor{green!25}\bf}

\newcommand{\ourTool}{\textsc{LearnARTA}}

\ifdefined\VersionWithComments%
 	\definecolor{colorok}{RGB}{80,80,150}
\else
	\definecolor{colorok}{RGB}{0,0,0}
\fi

\newcommand{\eg}{\textcolor{colorok}{e.\,g.,}\xspace}
\newcommand{\ie}{\textcolor{colorok}{i.\,e.,}\xspace}

\makeatletter
\def\orcidID#1{\smash{\href{https://orcid.org/#1}{\protect\raisebox{-1.25pt}{\protect\includegraphics{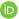}}}}}
\makeatother

\title{Learning Alternating Real-Time Automata}
\author{%
Kazuki Kinoshita\inst{1}
\and
Masaki Waga\inst{1,2}\orcidID{0000-0001-9360-7490}\LongVersion{%
\thanks{This is the author (and extended) version of the manuscript of the same name published in the proceedings of the 3rd International Joint Conference on Quantitative Evaluation of Systems and Formal Modeling and Analysis of Timed Systems (QEST+FORMATS 2026).
    The final version is available at \url{www.springer.com}.
}}
}
\institute{
\ifdefined\VersionAnonymous
\else
  Kyoto University, Kyoto, Japan
\and
  National Institute of Informatics, Tokyo, Japan
\fi
}

\ifdefined\VersionLong
\else
\RequirePackage{graphicx}
\usepackage[firstpageonly=true,angle=0,vpos=.147\paperheight,hpos=.39\linewidth,vanchor=t,hanchor=l]{draftwatermark}
\SetWatermarkText{%
\raisebox{-2cm}%
{\begin{minipage}{\linewidth}\centering\includegraphics[width=12.5mm] 
{QEST+FORMATS_2026_available.pdf}\hfill\includegraphics[width=12.5mm] 
{QEST+FORMATS_2026_functional.pdf}\end{minipage}}%
}
\fi

\begin{document}

\maketitle              %
\ifdefined\VersionFinal
\else
\pagestyle{plain}
\fi
\begin{abstract}
  We present \ALstarRTA{}, the first active learning algorithm for alternating real-time automata (ARTAs) via membership and equivalence queries. The algorithm combines ideas from \ALstar{} for alternating finite automata and \NLstarRTA{} for nondeterministic real-time automata.
  We first define ARTAs and show that alternation improves succinctness, although it does not increase expressive power.
  We then present \ALstarRTA{} and analyze its query complexity.
  Our empirical evaluation suggests that \ALstarRTA{} generally learns smaller automata than \NLstarRTA{} at the cost of more queries.
 \keywords{active automata learning \and alternating real-time automata}
\end{abstract}
\section{Introduction}\label{section:introduction}
\emph{Active automata learning}, pioneered by Angluin's \Lstar{} algorithm~\cite{DBLP:journals/iandc/Angluin87},
aims to identify an unknown language exactly through membership and equivalence queries.
In a membership query, the learner asks whether a word belongs to the target language $\targetLg$.
In an equivalence query, the learner asks whether a hypothesis automaton $\hypothesisA$ recognizes the target language $\targetLg$; if $\targetLg \neq \Lg(\hypothesisA)$, the teacher returns a concrete witness $\word \in \targetLg \setdiff \Lg(\hypothesisA)$ of deviation, where $\targetLg \setdiff \Lg(\hypothesisA)$ is the symmetric difference of $\targetLg$ and $\Lg(\hypothesisA)$.
For any regular language, \Lstar{} identifies the minimum deterministic finite automaton (DFA) recognizing it with a polynomial number of queries.

\LongVersion{The }\Lstar{}\LongVersion{ algorithm} has been extended in various directions.
One direction is to learn more succinct automata by allowing nondeterministic or alternating branching.
\NLstar{}~\cite{DBLP:conf/ijcai/BolligHKL09} learns a class of \emph{nondeterministic} finite automata (NFAs) called \emph{residual} NFAs.
\ALstar{}~\cite{DBLP:conf/ijcai/AngluinEF15} and \ALdstar{}~\cite{DBLP:journals/iandc/BerndtLLR22} learn \emph{alternating} finite automata (AFAs).
Due to the succinctness of NFAs and AFAs with respect to equivalent DFAs, these algorithms can identify a language in a more succinct form both in theory and in practice.

Another direction is to extend the class of target languages, \eg{} to \emph{timed} languages, where each letter is equipped with the delay since the previous letter.
Many \Lstar{}-style learning algorithms have been proposed for various subclasses of \emph{timed automata (TAs)}~\cite{DBLP:journals/tcs/AlurD94}, such as
\emph{deterministic} TAs~\cite{DBLP:conf/cav/Waga23,DBLP:conf/hybrid/TengZ024,DBLP:journals/chinaf/TengCMZAZ25},
\emph{event-recording automata}~\cite{DBLP:journals/tcs/GrinchteinJL10}, and
\emph{real-time automata (RTAs)}~\cite{DBLP:journals/chinaf/AnWZZZ21,DBLP:journals/tecs/AnZZZ21}.
Despite recent advances in active learning of timed languages, most methods are limited to \emph{deterministic} branching, even when nondeterminism does not add expressive power.
One notable exception is \NLstarRTA{}~\cite{DBLP:journals/tecs/AnZZZ21}, which learns \emph{nondeterministic} RTAs, but no algorithms have been proposed for any subclasses of TAs with \emph{alternating} branching.

\paragraph{Contributions.}
\begin{figure}[tbp]
  \begin{subfigure}{.44\linewidth}
    \centering
    \begin{tikzpicture}[auto, semithick,node distance=1.5cm,scale=0.85,every node/.style={initial text={},transform shape}]
      \node[state, initial] (q0) at (0, 0) {$\loc_{0}$};
      \node[state] (q1) at (3.5, 0) {$\loc_{1}$};
      \node[state,accepting] (q2) at (3.5, -3) {$\loc_{2}$};
      \node[rectangle,fill=black,minimum width=.3cm,minimum height=.3cm] (q3) at (0, -3) {};

      \path[->]
        (q0) edge[bend left=10] node[above] {$a, [0,3]$} (q1)
        (q0) edge[bend left=10] node[pos=0.5,right] {$b, (2, \infty)$} (q3)
        (q0) edge node[pos=0.7,above right] {$b, [3,7]$} (q2)
        (q1) edge[bend left=10] node[below] {$a, [0,\infty)$} (q0)
        (q1) edge node[right] {$b, [0, \infty)$} (q2)
        (q2) edge[loop right] node {$b, [0, \infty)$} (q2)
        (q3) edge[bend left=10] node[left] {} (q0)
        (q3) edge node[below] {} (q1)
        ;
    \end{tikzpicture}
    \caption{An ARTA $\A$.}%
    \label{figure:running_arta}
  \end{subfigure}
  \hfill
  \begin{subfigure}{.55\linewidth}
    \centering
    \begin{tikzpicture}[auto, semithick,node distance=1.5cm,scale=0.85,every node/.style={initial text={},transform shape}]
      \node[state, initial above] (q0) at (0, 0) {$\loc_{0}$};
      \node[state] (q1) at (4, 0) {$\loc_{1}$};
      \node[state,accepting] (q2) at (4, -3) {$\loc_{2}$};
      \node[rectangle,fill=black,minimum width=.3cm,minimum height=.3cm] (q3) at (0, -3) {};
      \node[rectangle,draw=black,minimum width=.3cm,minimum height=.3cm] (bot) at (-1.5, 0) {$\bot$};

      \path[->]
        (q0) edge[bend left=10] node[above] {$a, 3$} (q1)
        (q0) edge[bend right=50] node[pos=0.5,right] {$b, 2.5$} (q3)
        (q0) edge[bend left=10] node[pos=0.5,right] {$b, 7$} (q3)
        (q0) edge[bend left=50] node[pos=0.5,right] {$b, 7.5$} (q3)
        (q0) edge node[pos=0.5,above right] {$b, 7$} (q2)
        (q0) edge node[above] {$b, 2$} (bot)
        (q1) edge[bend left=10] node[below] {$a, 0$} (q0)
        (q1) edge node[right] {$b, 0$} (q2)
        (q2) edge[loop right] node {$b, 0$} (q2)
        (q3) edge[bend left=70] node[left] {} (q0)
        (q3) edge[bend right=30] node[below] {} (q1)
        ;
    \end{tikzpicture}
    \caption{Corresponding evidence AFA $\A^e$.}%
    \label{figure:running_evidence_AFA}
  \end{subfigure}
  \caption{An ARTA and its corresponding evidence AFA. The initial location formula is $\loc_0$. The black square represents a universal branching. Some transitions are simplified for illustration, \eg{} $\transition\bigl(\loc_0, (b, 7)\bigr) = (\loc_0 \land \loc_1) \lor \loc_2$ in $\A^e$ is split into two edges; $(\loc_0, b, [3, 7], (\loc_0 \land \loc_1) \lor \loc_2) \in \Transitions$ in $\A$ is similarly split into two edges and merged with others.}%
  \label{figure:running}
\end{figure}
We propose an active learning algorithm \ALstarRTA{} that learns \emph{alternating RTAs (ARTAs)}.
\cref{figure:running_arta} depicts an ARTA over $\Alphabet = \{a, b\}$.
Each transition of an ARTA is labeled with a letter and an interval representing the elapsed time, like RTAs~\cite{DBLP:journals/jalc/Dima01}.
The target of each transition of an ARTA is a positive Boolean expression over the locations, like alternating finite automata (AFAs)~\cite{DBLP:journals/jacm/ChandraKS81}.
For example, in the ARTA in \cref{figure:running_arta}, if we read a letter $b$ with a delay $\delay \in (2, 3)$ at $\loc_0$, we may simultaneously jump to both $\loc_0$ and $\loc_1$ and this execution is accepted when both spawned branches are accepted at the same time.

We first show that alternating branching improves the succinctness of RTAs, although it does not increase the expressive power.
Namely, ARTAs are equi-expressive with RTAs but are \emph{doubly} exponentially more succinct than deterministic RTAs, whereas nondeterministic RTAs are only \emph{singly} exponentially more succinct than deterministic RTAs~\cite{DBLP:journals/tecs/AnZZZ21}.
This result generalizes the classical succinctness results for AFAs~\cite{DBLP:journals/tcs/Leiss81,DBLP:journals/tcs/Leiss85a} to real-time languages, the class of timed languages recognizable by RTAs.

We then present \ALstarRTA{}, which is based on \ALstar{} and \NLstarRTA{}.
At a high level, \ALstarRTA{} follows \NLstarRTA{}: it initially constructs an evidence AFA (\cref{figure:running_evidence_AFA}) following the idea of \ALstar{} and then generalizes concrete delays on transitions to obtain an ARTA (\cref{figure:running_arta}).
Because delays are drawn from an infinite domain, evidence AFAs necessarily have incomplete transitions.
We adapt the conditions in \ALstar{} to accommodate incomplete transitions.
We show that \ALstarRTA{} terminates for any real-time language with an exponential query bound in the size of the minimal deterministic RTA with a polynomial dependence on the alphabet size, counterexample length, and timing parameters.

We implemented a prototype library \ourTool{} for \ALstarRTA{} and empirically compared \ALstarRTA{} against \NLstarRTA{}.
The results suggest that \ALstarRTA{} generally learns smaller automata than \NLstarRTA{}, suggesting that the aforementioned succinctness result is also helpful in practice.
However, this gain comes at a higher learning cost, particularly in terms of the number of queries and the learning time, as reported in~\cite{DBLP:conf/ijcai/AngluinEF15}.

Overall, our contributions are summarized as follows.
\begin{itemize}
  \item We formulate ARTAs and show that alternation improves the succinctness of real-time languages, although it does not increase expressive power.
  \item We propose the \ALstarRTA{} algorithm for the active learning of ARTAs and analyze its query complexity.
  \item We empirically show that \ALstarRTA{} usually learns a smaller automaton than \NLstarRTA{}, at the cost of more queries.
\end{itemize}
\paragraph{Organization.}

The remainder of this paper is organized as follows.
After reviewing real-time automata and AFAs in \cref{section:preliminaries},
we define ARTAs and discuss their expressive features in \cref{section:ARTA}.
Then, we present the learning algorithm and its termination in \cref{section:learning}.
Finally, we empirically evaluate \ALstarRTA{} in \cref{section:experiments}, discuss related work in \cref{section:related_work}, and conclude in \cref{section:conclusions}.

\section{Preliminaries}\label{section:preliminaries}

Let $\Alphabet$ be a finite alphabet, let $\Rnn$ be the set of non-negative reals, and let $\setN$ be the set of natural numbers.
For sets $X, Y$, we denote their symmetric difference (\ie{} $(X \cup Y) \setminus (X \cap Y)$) as $X \setdiff Y$.
For a set $X$, a word is a finite sequence of $X$.
A language is a set of words.
We let
$X^*$ be the set of words over $X$ and
$\emptyword$ be the empty word.
For $\word, \word' \in X^*$, we denote their concatenation by $\word \cdot \word'$.
For a word $\word$, we let $\suffixes(\word)$ be the set of suffixes of $\word$.
We let $\Intervals$ be the set of intervals over $\Rnn$ whose endpoints are in $\setN \cup \{\infty\}$.

\subsection{Timed words and real-time automata}

A \emph{timed word} $\word = (\action_1, \delay_1), (\action_2, \delay_2), \dots, (\action_n, \delay_n) \in (\Alphabet \times \Rnn)^*$ over $\Alphabet$ is a word over $\Alphabet \times \Rnn$, where each $\delay_i$ represents the delay between $\action_{i -1}$ and $\action_{i}$\footnote{We use a sequence of letters with (relative) delays rather than (absolute) timestamps for the sake of presentation; this choice is not essential.}.
We let $\TimedWords$ be the set of timed words over $\Alphabet$.
A \emph{timed language} is a set of timed words.
For a timed language $\lang \subseteq \TimedWords$, we let $\letters{\lang} \subseteq \Alphabet \times \Rnn$ be the set of letters with delays in $\lang$, \ie{} $\letters{\lang} = \{(\action,\delay) \mid \exists \word, \word' \in \TimedWords.\, \word \cdot (\action, \delay) \cdot \word' \in \lang\}$.

\begin{definition}[real-time automata]
  A \emph{real-time automaton (RTA)}~\cite{DBLP:journals/jalc/Dima01} is a 5-tuple
  $\TA = (\Loc, \Alphabet, \initLoc, \Final, \Edge)$, where
  $\Loc$ is a finite set of locations,
  $\Alphabet$ is the alphabet,
  $\initLoc\in\Loc$ is the initial location,
  $\Final\subseteq\Loc$ is the set of accepting locations, and
  $\Edge\subseteq\Loc\times\Alphabet\times\Intervals\times\Loc$ is a finite transition relation.
\end{definition}

A \emph{run} of an RTA $\TA$ over a timed word $\word=(\action_1,\delay_1),(\action_2,\delay_2), \dots, (\action_n,\delay_n)$
is a sequence of locations $\loc_0,\loc_1,\ldots,\loc_n$ such that
for each $i\in\{1,2,\ldots,n\}$ there is an interval $I_i\in\Intervals$ satisfying
$(\loc_{i-1},\action_i,I_i,\loc_i)\in\Edge$ and $\delay_i\in I_i$.
A run is \emph{accepting} if its last location is in $\Final$.
A timed word $\word$ is accepted by an RTA $\TA$ if there is an accepting run of $\TA$ over $\word$.
The timed language recognized by $\TA$ is the set of timed words accepted by $\TA$.
We let $\Lg(\TA)$ be the timed language recognized by $\TA$.
A \emph{real-time language} is a timed language recognized by an RTA.\@

An RTA is \emph{deterministic} if, for any location $\loc\in\Loc$, letter $\action\in\Alphabet$, and 
$(\loc, \action, I_1, \loc_1), (\loc, \action, I_2, \loc_2) \in \Edge$,
$(I_1, \loc_1) \neq (I_2, \loc_2)$ implies $I_1 \cap I_2 = \emptyset$.

\begin{definition}[regions of delays~\cite{DBLP:journals/tcs/AlurD94,DBLP:journals/chinaf/AnWZZZ21}]
\label{definition:regions_of_delays}
Let $\TA$ be an RTA and let $K \in \setN$ be the largest integer appearing in the intervals on transitions of $\TA$.
For each $t \in\Rnn$, we define the \emph{region} of $t$, denoted by $\llbracket t\rrbracket$, as\LongVersion{ follows:
\[
  \llbracket t \rrbracket =
  \begin{cases}
    [t,t] & \text{if we have $t \in \setN$ and $t \leq K$},\\
    (\lfloor t\rfloor, \lfloor t\rfloor+1) & \text{if we have $t\notin\setN$ and $t < K$},\\
    (K,\infty) & \text{if we have $t > K$}.
  \end{cases}
\]}\ShortVersion{
$\llbracket t \rrbracket = [t,t]$ if we have $t \in \setN$ and $t \leq K$,
$\llbracket t \rrbracket = (\lfloor t\rfloor, \lfloor t\rfloor+1)$ if we have $t \not\in \setN$ and $t \leq K$, and
$\llbracket t \rrbracket = (K, \infty)$ if we have $t > K$.
}
\end{definition}

The following Myhill-Nerode-style theorem holds for real-time languages.

\begin{theorem}[\cite{DBLP:journals/chinaf/AnWZZZ21}]
  \label{theorem:myhill_nerode}
  For a timed language $\lang \subseteq \TimedWords$, we define an equivalence relation ${\nerode{\lang}} \subseteq \TimedWords \times \TimedWords$ by
  $\word_1 \nerode{\lang} \word_2$
  if and only if 
  $\word_1 \cdot \word\in \lang \iff \word_2 \cdot \word\in \lang$ holds for any $\word\in\TimedWords$.
  $\lang$ is a real-time language if and only if
  the number of equivalence classes of $\nerode{\lang}$ is finite and $\nerode{\lang}$ satisfies the following\LongVersion{ conditions} for some $K \in \setN$:
  \begin{itemize}
    \item For any $\word \in \TimedWords$, $\action \in\Alphabet$, and $t_1,t_2 \in \Rnn$, $\llbracket t_1\rrbracket=\llbracket t_2\rrbracket$ implies $\word \cdot(\action,t_1)\nerode{\lang} \word\cdot(\action,t_2)$.
    \item For any $\word \in \TimedWords$, $\action\in\Alphabet$, and $t_1,t_2\in\Rnn$, $t_1 > K \land t_2 > K$ implies $\word \cdot (\action,t_1)\nerode{\lang} \word \cdot(\action,t_2)$.\qed{}
  \end{itemize}
\end{theorem}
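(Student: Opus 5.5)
Both directions go through the quotient of $\nerode{\lang}$: we build a deterministic RTA whose locations are the equivalence classes. Throughout, $K$ is the constant from the two conditions of \cref{theorem:myhill_nerode}, and the region map $\llbracket\cdot\rrbracket$ of \cref{definition:regions_of_delays} is taken with respect to this $K$ (or the automaton's largest constant, in the other direction). There are finitely many regions, namely the points $[i,i]$ for $i \le K$, the open intervals $(i,i+1)$ for $i<K$, and $(K,\infty)$. Each region lies in $\Intervals$, and the regions partition $\Rnn$.

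\emph{Only if.} Let $\TA=(\Loc,\Alphabet,\initLoc,\Final,\Edge)$ recognize $\lang$, and let $K$ be its largest constant. First we determinize by the subset construction. For $X\subseteq\Loc$, a letter $\action$, and a region $r$, a delay $t\in r$ satisfies $t\in I$ for an interval $I$ of $\TA$ if and only if $r\subseteq I$. This holds because every endpoint of $I$ is an integer at most $K$ or $\infty$, so $I$ is a union of regions. Hence the successor set $\{\loc' \mid \exists \loc\in X,\, I\supseteq r,\ (\loc,\action,I,\loc')\in\Edge\}$ depends only on $X$, $\action$ and $r$. This gives a deterministic RTA with at most $2^{|\Loc|}$ locations, whose transitions are labelled by regions, and it recognizes $\lang$.

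Next we read off the conditions. If $\word_1$ and $\word_2$ reach the same subset, they are $\nerode{\lang}$-equivalent, so $\nerode{\lang}$ has at most $2^{|\Loc|}$ classes. If $\llbracket t_1\rrbracket=\llbracket t_2\rrbracket$, then $\word\cdot(\action,t_1)$ and $\word\cdot(\action,t_2)$ reach the same subset, which gives the first condition. The second condition is the special case of the region $(K,\infty)$.

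\emph{If.} Define the quotient automaton: the locations are the classes $[\word]$, the initial location is $[\emptyword]$, and the accepting locations are the classes contained in $\lang$. For each class $[\word]$, letter $\action$ and region $r$, add the transition $([\word],\action,r,[\word\cdot(\action,t)])$ for an arbitrary $t\in r$.

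The main step is well-definedness: the target must not depend on the representative $\word$ or on the choice of $t\in r$. Independence of $t$ follows from the first condition for points and unit intervals, and from the second condition for $(K,\infty)$. Independence of the representative holds because $\nerode{\lang}$ is a right congruence: if $\word\nerode{\lang}\word'$, then $\word\cdot u \nerode{\lang}\word'\cdot u$ for every $u$, directly from the definition.

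The automaton is finite because there are finitely many classes and regions. It is deterministic because the regions are pairwise disjoint. Finally, induction on the length of the input shows that the unique run on $\word$ ends in $[\word]$, and $[\word]$ is accepting exactly when $\word\in\lang$, since $\lang$ is a union of classes (take the empty suffix in the definition of $\nerode{\lang}$). Hence the automaton recognizes $\lang$, so $\lang$ is a real-time language.

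The only delicate point is in the ``only if'' direction: checking that every interval of $\TA$ is exactly a union of regions, which uses that endpoints are integers in $\setN\cup\{\infty\}$ bounded by $K$. This is what makes the subset construction's transitions region-invariant.
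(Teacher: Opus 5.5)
Your proof is correct. The paper does not prove this theorem itself; it is imported from An et al.\ with no proof in the appendix, so there is no competing route to compare against. Your argument is the standard Myhill--Nerode one. Its key step is that a guard with endpoints in $\setN\cup\{\infty\}$ bounded by $K$ is a union of regions, so the subset construction can be labelled by regions. This is the same region abstraction the paper uses in its proof of \cref{theorem:expressive_power}.

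One minor remark on the reading of $\llbracket\cdot\rrbracket$. If it is taken relative to the same $K$, the second condition is implied by the first. This suggests the first condition may be meant for unbounded regions $[i,i]$ and $(i,i+1)$ with $i\in\setN$. Your proof covers that reading as well, for two reasons:
\begin{itemize}
  \item In the ``if'' direction you use the first condition only for the point and unit-interval regions, which are also unbounded regions, and the second condition for $(K,\infty)$.
  \item In the ``only if'' direction the unbounded regions refine the $K$-bounded ones.
\end{itemize}
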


Moreover, for any real-time language $\lang$, there is a \emph{unique} minimal DRTA $\TA$ such that $\Lg(\TA) = \lang$, and the number of locations of $\TA$ equals the number of equivalence classes of $\nerode{\lang}$. 

\subsection{Alternating finite automata}

An \emph{alternating finite automaton (AFA)}~\cite{DBLP:journals/jacm/ChandraKS81} generalizes DFAs by allowing both
existential and universal branching.
For a finite set $X$, we let $\Boolean{X}$ be the set of positive Boolean expressions over $X$, \ie{} the set defined as follows:
\[
  \formula \Coloneq \top \mid \bot \mid x \in X \mid  \formula \land \formula \mid \formula \lor \formula.
\]
For a set $\Loc$ of locations, we call $\formula \in \Boolean{\Loc}$ a \emph{location formula}.

\begin{definition}[alternating finite automata]
  An \emph{alternating finite automaton (AFA)} is a 5-tuple $\A=(\Loc,\Alphabet,\InitLoc,\Final, \transition)$, where
  $\Loc$ is a finite set of locations,
  $\Alphabet$ is a finite alphabet,
  $\InitLoc \in \Boolean{\Loc}$ is the initial location formula,
  $\Final\subseteq\Loc$ is the set of accepting locations, and
  $\transition \colon \Loc \times \Alphabet \to \Boolean{\Loc}$ is the transition function.
\end{definition}

We generalize $\transition$ to $\transition^* \colon \Boolean{\Loc}\times\Alphabet^*\to \Boolean{\Loc}$ as follows:
\begin{gather*}
  \transition^*(\formula, \emptyword) = \formula \qquad
  \transition^*(\top, \word) = \top \qquad
  \transition^*(\bot, \word) = \bot \qquad
  \transition^*(\loc, \action \cdot \word) = \transition^*(\transition(\loc, \action), \word) \\
  \transition^*(\formula \lor \formula', \word) = \transition^*(\formula, \word) \lor \transition^*(\formula', \word)\qquad
  \transition^*(\formula \land \formula', \word) = \transition^*(\formula, \word) \land \transition^*(\formula', \word).
\end{gather*}

For a location formula $\formula$ of an AFA $\A$,
the evaluation function $E_{\A} \colon \Boolean{\Loc} \to \{\top,\bot\}$ is defined as follows:
\begin{gather*}
  E_{\A}(\top) = \top \qquad
  E_{\A}(\bot) = \bot \qquad
  E_{\A}(\loc) = \top \text{ if $\loc \in \Final$} \qquad
  E_{\A}(\loc) = \bot \text{ if $\loc \notin \Final$} \\
  E_{\A}(\formula \land \formula') = E_{\A}(\formula) \land E_{\A}(\formula') \qquad
  E_{\A}(\formula \lor \formula') = E_{\A}(\formula) \lor E_{\A}(\formula')
\end{gather*}

A word $\word \in \Alphabet^*$ is \emph{accepted} by an AFA $\A$ if we have $E_{\A}(\transition^*(\InitLoc, \word)) = \top$.
The language $\Lg(\A) \subseteq \Alphabet^*$ recognized by $\A$ is the set of words accepted by $\A$.

\subsection{The \ALstar{} algorithm for active learning of AFAs}\label{section:alstar}
The \emph{\ALstar{} algorithm}~\cite{DBLP:conf/ijcai/AngluinEF15} learns an AFA $\hypothesisA$ recognizing the target language $\targetLg \subseteq \Alphabet^*$ using finitely many \emph{membership} and \emph{equivalence} queries to the teacher.
In a membership query, the learner asks if $\word \in \Alphabet^*$ belongs to the target language $\targetLg$, \ie{} $\word \in \targetLg$.
In an equivalence query, the learner asks if the hypothesis automaton $\hypothesisA$ recognizes the target language $\targetLg$, \ie{} $\Lg(\hypothesisA) = \targetLg$.
When we have $\Lg(\hypothesisA) \neq \targetLg$,
the teacher returns a counterexample $\cex \in \Lg(\hypothesisA) \setdiff \targetLg$.

\begin{figure}[tbp]
  \begin{subfigure}{.55\linewidth}
    \centering
    \scriptsize
    \begin{tabular}{r|cccccc}
      & $\emptyword$ & $a$ & $b$ & $ab$ & $ba$ \\\hline
      \rowcolor{gray!15}
      $\emptyword$ & $\bot$ & $\top$ & $\bot$ & $\top$ & $\top$\\
      \rowcolor{gray!15}
      $a$ & $\top$ & $\top$ & $\top$ & $\bot$ & $\top$\\
      $b$ & $\bot$ & $\top$ & $\bot$ & $\bot$ & $\bot$ \\
      $aa$ & $\top$ & $\top$ & $\bot$ & $\bot$ & $\bot$ \\
      $ab$ & $\top$ & $\top$ & $\bot$ & $\bot$ & $\bot$ \\
      \rowcolor{gray!15}
      $ba$ & $\top$ & $\top$ & $\bot$ & $\bot$ & $\bot$ \\
      $bb$ & $\bot$ & $\top$ & $\bot$ & $\bot$ & $\bot$ \\
      $baa$ & $\top$ & $\top$ & $\bot$ & $\bot$ & $\bot$ \\
      $bab$ & $\bot$ & $\bot$ & $\bot$ & $\bot$ & $\bot$ \\
    \end{tabular}
    \caption{An observation table. Shaded rows show a minimum-cardinality monotone basis.}%
    \label{figure:ALstar:observation_tables}
  \end{subfigure}
  \hfill
  \begin{subfigure}{.44\linewidth}
    \centering
    \begin{tikzpicture}[auto, semithick,node distance=1.5cm,scale=0.95,every node/.style={initial text={},transform shape}]
      \node[state, initial] (q0) at (0, 0) {$\loc_{0}$};
      \node[state,accepting] (q1) at (2.5, -2.0) {$\loc_{1}$};
      \node[state,accepting] (q2) at (2.5, 0.0) {$\loc_{2}$};
      \node[rectangle,fill=black,minimum width=.3cm,minimum height=.3cm] (q3) at (0, -2.0) {};

      \path[->]
        (q0) edge[bend left=0] node[above] {$a$} (q2)
        (q0) edge[bend left=10] node[pos=0.5,right] {$b$} (q3)
        (q1) edge[loop right] node {$a$} (q1)
        (q2) edge[bend left=10] node[right] {$a$} (q1)
        (q2) edge[bend right=10] node[left] {$b$} (q1)
        (q3) edge[bend left=10] node[above] {} (q0)
        (q3) edge[bend left=0] node[above] {} (q2)
        ;

    \end{tikzpicture}
    \caption{An AFA $\hypothesisA$.}%
    \label{figure:ALstar:AFA}
  \end{subfigure}
  \caption{An observation table for \ALstar{} and the corresponding AFA. The initial location formula of $\hypothesisA$ is $\loc_0$. The black square represents a universal branching.}%
  \label{figure:ALstar}
\end{figure}
\ALstar{} uses a 2-dimensional array $\Table$ called an \emph{observation table} to maintain the information obtained during learning.
See \cref{figure:ALstar:observation_tables} for an illustration.
For finite index sets $\PrefixSet, \SuffixSet \subseteq \Alphabet^*$,
for each pair $(\prefix, \suffix) \in \PrefixSet \times \SuffixSet$,
the observation table stores whether\ShortVersion{ $\prefix \cdot \suffix \in \targetLg$}\LongVersion{ the concatenation $\prefix \cdot \suffix$ is a member of the target language $\targetLg$}.
Since $\PrefixSet$ and $\SuffixSet$ are finite, one can fill the observation table using finite membership queries.

We write $\TableCell{\prefix}{\suffix}\in\{\top,\bot\}$ for the table entry at row $\prefix\in\PrefixSet$ and column $\suffix\in\SuffixSet$.
For each $\prefix\in\PrefixSet$, we define its \emph{row vector} $\TableRow{\prefix}\in\{\top,\bot\}^{\SuffixSet}$ by
$\TableRow{\prefix}(\suffix)=\TableCell{\prefix}{\suffix}$ for each $\suffix\in\SuffixSet$.
For $U\subseteq \PrefixSet$, we write $\mathrm{Rows}(U)=\{\TableRow{u}\mid u\in U\}$.
For two row vectors $r_1,r_2\in\{\top,\bot\}^{\SuffixSet}$, we define their conjunction and disjunction $r_1 \land r_2, r_1 \lor r_2 \in \{\top, \bot\}^{\SuffixSet}$ in a pointwise manner.
Namely, $r_1 \land r_2$ and $r_1 \lor r_2$ satisfy the following for each $\suffix \in \SuffixSet$:
$(r_1 \land r_2)(\suffix) = r_1(\suffix) \land r_2(\suffix)$ and
$(r_1 \lor r_2)(\suffix) = r_1(\suffix) \lor r_2(\suffix)$.

In \ALstar{}, one central notion is the \emph{monotone basis} of a set of row vectors.
For a set $U \subseteq \{\top,\bot\}^\SuffixSet$ of row vectors, we overload $\Boolean{{U}}$ to denote the set of row vectors obtained by evaluating positive Boolean expressions over $U$, where $\top$ and $\bot$ denote constant row vectors.

\begin{definition}[monotone basis, $\Basis$-closedness, minimality of $\Basis$]
  \label{definition:monotone_basis}
  Let $\Table$ be an observation table with finite index sets $\PrefixSet, \SuffixSet \subseteq \Alphabet^*$.
  For $R \subseteq \{\top,\bot\}^{\SuffixSet}$,
  $U \subseteq R$ is a \emph{monotone basis of $R$} if we have $R \subseteq \Boolean{U}$.
  Moreover, $U$ is a \emph{minimal monotone basis of $R$} if $R \subseteq \Boolean{U}$ holds and no strict subset $U'\subsetneq U$ satisfies $R \subseteq \Boolean{U'}$.
  $\Basis \subseteq \PrefixSet$ is a \emph{monotone basis of $\Table$} if $\mathrm{Rows}(\Basis)$ is a monotone basis of $\mathrm{Rows}(\PrefixSet)$ and for any $\basis,\basis' \in \Basis$, $\basis \neq \basis'$ implies $\TableRow{\basis} \neq \TableRow{\basis'}$.
  For $\Basis \subseteq \PrefixSet$, $\Table$ is \emph{$\Basis$-closed} if $\Basis$ is a monotone basis of $\Table$\footnote{We relax the definition of $\Basis$-closedness for consistency with \ALstarRTA{} in \cref{section:learning}.}.
  A monotone basis $\Basis$ is \emph{minimal} if any $\Basis' \subsetneq \Basis$ is not a monotone basis of $\Table$.
\end{definition}
\begin{algorithm}[tbp]
 \caption{A variant of the \ALstar{} algorithm~\cite{DBLP:conf/ijcai/AngluinEF15} for AFA learning.}%
 \label{algorithm:ALstar}
 \DontPrintSemicolon{}
 \SetKwFunction{FConstructTable}{ConstructTable}
 \SetKwFunction{FConstructA}{ConstructAFA}
 \SetKwFunction{FComputeBasis}{ComputeBasis}
 \scriptsize
 \Input{Alphabet $\Alphabet$ and access to membership and equivalence queries for the target language $\targetLg \subseteq \Alphabet^*$}
 \Output{An AFA $\hypothesisA$ satisfying $\Lg(\hypothesisA)=\targetLg$}
 $\PrefixSet \gets \{\emptyword\};\; \SuffixSet \gets \{\emptyword\};\; \Basis \gets \emptyset$\;
 \While{$\KTrue$} {\nllabel{algorithm:ALstar:main_loop:begin}
   $\PrefixSet \gets \PrefixSet \cup \{ \basis \cdot \action \mid \basis \in \Basis, \action \in \Alphabet\}$\;
   \While{the observation table is not $\Basis$-closed} {\nllabel{algorithm:ALstar:close:begin}
     \KwPush{} $\prefix \in \PrefixSet$ such that $\TableRow{\prefix} \not\in \Boolean{\mathrm{Rows}(\Basis)}$ \KwTo{} $\Basis$\;
     $\PrefixSet \gets \PrefixSet \cup \{ \basis \cdot \action \mid \basis \in \Basis, \action \in \Alphabet\}$\;
   }
   $\Basis \gets \FComputeBasis{\PrefixSet, \SuffixSet, \Basis, \Table}$ \tcp*[r]{Compute a minimal monotone basis}\nllabel{algorithm:ALstar:minimize_basis}
   $\hypothesisA \gets \FConstructA{\PrefixSet, \SuffixSet, \Basis, \Table}$\;
   \Switch{$\eqQ[\targetLg]{\hypothesisA}$} {
     \Case{$\top$} {
       \KwReturn{$\hypothesisA$}\nllabel{algorithm:ALstar:eqQ:end}
     }
     \Case{$\cex$} {\nllabel{algorithm:ALstar:eqQ:cex}
       $\SuffixSet \gets \SuffixSet \cup \suffixes(\cex)$\;\nllabel{algorithm:ALstar:add_cex}
     }
   }
 }\nllabel{algorithm:ALstar:main_loop:end}
\end{algorithm}
\cref{algorithm:ALstar} outlines a variant of the \ALstar{} algorithm, partially inspired by the \ALdstar{} algorithm~\cite{DBLP:journals/iandc/BerndtLLR22}.
We start from $\PrefixSet = \SuffixSet = \{\emptyword\}$ and gradually increase them.
In each iteration, the learner first increases the candidate basis $\Basis$ and $\PrefixSet$ to ensure the observation table is $\Basis$-closed and $\Basis \cdot \Alphabet \subseteq \PrefixSet$ holds.
The learner then reduces $\Basis$ to be minimal.
Once these conditions are satisfied, the learner constructs a hypothesis AFA $\hypothesisA$ and asks an equivalence query.
The constructed AFA is such that 
\begin{ienumeration}
  \item one location is constructed for each $\basis \in \Basis$,
  \item the initial location formula corresponds to the formula representing $\TableRow{\emptyword}$,
  \item the accepting locations correspond to $\basis \in \Basis$ such that $\TableCell{\basis}{\emptyword} = \top$, and 
  \item the successor from a location corresponding to $\basis \in \Basis$ with $\action \in \Alphabet$ is the formula representing $\TableRow{\basis \cdot \action}$.
\end{ienumeration}
\cref{figure:ALstar:AFA} shows a concrete example.
If $\Lg(\hypothesisA) = \targetLg$ holds, the algorithm terminates and returns $\hypothesisA$.
Otherwise, the teacher returns\LongVersion{ a counterexample} $\cex \in \Lg(\hypothesisA) \setdiff \targetLg$, and the learner adds all suffixes of $\cex$ to $\SuffixSet$ to refine the observation table.

\section{Alternating real-time automata}\label{section:ARTA}

We introduce \emph{alternating real-time automata (ARTAs)} by combining notions of RTAs and AFAs.
Informally, an ARTA extends an RTA by allowing transition targets to be location formulas in $\Boolean{\Loc}$, which enables existential and universal branching over timed transitions.

\begin{definition}[alternating real-time automata]
  An \emph{alternating real-time automaton (ARTA)} is a 5-tuple
  $\TA=(\Loc,\Alphabet,\InitLoc,\Final, \Edge)$, where
  $\Loc$ is a finite set of locations, 
  $\Alphabet$ is a finite alphabet,
  $\InitLoc \in \Boolean{\Loc}$ is the initial location formula,
  $\Final \subseteq \Loc$ is the set of accepting locations, and
  $\Edge \subseteq \Loc \times \Alphabet \times \Intervals \times \Boolean{\Loc}$ is a finite transition relation satisfying the following condition on determinism:
  for any $(\loc,\action,I_1,\formula_1), (\loc,\action,I_2,\formula_2)\in\Edge$,
  $(\loc,\action,I_1,\formula_1) \neq (\loc,\action,I_2,\formula_2)$ implies $I_1 \cap I_2 = \emptyset$.
\end{definition}

Thanks to the condition above on the determinism of the transition relation $\Edge$,
for an ARTA $\TA = (\Loc,\Alphabet,\InitLoc,\Final, \Edge)$, one can define a transition function $\transition\colon \Loc \times \Alphabet \times \Rnn \to \Boolean{\Loc}$ as follows:
\begin{align*}
  \transition(\loc,(\action,\delay)) &= \phi
  &&\text{if there exists }(\loc,\action,I,\phi)\in\Edge\text{ with }\delay\in I,\\
  \transition(\loc,(\action,\delay)) &= \bot
  &&\text{if no such edge exists.}
\end{align*}

We generalize $\transition$ to $\transition^* \colon \Boolean{\Loc}\times(\Alphabet\times\Rnn)^*\to \Boolean{\Loc}$ as follows:
\begin{gather*}
  \transition^*(\formula, \emptyword) = \formula \quad
  \transition^*(\top, \word) = \top \quad
  \transition^*(\bot, \word) = \bot \quad
  \transition^*(\loc, (\action, \delay) \cdot \word) = \transition^*(\transition(\loc, (\action, \delay)), \word) \\
  \transition^*(\formula \lor \formula', \word) = \transition^*(\formula, \word) \lor \transition^*(\formula', \word)\qquad
  \transition^*(\formula \land \formula', \word) = \transition^*(\formula, \word) \land \transition^*(\formula', \word).
\end{gather*}

For a location formula $\formula$ of an ARTA $\TA = (\Loc,\Alphabet,\InitLoc,\Final, \Edge)$,
the evaluation function $E_{\TA} \colon \Boolean{\Loc} \to \{\top,\bot\}$ is defined as follows:
\begin{gather*}
  E_{\TA}(\top) = \top \qquad
  E_{\TA}(\bot) = \bot \qquad
  E_{\TA}(\loc) = \top \text{ if $\loc \in \Final$} \qquad
  E_{\TA}(\loc) = \bot \text{ if $\loc \notin \Final$} \\
  E_{\TA}(\formula \land \formula') = E_{\TA}(\formula) \land E_{\TA}(\formula') \qquad
  E_{\TA}(\formula \lor \formula') = E_{\TA}(\formula) \lor E_{\TA}(\formula')
\end{gather*}

An ARTA $\TA = (\Loc,\Alphabet,\InitLoc,\Final, \Edge)$ \emph{accepts} a timed word $\word \in \TimedWords$ if we have $E_{\TA}(\transition^*(\InitLoc, \word)) = \top$.
The timed language $\Lg(\TA)$ recognized by an ARTA $\TA$ is the set of timed words accepted by $\TA$.

\begin{example}
 \label{example:ARTA}
 \cref{figure:running_arta} depicts an ARTA $\TA$ over $\Alphabet = \{a,b\}$.
 $\TA$ accepts $(a, 1),(b, 4)$ because we have $E_{\TA}(\transition^*(\loc_0, (a, 1), (b, 4))) = E_{\TA}(\transition^*(\loc_1, (b, 4))) = E_{\TA}(\loc_2) = \top$.
 $\TA$ also accepts $(b, 2.7), (b, 3.5)$ because we have $\transition(\loc_0, (b, 2.7)) = \loc_0 \land \loc_1$, $\transition(\loc_0, (b, 3.5)) = (\loc_0 \land \loc_1) \lor \loc_2$, and $\transition(\loc_1, (b, 3.5)) = \loc_2$, and thus, $E_{\TA}(\transition^*(\loc_0, (b, 2.7), (b, 3.5))) = E_{\TA}(\transition^*(\loc_0 \land \loc_1, (b, 3.5))) = E_{\TA}(((\loc_0 \land \loc_1) \lor \loc_2) \land \loc_2) = \top$ holds.
 In contrast, $\TA$ does not accept $(b, 2.7), (a, 1)$ because we have $E_{\TA}(\transition^*(\loc_0, (b, 2.7), (a, 1))) = E_{\TA}(\transition^*(\loc_0 \land \loc_1, (a, 1))) = E_{\TA}(\loc_1 \land \loc_0) = \bot$.
\end{example}

As in the classical setting, where AFAs accept exactly the regular languages~\cite{DBLP:journals/jacm/ChandraKS81}, alternation does not increase the expressive power of RTAs, \ie{} ARTAs recognize real-time languages.

\newcommand{\expressivePowerStatement}{%
  DRTAs and ARTAs have the same expressive power.
  Moreover, for any ARTA with $n$ locations,
  there is a DRTA recognizing the same timed language with at most $2^{2^{n}}$ locations.
}
\begin{theorem}[expressive power of ARTAs]%
 \label{theorem:expressive_power}
 \expressivePowerStatement{}
 \qed{}
\end{theorem}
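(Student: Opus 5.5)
One inclusion is immediate. Every DRTA can be read as an ARTA with the same locations, $\InitLoc = \initLoc$, and each edge $(\loc,\action,I,\loc')$ kept as an edge whose target is the atomic formula $\loc'$. The DRTA determinism condition is exactly the ARTA condition on $\Edge$. On atomic formulas, $\transition^*$ just follows the unique run, or yields $\bot$ if no edge applies, so the two automata accept the same words.

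For the converse I will give a subset-of-subsets construction, in the spirit of the classical AFA-to-DFA translation. Let $\TA=(\Loc,\Alphabet,\InitLoc,\Final,\Edge)$ be an ARTA with $|\Loc|=n$.

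\begin{itemize}
\item \emph{Normal form.} Every formula in $\Boolean{\Loc}$ is semantically equivalent to a monotone DNF, that is, an antichain of subsets of $\Loc$. Since $E_{\TA}$ and $\transition^*$ commute with $\land$ and $\lor$ and fix $\top$ and $\bot$, two formulas that are equivalent as Boolean functions stay equivalent after applying $\transition^*(\cdot,\word)$ and then $E_{\TA}$. This is proved by a straightforward induction on $\word$, using that substitution preserves equivalence.
\item \emph{States.} The DRTA takes as locations the equivalence classes of location formulas, i.e.\ monotone Boolean functions over $\Loc$. There are at most $2^{2^n}$ of them, counting $\top$ and $\bot$ (any Boolean function on $n$ variables gives this bound).
\item \emph{Initial and accepting locations.} The initial location is $[\InitLoc]$. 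A class $[\formula]$ is accepting iff $E_{\TA}(\formula)=\top$, which is well defined on classes.
\end{itemize}

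For the timing structure, let $\mathcal{R}$ be the finite partition of $\Rnn$ induced by all endpoints of intervals occurring in $\Edge$. Concretely, take the points and open intervals between consecutive endpoints, together with a final unbounded interval; the region partition of \cref{definition:regions_of_delays} with $K$ the maximal constant also works. Each $\transition(\loc,(\action,\cdot))$ is constant on each $J\in\mathcal{R}$: for every edge interval $I$, each $J$ is either contained in $I$ or disjoint from it. For each class $[\formula]$, letter $\action$ and $J\in\mathcal{R}$, add the edge $([\formula],\action,J,[\formula[\loc\mapsto\transition(\loc,(\action,\delay))]])$, where $\delay\in J$ is arbitrary. Every $J$ has integer or $\infty$ endpoints, so $J\in\Intervals$. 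The cells of $\mathcal{R}$ are pairwise disjoint, so the result is deterministic. Substitution respects equivalence, so the edges are well defined on classes.

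Correctness is an induction on $\word$. I claim that the unique run of the DRTA from $[\formula]$ over $\word$ ends in $[\transition^*(\formula,\word)]$. The step case follows from $\transition^*(\formula,(\action,\delay)\cdot\word)=\transition^*(\formula[\loc\mapsto\transition(\loc,(\action,\delay))],\word)$, which itself is a small induction on $\formula$. Taking $\formula=\InitLoc$ and applying the acceptance condition gives $\Lg(\text{DRTA})=\Lg(\TA)$.

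The main obstacle is only bookkeeping: checking that working modulo semantic equivalence is compatible with $\transition^*$ and $E_{\TA}$, and that the partition makes the per-letter substitution delay-independent within each cell. Neither step is deep. Note also that $\bot$ is a sink class, so partial DRTAs are not an issue.
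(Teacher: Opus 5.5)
Your proof is correct and is essentially the paper's argument with the cited step unfolded. The paper abstracts delays to regions and builds an AFA $\A_{\mathrm{AFA}}$ over the finite alphabet $\Alphabet\times\mathrm{Reg}$ with $\word\in\Lg(\A)\iff\mathrm{reg}(\word)\in\Lg(\A_{\mathrm{AFA}})$. It then invokes the classical Chandra--Kozen--Stockmeyer determinization to get a DFA with at most $2^{2^n}$ states, and reads each $(\action,\rho)$-transition back as a guard $\rho$. You instead carry out that Boolean-function-state construction directly on the ARTA, which makes the argument self-contained.

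One detail in your concrete description of $\mathcal{R}$ needs fixing: include $0$ as a cut point (or simply use the regions). Otherwise delays below the smallest endpoint get no outgoing edge, and the class $[\top]$, which must keep accepting, would wrongly reject.
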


This double-exponential blow-up in determinization is unavoidable in general.

\newcommand{\succinctnessStatement}{%
  There exist a constant $c>0$ and a family of ARTAs $(\TA_k)_{k \geq 1}$ with at most $k$ locations such that for any sufficiently large $k$,
  any DRTA $\TA_\mathcal{D}$ satisfying $\Lg(\TA_\mathcal{D}) = \Lg(\TA_k)$ has at least $2^{2^{c k}}$ locations.
}
\begin{theorem}[succinctness]%
 \label{theorem:arta_drta_succinctness}
 \succinctnessStatement{}
 \qed{}
\end{theorem}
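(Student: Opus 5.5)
We reduce the statement to the classical double-exponential lower bound for AFAs versus DFAs due to Leiss~\cite{DBLP:journals/tcs/Leiss81,DBLP:journals/tcs/Leiss85a}. That result gives a constant $c'>0$ and a family of AFAs $(\mathcal{B}_k)_{k\ge 1}$ over a fixed finite alphabet $\Alphabet$, with $\mathcal{B}_k$ having at most $k$ locations, such that the minimal DFA for $\Lg(\mathcal{B}_k)$ has at least $2^{2^{c'k}}$ states for large $k$. For instance, one can take the reversal of a language whose minimal DFA needs $2^k$ states via an $O(k)$-state DFA; the reversal of a DFA with $m$ states is accepted by an AFA with $m$ states, and reversing a language with $2^k$ Nerode classes can force $2^{2^k}$ classes. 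From each $\mathcal{B}_k$ we build an ARTA $\TA_k$ with the same locations, initial formula and accepting set. For every $\loc\in\Loc$ and $\action\in\Alphabet$ we add the single edge $(\loc,\action,[0,\infty),\transition_{\mathcal{B}_k}(\loc,\action))$. This edge relation satisfies the determinism condition trivially, since there is only one edge per pair $(\loc,\action)$.

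The next step is to check by induction on the length of timed words that $\transition^*_{\TA_k}(\formula,\word)=\transition^*_{\mathcal{B}_k}(\formula,\mathrm{untime}(\word))$, where $\mathrm{untime}$ erases the delays. It follows that $\Lg(\TA_k)=\{\word\mid \mathrm{untime}(\word)\in\Lg(\mathcal{B}_k)\}$.

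We then bound any DRTA $\TA_\mathcal{D}$ recognizing $\Lg(\TA_k)$ from below. Two arguments are available. First, a DRTA with $N$ locations induces a DFA with at most $N+1$ states for $\Lg(\mathcal{B}_k)$: at each location, for each letter $\action$, follow the unique edge containing delay $0$, or go to a sink if there is none. The reason is that $\word_u=(\action_1,0)\cdots(\action_n,0)\in\Lg(\TA_k)$ iff $u=\action_1\cdots\action_n\in\Lg(\mathcal{B}_k)$, and the run of the DRTA on $\word_u$ is unique. Second, and more robustly, one can use the Nerode relation: for untimed words $u\not\equiv v$ in $\Lg(\mathcal{B}_k)$, the timed words $\word_u\not\nerode{\Lg(\TA_k)}\word_v$, witnessed by the suffix with zero delays. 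Hence $\nerode{\Lg(\TA_k)}$ has at least as many classes as the Nerode relation of $\Lg(\mathcal{B}_k)$, which is at least $2^{2^{c'k}}$. By the remark after \cref{theorem:myhill_nerode}, every DRTA has at least as many locations as there are classes, so we may take $c=c'$, up to an additive sink-state adjustment absorbed for large $k$.

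The main obstacle is making the classical lower bound exact in the form used here: at most $k$ locations and at least $2^{2^{ck}}$ Nerode classes for a fixed alphabet. If the cited family uses $O(k)$ locations instead, I will reindex it, replacing $k$ by $\lfloor k/d\rfloor$ and shrinking $c$. Some care is also needed because the paper's AFA has a single initial formula and no negation. Leiss's construction uses negation, so I would use the reversal-based family, which needs only positive formulas. In that family the AFA for the reverse of a DFA's language has as locations the states of the DFA and uses disjunctions over predecessors, or the dual encoding. I would verify directly that its minimal DFA has $2^{2^{\Omega(k)}}$ states, using a standard witness language such as $L_k=\{u\,\#\,v \mid \dots\}$ whose reversal is needed.
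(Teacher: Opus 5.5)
Your main line is the paper's proof. You embed each $\mathcal{B}_k$ as an ARTA with one $[0,\infty)$-guarded edge per $(\loc,\action)$, restrict to delay-$0$ timed words, and turn any DRTA into a DFA for $\Lg(\mathcal{B}_k)$ by following the unique edge whose guard contains $0$, plus a sink. One small correction: this only yields $|\Loc_{\mathcal{D}}|\geq 2^{2^{c'k}}-1$, so you must take some $c<c'$ (as the paper does), not $c=c'$. Your Nerode alternative does not avoid this: DRTAs here may be partial, so the dead class need not occupy a location.

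The real problem is the fallback you commit to in your last paragraph for avoiding negation. Consider an automaton for the reversal of a DFA's language whose locations are the DFA states and whose successors are $\bigvee_{p:\delta(p,\action)=q}p$. It is just an NFA, and the conjunctive dual is a universal automaton. The subset construction determinizes either one into at most $2^m$ states. So with $m=O(k)$ locations the minimal DFA has at most $2^{O(k)}$ states, and reaching $2^{2^k}$ would require $2^k$ locations. The double exponential comes from encoding a $2^k$-state DFA for the reversed language in $k$ bits.

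A positive (negation-free) way to do this is a dual-rail encoding:
\begin{itemize}
\item Take a DFA with state set $\{0,1\}^k$ and initial state $s_0$ for a language $L'$ whose reversal needs $2^{2^k}$ DFA states; this is the classical worst case for reversal.
\item Build an AFA for $(L')^R$ with $2k$ locations $x_i,\bar x_i$, meaning that bit $i$ of the DFA state reached on the reversed prefix is $1$, resp.\ $0$.
\item The successor of $x_i$ on $\action$ is the DNF over these literals describing $\{s\mid\delta(s,\action)_i=1\}$, and dually for $\bar x_i$.
\item $x_i$ (resp.\ $\bar x_i$) is accepting iff bit $i$ of $s_0$ is $1$ (resp.\ $0$), and the initial formula is the DNF of the DFA's accepting set.
\end{itemize}
Reindexing $k\mapsto\lfloor k/2\rfloor$ then gives the theorem for any $c<1/2$. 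Alternatively, rely on the cited results as the paper does; it takes Leiss's family to consist of AFAs in the paper's positive sense.
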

\section{Active learning algorithm for ARTAs}\label{section:learning}

We present our algorithm \ALstarRTA{} for the active learning of ARTAs,
extending \ALstar{}\LongVersion{ (\cref{section:alstar})} with ideas used in learning real-time automata~\cite{DBLP:journals/chinaf/AnWZZZ21}\LongVersion{ and symbolic automata~\cite{DBLP:conf/tacas/DrewsD17}} to handle infinite domain inputs, \ie{} $\Alphabet \times \Rnn$ rather than $\Alphabet$.
As in \ALstar{}, the learner interacts with a teacher through membership and equivalence queries.
In \ALstarRTA{}, membership queries are asked on timed words and equivalence queries are asked on ARTAs.

\begin{algorithm}[t]
  \SetKwFunction{FConstructARTA}{ConstructARTA}
  \SetKwFunction{FComputeBasis}{ComputeBasis}
  \caption{\ALstarRTA{} for active learning of ARTAs}%
  \label{algorithm:ALstarRTA}
  \DontPrintSemicolon{}
  \footnotesize
  \Input{Alphabet $\Alphabet$ and access to membership and equivalence queries for the target real-time language $\targetLg \subseteq \TimedWords$}
  \Output{An ARTA $\hypothesisA$ satisfying $\Lg(\hypothesisA)=\targetLg$}
  $\PrefixSet \gets \{\emptyword\}; \; \SuffixSet \gets \{\emptyword\}\; ; \Basis \gets \emptyset$\;\nllabel{algorithm:ALstarRTA:init}
  \While{$\KTrue$}{\nllabel{algorithm:ALstarRTA:main_loop:begin}
   \If{the observation table is not $\Basis$-closed} {\nllabel{algorithm:ALstarRTA:inner_begin}
    \KwPush{} $\prefix \in \PrefixSet$ such that $\TableRow{\prefix} \not\in \Boolean{\mathrm{Rows}(\Basis)}$ \KwTo{} $\Basis$\;\nllabel{algorithm:ALstarRTA:close}
   } \ElseIf{$\exists \basis \in \Basis, \sigma \in \letters{\PrefixSet \cup \SuffixSet}$ such that $\basis \cdot \sigma \notin \PrefixSet$} {
      \KwPush{} $\basis \cdot \sigma$ \KwTo{} $\PrefixSet$ \tcp*[r]{Ensure evidence-closedness}\nllabel{algorithm:ALstarRTA:evidence_closed}\nllabel{algorithm:ALstarRTA:inner_end}
    } \Else {
      $\Basis' \gets \FComputeBasis{\PrefixSet, \SuffixSet, \Basis, \Table}$ \tcp*[r]{Minimize the monotone basis}\nllabel{algorithm:ALstarRTA:minimize_basis}
      \lIf {$|\Basis'| < |\Basis|$} {
        $\Basis \gets \Basis'$
      } \Else {
        \LongVersion{\tcp{Use the construction in \cref{section:arta_construction}}}
        $\hypothesisA \gets \FConstructARTA{\PrefixSet, \SuffixSet, \Basis, \Table}$\;\nllabel{algorithm:ALstarRTA:construct_hypothesis}
        \Switch{$\eqQ[\targetLg]{\hypothesisA}$} {\nllabel{algorithm:ALstarRTA:eq_query}
          \Case{$\top$} {
            \KwReturn{$\hypothesisA$}\nllabel{algorithm:ALstarRTA:eqQ:end}
          }
          \Case{$\cex$} {\nllabel{algorithm:ALstarRTA:eqQ:cex}
            $\normalized{\cex} \gets \normalize(\cex)$ \tcp*[r]{Normalize the counterexample}\nllabel{algorithm:ALstarRTA:normalize_cex}
            $\SuffixSet \gets \SuffixSet \cup \suffixes(\normalized{\cex})$\;\nllabel{algorithm:ALstarRTA:add_cex}
         }
        }
      }
    }
  }\nllabel{algorithm:ALstarRTA:main_loop:end}
\end{algorithm}
\cref{algorithm:ALstarRTA} outlines \ALstarRTA{}.
Compared to \ALstar{}, the key new ingredients are
to lift the observation table from words to timed words, and
to impose evidence-closedness on observation tables to handle real-valued timestamps.

\subsection{Observation tables for \ALstarRTA{}}

As in \ALstar{}, we use an observation table to store the answers to membership queries for concatenations of a row element and a column element.

\begin{definition}[observation tables for \ALstarRTA{}]%
  \label{definition:timed_observation_table}
  Let $\targetLg \subseteq \TimedWords$ be the target real-time language.
  An observation table for \ALstarRTA{} is a 2-dimensional array $\Table \colon \PrefixSet \times \SuffixSet \to \{\top, \bot\}$ with
  finite index sets $\PrefixSet, \SuffixSet \subseteq \TimedWords$ such that
  \begin{ienumeration}
    \item $\TableCell{\prefix}{\suffix}=\top$ if and only if $\prefix\cdot \suffix\in \targetLg$,
    \item $\PrefixSet$ is prefix-closed, and
    \item $\SuffixSet$ is suffix-closed.
  \end{ienumeration}
\end{definition}

For each $\prefix\in\PrefixSet$, we write $\TableRow{\prefix}\in\{\top,\bot\}^{\SuffixSet}$ for its row vector, defined by $\TableRow{\prefix}(\suffix)=\TableCell{\prefix}{\suffix}$.
For $U\subseteq \PrefixSet$, we write $\mathrm{Rows}(U)=\{\TableRow{u}\mid u\in U\}$.
Row vectors are combined by pointwise conjunction and disjunction as in \cref{section:alstar}.

\begin{definition}[cohesiveness]
  \label{definition:arta_table_properties}
  Let $\Table$ be an observation table for \ALstarRTA{} with index sets $\PrefixSet, \SuffixSet \subseteq \TimedWords$.
  For $\Basis \subseteq \PrefixSet$, $\Table$ is \emph{$\Basis$-closed} if we have $\mathrm{Rows}(\PrefixSet)\subseteq \Boolean{\mathrm{Rows}(\Basis)}$.
  For $\Basis \subseteq \PrefixSet$, $\Table$ is \emph{evidence-closed} if $\basis\cdot \sigma \in \PrefixSet$ holds for any $\basis\in\Basis$ and $\sigma\in\letters{\PrefixSet \cup \SuffixSet}$.
  $\Table$ is \emph{floor-distinct} if $(\action,\delay_1),(\action,\delay_2)\in\PhiTable$ and $\delay_1\neq \delay_2$ and both $\delay_1, \delay_2 \notin \setN$ implies $\lfloor \delay_1\rfloor\neq \lfloor \delay_2\rfloor$.
  For $\Basis \subseteq \PrefixSet$, $\Table$ is \emph{cohesive} if it is $\Basis$-closed, evidence-closed, %
  and floor-distinct.
\end{definition}
Intuitively, $\Basis$-closedness enables \ALstar{}-style representability via a monotone basis and evidence-closedness ensures that we have a successor row for any $\basis \in \Basis$ and letter with delay in $\PrefixSet \cup \SuffixSet$.
We require floor-distinctness to ensure that the inferred guard intervals are well-defined in \ALstarRTA{}.

\subsection{ARTA construction from a cohesive observation table}\label{section:arta_construction}

Once we obtain a cohesive observation table  $\Table$, one can construct a hypothesis ARTA $\hypothesisA$. %
The construction is two-fold: first, we construct an evidence AFA, where each transition is labeled with a letter with a delay, and then we generalize concrete delays on transitions to obtain an ARTA.
The construction of an evidence AFA is essentially the same as the AFA construction in \ALstar{}.

 Let $\Table$ be an observation table with index sets $\PrefixSet, \SuffixSet \subseteq \TimedWords$ and let $\Basis\subseteq \PrefixSet$.
 Let $\RowsP = \mathrm{Rows}(\Basis)\subseteq \{\top,\bot\}^{\SuffixSet}$.
 For each $\suffix\in\SuffixSet$, we define a basis formula
 $M_{\RowsP}(\suffix) = \bigwedge_{\substack{r\in\RowsP, r(\suffix)=\top}} r$,
 and for each row vector $r\in \{\top,\bot\}^{\SuffixSet}$ we define its decomposition
 $b_{\RowsP}(r) = \bigvee_{\substack{\suffix\in\SuffixSet, r(\suffix)=\top}} M_{\RowsP}(\suffix)$,
 where empty conjunctions and disjunctions are interpreted as $\top$ and $\bot$, respectively.
\begin{definition}[evidence AFA]
 \label{definition:evidence_afa}
 Assume the observation table $\Table$ with index sets $\PrefixSet, \SuffixSet \subseteq \TimedWords$ and monotone basis $\Basis\subseteq \PrefixSet$ is cohesive.
 The evidence AFA is the AFA $\evidenceA = (\RowsP, \PhiTable, b_{\RowsP}(\TableRow{\emptyword}), \Final_{\evidenceA}, \transition_{\evidenceA})$, where
 the accepting locations are $\Final_{\evidenceA} = \{r\in\RowsP\mid r(\emptyword)=\top\}$ and
 the transition function $\transition_{\evidenceA}\colon \RowsP\times \PhiTable \to \Boolean{\RowsP}$ is defined by
 $\transition_{\evidenceA}(r,\sigma)=b_{\RowsP}(\TableRow{\basis\cdot \sigma})$ for $\basis\in\Basis$ with $\TableRow{\basis}=r$.
 Such $\basis \in \Basis$ is unique by the definition of monotone bases.
\end{definition}

The evidence AFA $\evidenceA$ is defined over the \emph{finite} set $\PhiTable \subseteq \Alphabet \times \Rnn$ of letters with delays appearing in\LongVersion{ the observation table} $\Table$.
The transition function $\transition_{\evidenceA}$ is well-defined by evidence-closedness.
Following~\cite{DBLP:journals/chinaf/AnWZZZ21}, we obtain an ARTA over $\Alphabet$ by inferring guard intervals from these finitely many delays using the \emph{partition function}.

\begin{definition}%
 [partition function~\cite{DBLP:journals/chinaf/AnWZZZ21}]
 \label{definition:partition_function}
 Let $l=\delay_1,\dots,\delay_n$ be a\LongVersion{ finite} sequence of delays satisfying
\begin{ienumeration}
  \item $i<j$ implies $\delay_i<\delay_j$ and
  \item $\lfloor \delay_i\rfloor\neq \lfloor \delay_j\rfloor$ for all $i\neq j$ with $\delay_i, \delay_j \not\in\setN$.
\end{ienumeration}
We let $\delay_0 = -0.5$.
We define the partition function $\partition(l)=I_1,\dots,I_n$, where intervals $I_1,\dots,I_n\in\Intervals$ are given as follows. Note that we have $\partition(\emptyword) = \emptyword$.
\[
  I_i =
  \begin{cases}
    (\delay_{i-1},\delay_{i}] & \text{if $i<n$ and $\delay_{i-1},\delay_{i}\in\setN$}\\
    [\lceil \delay_{i-1}\rceil,\delay_{i}] & \text{if $i<n$, $\delay_{i-1}\notin\setN$, and $\delay_{i}\in\setN$}\\
    (\delay_{i-1},\lceil \delay_{i}\rceil) & \text{if $i<n$, $\delay_{i-1}\in\setN$, and $\delay_{i}\notin\setN$}\\
    [\lceil \delay_{i-1}\rceil,\lceil \delay_{i}\rceil) & \text{if $i<n$ and $\delay_{i-1},\delay_{i}\notin\setN$}\\
    (\delay_{i-1},\infty) & \text{if $i=n$ and $\delay_{i-1}\in\setN$}\\
    [\lceil \delay_{i-1}\rceil,\infty) & \text{if $i=n$ and $\delay_{i-1}\notin\setN$}\\
  \end{cases}
\]
\end{definition}
\begin{algorithm}[t]
\caption{ARTA construction from an evidence AFA}
\label{algorithm:arta_construction}
\DontPrintSemicolon{}
\ShortVersion{\scriptsize}\LongVersion{\small}
\Input{Evidence AFA $\evidenceA=(\RowsP,\PhiTable,\InitLoc,\Final,\transition_{\evidenceA})$}
\Output{Hypothesis ARTA}
\BlankLine
$\Edge_{\hypothesisA}\gets \emptyset$\;
\For{$r\in\RowsP$ \KwAnd $\action\in\Alphabet$}{
    \KwLet{} $l=\delay_1,\dots,\delay_n$ \KwBe{} the increasing sequence of all delays $\delay$ such that $(\action,\delay)\in\PhiTable$\;
    \For{$i\gets 2$ \KwTo $n$}{
      \If {$\transition_{\evidenceA}(r,(\action,\delay_{i-1})) = \transition_{\evidenceA}(r,(\action,\delay_{i}))$} {
        \KwPop{} $\delay_{i-1}$ \KwFrom{} $l$\;
      }
    }
    $\delay'_1,\dots,\delay'_{n'} \gets l$;\,
    $I_1, \dots, I_{n'} \gets \partition(l)$\;
    \For{$i \in \{1, 2, \dots, n'\}$}{
      $\Edge_{\hypothesisA}\gets \Edge_{\hypothesisA}\cup\{(r,\action,I_i,\transition_{\evidenceA}(r,(\action,\delay'_i)))\}$\;
    }
}
\KwReturn{} $(\RowsP,\Alphabet,\InitLoc,\Final,\Edge_{\hypothesisA})$\;
\end{algorithm}
\cref{algorithm:arta_construction} shows the construction of an ARTA from an evidence AFA.
The ARTA returned by \cref{algorithm:arta_construction} shares locations, initial formula, and accepting locations with $\evidenceA$, while only the transition guards are generalized from finitely many observed delays to intervals via $\partition$.

\begin{example}
  One can obtain the ARTA in \cref{figure:running_arta} from the evidence AFA in \cref{figure:running_evidence_AFA} by using \cref{algorithm:arta_construction} and simplifying some transitions.
  Concretely, the partition function $\partition$ generalizes the delays 2, 2.5, 7, and 7.5 on the transitions from $\loc_0$ with $b$ into $[0, 2]$, $(2, 3)$, $[3, 7]$, and $(7, \infty)$.
\end{example}

The following theorem shows that this construction is faithful to $\Table$.
This is essential for the termination of \ALstarRTA{}.

\newcommand{\faithfulnessStatement}{%
 Assume $\Table$ with index sets $\PrefixSet, \SuffixSet \subseteq \TimedWords$ and monotone basis $\Basis\subseteq \PrefixSet$ is cohesive.
 Let $\hypothesisA$ be the ARTA constructed from $\Table$.
 For any $\prefix \in \PrefixSet$ and $\suffix \in \SuffixSet$, we have
 $\TableCell{\prefix}{\suffix} = \top \iff \prefix \cdot \suffix \in \Lg(\hypothesisA)$.
}
\begin{theorem}
 [faithfulness]%
 \label{theorem:faithfulness}
 \faithfulnessStatement{}
 \qed{}
\end{theorem}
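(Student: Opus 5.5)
The plan is to reduce faithfulness to a statement about the evidence AFA $\evidenceA$, and then to prove that statement by induction on suffix length using the monotone-basis property.

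\emph{Step 1 (the ARTA agrees with $\evidenceA$ on observed letters).} I would first show that for every $r\in\RowsP$ and every $(\action,\delay)\in\PhiTable$, the transition of $\hypothesisA$ satisfies $\transition(r,(\action,\delay))=\transition_{\evidenceA}(r,(\action,\delay))$.
\cref{algorithm:arta_construction} keeps a delay $\delay'_i$ only as a representative of a maximal run of consecutive observed delays with the same target formula, and it keeps the last delay of each run.
So it suffices to check two facts about the intervals $\partition(l)$:
\begin{ienumeration}
  \item they are pairwise disjoint and each $I_i$ contains $\delay'_i$;
  \item every removed delay lying strictly between $\delay'_{i-1}$ and $\delay'_i$ falls into $I_i$.
\end{ienumeration}
This is a case analysis on integrality in \cref{definition:partition_function}.
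Floor-distinctness is used exactly here: it guarantees that $\lceil\delay'_{i-1}\rceil\le \delay$ for every removed non-integer $\delay>\delay'_{i-1}$, and that $\lceil \delay'_i\rceil>\delay'_i$ does not swallow another observed delay.
Consequently, for every word $x\in\letters{\PrefixSet\cup\SuffixSet}^*$ we get $\transition^*_{\hypothesisA}=\transition^*_{\evidenceA}$, so it suffices to prove faithfulness for $\evidenceA$.

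\emph{Step 2 (monotone decomposition lemma).} For $r\in\Boolean{\RowsP}$, evaluating the formula $b_{\RowsP}(r)$ pointwise on row vectors yields $r$ itself.
This is the standard \ALstar{} fact: $M_{\RowsP}(\suffix)$ is the least basis conjunction that is true at $\suffix$, and $r$ is a positive combination of basis rows.
More usefully, I would state it per coordinate: if each location $r'=\TableRow{\basis'}$ is assigned the truth value $r'(\suffix)$, then $b_{\RowsP}(r)$ evaluates to $r(\suffix)$.
By $\Basis$-closedness this applies to every $\TableRow{\prefix}$ with $\prefix\in\PrefixSet$.

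\emph{Step 3 (suffix induction).} I would prove that $E(\transition^*_{\evidenceA}(b_{\RowsP}(\TableRow{\prefix}),\suffix))=\TableCell{\prefix}{\suffix}$ for all $\prefix\in\PrefixSet$ and $\suffix\in\SuffixSet$, by induction on $|\suffix|$.
For the base case $\suffix=\emptyword$, the value of each location $r'$ is $r'(\emptyword)$, so Step 2 gives $\TableRow{\prefix}(\emptyword)$.
For the inductive step, write $\suffix=\sigma\suffix'$. Then:
\begin{itemize}
  \item $\suffix'\in\SuffixSet$ because $\SuffixSet$ is suffix-closed.
  \item $\sigma\in\PhiTable$, so $\basis'\sigma\in\PrefixSet$ by evidence-closedness.
  \item Reading $\sigma$ replaces each location $r'=\TableRow{\basis'}$ by $b_{\RowsP}(\TableRow{\basis'\sigma})$.
  \item By the induction hypothesis, that replacement then evaluates on $\suffix'$ to $\TableCell{\basis'\sigma}{\suffix'}=r'(\sigma\suffix')$.
\end{itemize}
Step 2 at coordinate $\sigma\suffix'$ then yields $\TableCell{\prefix}{\suffix}$.

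\emph{Step 4 (prefixes), the main obstacle.} It remains to pass from the initial formula $b_{\RowsP}(\TableRow{\emptyword})$ along $\prefix$ to $b_{\RowsP}(\TableRow{\prefix})$ semantically on $\SuffixSet$.
I would argue by induction on $|\prefix|$ using prefix-closedness, writing $\prefix=\prefix_0\sigma$.
After reading $\sigma$, the formula obtained from $b_{\RowsP}(\TableRow{\prefix_0})$ evaluates at $\suffix$ to that formula with each $r'$ assigned the value $\TableCell{\basis'\sigma}{\suffix}$.
I then need this to equal $\TableCell{\prefix_0\sigma}{\suffix}$.
This would follow from Step 2 at coordinate $\sigma\suffix$, but $\sigma\suffix$ need not lie in $\SuffixSet$.
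I expect this to be the delicate point. First I would try to exploit the fact that prefixes in $\PrefixSet$ arise only as $\basis\cdot\sigma$ with $\basis\in\Basis$, or from counterexample-driven additions. If $\prefix_0\in\Basis$, the claim is immediate, since the decomposition of a basis row is that row itself up to equivalence on $\SuffixSet$.
For general $\prefix_0$, I would check whether the construction of $\PrefixSet$ in \cref{algorithm:ALstarRTA} guarantees this, or else whether an extra consistency condition of the form ``$\TableRow{\prefix_0}=\bigvee\bigwedge\TableRow{\basis'}$ implies $\TableRow{\prefix_0\sigma}=\bigvee\bigwedge\TableRow{\basis'\sigma}$'' is implicitly ensured.
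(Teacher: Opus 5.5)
\textbf{Verdict: genuine gap in Step 4, and it cannot be filled, because the statement as written is false for nonempty $\prefix$.}

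Your Steps 1--3 are sound. They correspond to the paper's auxiliary lemmas (\cref{lemma:observed_transition_preservation,lemma:basis_row_correctness,lemma:decomposition_correctness}), and your Step 1 is more explicit than the paper about where floor-distinctness is used. Step 4, which you leave open, cannot be closed from cohesiveness alone.

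\textbf{Counterexample.} Take $\Alphabet=\{a\}$, $\targetLg=\{(a,\delay)\mid\delay\in\Rnn\}$, $\PrefixSet=\{\emptyword,(a,0)\}$, $\SuffixSet=\{\emptyword\}$ and $\Basis=\{\emptyword\}$. Then $\TableRow{\emptyword}=(\bot)$, and $\TableRow{(a,0)}=(\top)$ is a constant row, so the table is $\Basis$-closed. It is evidence-closed because $\emptyword\cdot(a,0)\in\PrefixSet$, and it is trivially floor-distinct. However, $b_{\RowsP}(\TableRow{\emptyword})$ is the empty disjunction $\bot$. Hence $\Lg(\hypothesisA)=\emptyset$, although $\TableCell{(a,0)}{\emptyword}=\top$. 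The same happens with the minimal basis $\Basis=\emptyset$.

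\textbf{Why your $\prefix_0\in\Basis$ remark fails.} Here $\prefix_0=\emptyword\in\Basis$, so that case is not immediate. The formula $b_{\RowsP}(\TableRow{\basis})$ agrees with the atom $\TableRow{\basis}$ only under assignments $\rho\mapsto\rho(\suffix')$ with $\suffix'\in\SuffixSet$. Reading $\sigma\cdot\suffix$ induces the assignment $\TableRow{\basis'}\mapsto\TableCell{\basis'\sigma}{\suffix}$, which has that form only if $\sigma\cdot\suffix\in\SuffixSet$. For the same reason, your induction hypothesis lets you replace $\transition^*(b_{\RowsP}(\TableRow{\emptyword}),\prefix_0)$ by $b_{\RowsP}(\TableRow{\prefix_0})$ only in front of suffixes from $\SuffixSet$, not in front of $\sigma\cdot\suffix$.

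\textbf{The paper has the same hole.} Its proof goes through \cref{lemma:prefix_row_invariant}. The inductive step there makes two moves:
\begin{itemize}
\item it applies the induction hypothesis to the suffix $\sigma\cdot\suffix$;
\item it asserts $E_{\evidenceA}(\transition_{\evidenceA}^*(b_{\RowsP}(\TableRow{\prefix'}),\sigma\cdot\suffix))=\TableRow{\prefix'\cdot\sigma}(\suffix)$ by appeal to propositional equivalence.
\end{itemize}
Both moves silently need $\sigma\cdot\suffix\in\SuffixSet$, or an \Lstar{}-style consistency property. The example above falsifies that lemma at $\prefix'=\emptyword$, $\sigma=(a,0)$, $\suffix=\emptyword$.

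\textbf{What you do prove.} Steps 1 and 3 establish the case $\prefix=\emptyword$. The initial formula is literally $b_{\RowsP}(\TableRow{\emptyword})$, so Step 3 with Step 1 gives $\TableCell{\emptyword}{\suffix}=\top\iff\suffix\in\Lg(\hypothesisA)$ for every $\suffix\in\SuffixSet$. This is the instance the paper uses later: the normalized counterexample placed in $\SuffixSet$ is evaluated from the row $\emptyword$ in \cref{lemma:refinement_cases,lemma:letter_refinement}. Faithfulness for arbitrary $\prefix\in\PrefixSet$ needs a consistency condition like the one you sketch at the end, added as an explicit hypothesis; it does not follow from cohesiveness.
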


Since the endpoints of the intervals produced by \cref{definition:partition_function} are independent of $\delay_n$, the constants in the ARTA constructed by \cref{algorithm:arta_construction} are bounded.

\newcommand{\hypothesisBoundStatement}{%
 Fix $K\in\setN$ as in \cref{theorem:myhill_nerode} for $\targetLg$.
 Assume $\Table$ with index sets $\PrefixSet, \SuffixSet \subseteq \TimedWords$ and monotone basis $\Basis\subseteq \PrefixSet$ is cohesive.
 Let $\hypothesisA$ be the ARTA constructed from $\Table$.
 For any transition $(\loc,\action, I, \formula)$ of $\hypothesisA$,
 every finite integer endpoint in $I$ is at most $K$.
}
\begin{theorem}
 \label{theorem:hypothesis_bound}
 \hypothesisBoundStatement{}
 \qed{}
\end{theorem}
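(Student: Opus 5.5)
The plan is to trace exactly which numbers can appear as finite endpoints of the intervals produced in \cref{algorithm:arta_construction}, and then show that each of them is at most $K$. Fix a location $r \in \RowsP$, corresponding to the unique $\basis \in \Basis$ with $\TableRow{\basis} = r$, and a letter $\action \in \Alphabet$. Let $l' = \delay'_1, \dots, \delay'_{n'}$ be the pruned list and $I_1, \dots, I_{n'} = \partition(l')$. Floor-distinctness guarantees that $\partition$ is applicable.

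First, I read off from \cref{definition:partition_function} where finite endpoints come from:
\begin{itemize}
  \item the left endpoint of $I_i$ is either $\delay'_{i-1}$ (when it is an integer) or $\lceil \delay'_{i-1} \rceil$;
  \item the right endpoint of $I_i$ is $\delay'_i$ or $\lceil \delay'_i \rceil$ for $i < n'$, and $\infty$ for $i = n'$;
  \item the sentinel $\delay'_0 = -0.5$ contributes only $\lceil -0.5 \rceil = 0 \le K$.
\end{itemize}
Hence every finite endpoint is $\delay'_j$ or $\lceil \delay'_j \rceil$ for some $j < n'$, and both are at most $\lceil \delay'_j \rceil$. It therefore suffices to show that every retained delay other than the last satisfies $\delay'_j \le K$. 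Then $\lceil \delay'_j \rceil \le K$ because $K \in \setN$.

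Second, I show that all observed delays exceeding $K$ induce the same transition. Let $(\action,\delay), (\action,\delay') \in \PhiTable$ with $\delay, \delay' > K$. The second condition of \cref{theorem:myhill_nerode} gives $\basis\cdot(\action,\delay) \nerode{\targetLg} \basis\cdot(\action,\delay')$. So for every $\suffix \in \SuffixSet$, $\basis\cdot(\action,\delay)\cdot\suffix \in \targetLg$ iff $\basis\cdot(\action,\delay')\cdot\suffix\in\targetLg$. Both prefixes lie in $\PrefixSet$ by evidence-closedness, so their rows coincide. Hence $\transition_{\evidenceA}(r,(\action,\delay)) = b_{\RowsP}(\TableRow{\basis\cdot(\action,\delay)}) = \transition_{\evidenceA}(r,(\action,\delay'))$.

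Third, I analyse the pruning loop. In the increasing list $\delay_1,\dots,\delay_n$, the delays greater than $K$ form a contiguous suffix $\delay_m, \dots, \delay_n$, all with the same transition formula by the second step. The loop removes $\delay_{i-1}$ whenever its formula equals that of $\delay_i$. The comparison is between originally consecutive delays; even if indices are read dynamically, the formula of the current neighbour is unchanged within this block. So $\delay_m,\dots,\delay_{n-1}$ are all removed, while $\delay_n$ is never removed because it is never the $\delay_{i-1}$ of any comparison. Therefore $\delay_n = \delay'_{n'}$, and every $\delay'_j$ with $j < n'$ is at most $K$, which concludes the argument.

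The main obstacle I expect is the bookkeeping in this third step: pinning down the semantics of removing an element from $l$ while iterating over it, and checking that the last element survives and remains last. The key point is that removal only ever deletes the earlier element of an equal pair, so the maximum delay $\delay_n$ is always retained.
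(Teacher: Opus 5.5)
Your proposal is correct and follows essentially the same route as the paper's proof. Both arguments locate every finite endpoint of $\partition(l)$ at $0$ or $\lceil \delay'_j \rceil$ with $j<n'$, then use the second condition of \cref{theorem:myhill_nerode} together with evidence-closedness to show that all observed delays above $K$ give identical rows and hence identical target formulas, so only the last retained delay can exceed $K$. Your explicit treatment of the pruning loop (only the earlier element of an equal pair is ever removed, so $\delay_n$ survives as the last element) spells out what the paper asserts in one line.
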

\subsection{The \ALstarRTA{} algorithm}\label{section:algorithm}

Before presenting the \ALstarRTA{} algorithm, we introduce auxiliary functions $\normalizeLetter$ and $\normalize$ to normalize counterexamples returned by equivalence queries.
This technique is commonly used in RTA learning~\cite{DBLP:journals/tecs/AnZZZ21,DBLP:journals/chinaf/AnWZZZ21}.

\begin{definition}\LongVersion{[normalization function]}%
 \label{definition:normalization_functions}
 For $\alpha\in (0,1)$, we let $\normalizeLetter\colon \Alphabet\times\Rnn \to \Alphabet\times\Rnn$ by
 $\normalizeLetter((\action,\delay)) = (\action,\delay)$ if $\delay\in\setN$, and $\normalizeLetter((\action,\delay)) = (\action,\lfloor \delay\rfloor + \alpha)$ otherwise.
 We extend it to timed words $\normalize\colon \TimedWords\to\TimedWords$ by applying $\normalizeLetter$ to each letter with delay.
\end{definition}

It is easy to see that $\normalize$ preserves counterexamples during learning.

\newcommand{\normalizedCounterexampleStatement}{%
 For any real-time language $\lang$,
 for any $\alpha\in (0,1)$,
 for any ARTA $\A$, and
 for any $\word \in \lang \setdiff \Lg(\A)$,
 $\normalize(\word) \in \lang \setdiff \Lg(\A)$ holds.
}
\begin{theorem}\LongVersion{[counterexample preservation]}
 \label{theorem:normalized_counterexample}
 \normalizedCounterexampleStatement{}
 \qed{}
\end{theorem}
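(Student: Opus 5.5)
The plan is to show that each single normalization step $(\action,\delay)\mapsto\normalizeLetter((\action,\delay))$ preserves membership in $\lang$ and in $\Lg(\A)$ separately. Replacing the letters of $\word$ one at a time then turns $\word$ into $\normalize(\word)$ without changing membership in either language, so $\normalize(\word)$ stays in the symmetric difference. I will proceed by induction on the number of replaced positions. Write $\word = u\cdot(\action,\delay)\cdot v$, where $u$ has already been normalized and $v$ is not yet normalized, and compare it with $u\cdot(\action,\delay')\cdot v$ for $(\action,\delay')=\normalizeLetter((\action,\delay))$. If $\delay\in\setN$ there is nothing to show, so assume $\delay\notin\setN$ and hence $\delay'=\lfloor\delay\rfloor+\alpha$. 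Then $\lfloor \delay'\rfloor=\lfloor\delay\rfloor$, and both delays lie in the open unit interval $(\lfloor\delay\rfloor,\lfloor\delay\rfloor+1)$.

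\emph{Target language.} Since $\lang$ is real-time, \cref{theorem:myhill_nerode} provides some $K$. I first check that $\llbracket\delay\rrbracket=\llbracket\delay'\rrbracket$, or that both delays exceed $K$.
\begin{itemize}
  \item If $\delay<K$, then $\lfloor\delay\rfloor+1\le K$, so $\delay'<K$ and both regions equal $(\lfloor\delay\rfloor,\lfloor\delay\rfloor+1)$.
  \item If $\delay>K$, then $\lfloor\delay\rfloor\ge K$, so $\delay'>K$ as well.
\end{itemize}
In either case one of the two bullet conditions of \cref{theorem:myhill_nerode} gives $u\cdot(\action,\delay)\nerode{\lang}u\cdot(\action,\delay')$. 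By the definition of $\nerode{\lang}$ with the suffix $v$, this yields $u\cdot(\action,\delay)\cdot v\in\lang \iff u\cdot(\action,\delay')\cdot v\in\lang$.

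\emph{Hypothesis ARTA.} Every interval in $\Intervals$ has endpoints in $\setN\cup\{\infty\}$. Hence any $I\in\Intervals$ either contains the whole open unit interval $(\lfloor\delay\rfloor,\lfloor\delay\rfloor+1)$ or is disjoint from it. In particular, $\delay\in I\iff\delay'\in I$ for every guard $I$ of $\A$. By the determinism condition and the definition of $\transition$, it follows that $\transition(\loc,(\action,\delay))=\transition(\loc,(\action,\delay'))$ for every location $\loc$. A structural induction on location formulas then gives $\transition^*(\formula,(\action,\delay)\cdot v)=\transition^*(\formula,(\action,\delay')\cdot v)$ for all $\formula$. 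Applying this with $\formula=\transition^*(\InitLoc,u)$, and using the homomorphic definition of $\transition^*$ (which gives $\transition^*(\formula,u\cdot w)=\transition^*(\transition^*(\formula,u),w)$, again by induction), shows that both words produce the same final formula. Hence $E_\A$ agrees on them, and the two words are either both accepted or both rejected by $\A$.

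The only mildly delicate point is the boundary case of the region comparison, namely confirming that $\delay$ and $\delay'$ fall on the same side of $K$. This is settled by the floor argument in the two cases above. The remaining steps are routine inductions.
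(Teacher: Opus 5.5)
Your proof is correct. For the hypothesis $\A$ it coincides with the paper's argument (\cref{lemma:normalization_letter_preserves,lemma:normalization_preserves}). Guards have endpoints in $\setN\cup\{\infty\}$, so no guard separates $\delay$ from $\lfloor\delay\rfloor+\alpha$. Hence the one-step transitions agree, and this equality propagates through $\transition^*$. Replacing letters one at a time, rather than following the paper's induction on the length of $\word$, is only a bookkeeping difference.

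The genuine difference is how you treat the target $\lang$. The paper never reasons about $\lang$ semantically. It fixes some ARTA $\tilde\A$ with $\Lg(\tilde\A)=\lang$ and applies the same transition-invariance lemma to $\tilde\A$. One purely syntactic fact about integer guard endpoints therefore covers both languages, and no constant $K$ or region appears. This route tacitly uses that a real-time language is ARTA-recognizable, e.g.\ by its minimal DRTA, which is an ARTA with atomic targets; a nondeterministic RTA with overlapping guards is not literally an ARTA.

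You instead use the Myhill--Nerode conditions of \cref{theorem:myhill_nerode} directly. This avoids choosing any automaton for $\lang$, at the price of the case split on $\delay<K$ versus $\delay>K$. The case $\delay=K$ cannot occur since $\delay\notin\setN$. As a small simplification, your second case does not need the second bullet of \cref{theorem:myhill_nerode}. When $\delay,\delay'>K$, both regions equal $(K,\infty)$, so the first bullet already gives $u\cdot(\action,\delay)\nerode{\lang}u\cdot(\action,\delay')$.
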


\cref{algorithm:ALstarRTA} outlines \ALstarRTA{}.
It maintains an observation table $\Table$ with finite index sets $\PrefixSet,\SuffixSet$ together with a candidate monotone basis $\Basis$, which are initialized at \cref{algorithm:ALstarRTA:init} and updated in the main loop \crefrange{algorithm:ALstarRTA:main_loop:begin}{algorithm:ALstarRTA:main_loop:end}.
When $\PrefixSet$ or $\SuffixSet$ is extended, the newly introduced cells in $\Table$ are filled by using membership queries.

The learner first increases $\Basis$ or $\PrefixSet$ to ensure $\Basis$-closedness (\cref{algorithm:ALstarRTA:close}) and evidence-closedness (\cref{algorithm:ALstarRTA:evidence_closed}).
When neither extension step applies, it computes $\Basis'$ by solving a binary integer program (BIP) (\cref{algorithm:ALstarRTA:minimize_basis}).
If $|\Basis'| < |\Basis|$, it updates $\Basis$ to $\Basis'$.
The details of the BIP encoding are shown in \cref{section:exact_basis_milp}.

Otherwise, the learner constructs a hypothesis $\hypothesisA$ (\cref{algorithm:ALstarRTA:construct_hypothesis}) and asks an equivalence query (\cref{algorithm:ALstarRTA:eq_query}).
If the query succeeds, the algorithm returns the hypothesis (\cref{algorithm:ALstarRTA:eqQ:end}); otherwise the teacher provides a counterexample (\cref{algorithm:ALstarRTA:eqQ:cex}).
On a counterexample, the learner first normalizes it (\cref{algorithm:ALstarRTA:normalize_cex}) and then refines the table by adding suffixes of the normalized counterexample to $\SuffixSet$ (\cref{algorithm:ALstarRTA:add_cex}).
This normalization step maintains the observation table in \cref{algorithm:ALstarRTA} as floor-distinct because
\begin{ienumeration}
 \item $\PhiTable$ increases only at \cref{algorithm:ALstarRTA:add_cex} and
 \item this normalization ensures that for any $(\action, \delay_1), (\action, \delay_2) \in \PhiTable$, $\normalizeLetter((\action, \delay_1)) = \normalizeLetter((\action, \delay_2))$ implies $\delay_1 = \delay_2$, which ensures the floor-distinctness.
\end{ienumeration}
\subsection{Worked example of \ALstarRTA{}}\label{section:worked_example}

\newcommand{\workpair}[2]{(#1,#2)}
\newcommand{\Athree}{\workpair{a}{3}}
\newcommand{\Athreehalf}{\workpair{a}{3.5}}
\newcommand{\Bzero}{\workpair{b}{0}}
\newcommand{\Bonehalf}{\workpair{b}{1.5}}
\newcommand{\Btwohalf}{\workpair{b}{2.5}}
\newcommand{\Bseven}{\workpair{b}{7}}
\newcommand{\Bsevenhalf}{\workpair{b}{7.5}}
\newcommand{\WorkLocZero}{\loc_0}
\newcommand{\WorkLocOne}{\loc_1}
\newcommand{\WorkLocTwo}{\loc_2}
\newcommand{\workbranch}[2]{\node[rectangle,fill=black,minimum width=.3cm,minimum height=.3cm,inner sep=0pt] (#1) at #2 {};}
\newcommand{\workconst}[3]{\node[draw,rectangle,minimum width=.45cm,minimum height=.36cm,inner sep=1pt] (#1) at #2 {$#3$};}

We illustrate \ALstarRTA{} on the target language $\targetLg$ recognized by the ARTA in \cref{figure:running_arta}.
We fix $\alpha=\tfrac12$.
\begin{figure}[t]
 \begin{subfigure}{0.2\linewidth}
  \centering
  \small
  \begin{tabular}{r|c}
   & $\emptyword$\\\hline
   $\emptyword$ & $\bot$\\
  \end{tabular}
  \caption{Initial observation table $\Table_1$.}%
  \label{figure:observation_tables:1}
 \end{subfigure}
 \hfill
 \begin{subfigure}{0.2\linewidth}
  \centering
  \scriptsize
  \begin{tabular}{r|cc}
   & $\emptyword$ & $\Bseven$\\\hline
   \rowcolor{gray!15}
   $\emptyword$ & $\bot$ & $\top$\\
   $\Bseven$ & $\top$ & $\top$\\
  \end{tabular}
  \caption{Second observation table $\Table_2$.}%
  \label{figure:observation_tables:2}
 \end{subfigure}
 \hfill
 \begin{subfigure}{0.2\linewidth}
  \centering
  \small
  \begin{tikzpicture}[auto,scale=0.75,every node/.style={initial text={},transform shape}]
   \node[state, initial] (q0) at (0, 0) {$\WorkLocZero$};
   \workconst{top}{(2.0,0)}{\top}
   \path[->]
   (q0) edge node[above] {$b,[0,\infty)$} (top);
  \end{tikzpicture}
  \caption{Second hypothesis $\hypothesisA^2$.}%
  \label{figure:hypothesis:2}
 \end{subfigure}
 \hfill
 \begin{subfigure}{0.37\linewidth}
  \centering
  \scriptsize
  \begin{tabular}{r|ccc}
   & $\emptyword$ & $\Bseven$ & $\Bsevenhalf$\\\hline
   \rowcolor{gray!15}
   $\emptyword$ & $\bot$ & $\top$ & $\bot$\\
   $\Bseven$ & $\top$ & $\top$ & $\top$\\
   $\Bsevenhalf$ & $\bot$ & $\top$ & $\bot$\\
  \end{tabular}
  \caption{Third observation table $\Table_3$.}%
  \label{figure:observation_tables:3}
 \end{subfigure}
 \hfill
 \begin{subfigure}{0.30\linewidth}
  \centering
  \begin{tikzpicture}[auto, semithick,node distance=1.5cm,scale=0.75,every node/.style={initial text={},transform shape}]
    \node[state, initial] (q0) at (0, 0) {$\WorkLocZero$};
    \workconst{top}{(2,0)}{\top}
    \path[->]
      (q0) edge[loop above] node {$b,(7,\infty)$} (q0)
      (q0) edge node[above] {$b,[0,7]$} (top);
  \end{tikzpicture}
  \caption{Third hypothesis $\hypothesisA^3$.}%
  \label{figure:hypothesis:3}
 \end{subfigure}
 \hfill
 \begin{subfigure}{0.36\linewidth}
  \centering
  \scriptsize
  \begin{tabular}{r|cccc}
   & $\emptyword$ & $\Btwohalf$ & $\Bseven$ & $\Bsevenhalf$\\\hline
   \rowcolor{gray!15}
   $\emptyword$ & $\bot$ & $\bot$ & $\top$ & $\bot$\\
   $\Btwohalf$ & $\bot$ & $\bot$ & $\top$ & $\bot$\\
   $\Bseven$ & $\top$ & $\top$ & $\top$ & $\top$\\
   $\Bsevenhalf$ & $\bot$ & $\bot$ & $\top$ & $\bot$\\
  \end{tabular}
  \caption{Fourth observation table $\Table_4$.}%
  \label{figure:observation_tables:4}
 \end{subfigure}
 \hfill
 \begin{subfigure}{0.32\linewidth}
  \centering
  \small
  \begin{tikzpicture}[auto, semithick,node distance=1.5cm,scale=0.75,every node/.style={initial text={},transform shape}]
    \node[state, initial above] (q0) at (0, 0) {$\WorkLocZero$};
    \workconst{top}{(2,0)}{\top}
    \path[->]
      (q0) edge[loop below] node {$b,[0,3)$} (q0)
      (q0) edge[loop left] node {$b,(7,\infty)$} (q0)
      (q0) edge node[above] {$b,[3,7]$} (top);
  \end{tikzpicture}
  \caption{Fourth hypothesis $\hypothesisA^4$.}%
  \label{figure:hypothesis:4}
 \end{subfigure}
 \hfill
 \begin{subfigure}{.5\linewidth}
  \centering
  \scriptsize
  \begin{tabular}{r|cccccc}
   & $\emptyword$ & $\Bzero$ & $\Btwohalf$ & $\Bseven$ & $\Bsevenhalf$ & $\Athree\Bzero$\\\hline
   \rowcolor{gray!15}
   $\emptyword$ & $\bot$ & $\bot$ & $\bot$ & $\top$ & $\bot$ & $\top$\\
   \rowcolor{gray!15}
   $\Athree$ & $\bot$ & $\top$ & $\top$ & $\top$ & $\top$ & $\bot$\\
   $\Bzero$ & $\bot$ & $\bot$ & $\bot$ & $\bot$ & $\bot$ & $\bot$\\
   $\Btwohalf$ & $\bot$ & $\bot$ & $\bot$ & $\top$ & $\bot$ & $\bot$\\
   \rowcolor{gray!15}
   $\Bseven$ & $\top$ & $\top$ & $\top$ & $\top$ & $\top$ & $\bot$\\
   $\Bsevenhalf$ & $\bot$ & $\bot$ & $\bot$ & $\top$ & $\bot$ & $\bot$\\
   $\Athree\Athree$ & $\bot$ & $\bot$ & $\bot$ & $\top$ & $\bot$ & $\top$\\
   $\Athree\Bzero$ & $\top$ & $\top$ & $\top$ & $\top$ & $\top$ & $\bot$\\
   $\Athree\Btwohalf$ & $\top$ & $\top$ & $\top$ & $\top$ & $\top$ & $\bot$\\
   $\Athree\Bseven$ & $\top$ & $\top$ & $\top$ & $\top$ & $\top$ & $\bot$\\
   $\Athree\Bsevenhalf$ & $\top$ & $\top$ & $\top$ & $\top$ & $\top$ & $\bot$\\
   $\Bseven\Athree$ & $\bot$ & $\bot$ & $\bot$ & $\top$ & $\bot$ & $\bot$\\
   $\Bseven\Bzero$ & $\top$ & $\top$ & $\top$ & $\top$ & $\top$ & $\bot$\\
   $\Bseven\Btwohalf$ & $\top$ & $\top$ & $\top$ & $\top$ & $\top$ & $\bot$\\
   $\Bseven\Bseven$ & $\top$ & $\top$ & $\top$ & $\top$ & $\top$ & $\bot$\\
   $\Bseven\Bsevenhalf$ & $\top$ & $\top$ & $\top$ & $\top$ & $\top$ & $\bot$
  \end{tabular}
  \caption{Fifth observation table $\Table_5$.}%
  \label{figure:observation_tables:5}
 \end{subfigure}
 \hfill
 \begin{subfigure}{0.45\linewidth}
  \centering
  \small
  \begin{tikzpicture}[auto, semithick,node distance=1.5cm,scale=0.78,every node/.style={initial text={},transform shape}]
    \node[state, initial] (q0) at (0, 0) {$\WorkLocZero$};
    \node[state, accepting] (q1) at (4, 0) {$\WorkLocOne$};
    \node[state] (q2) at (4, -3) {$\WorkLocTwo$};
    \workbranch{c12}{(1.7,1.5)}
    \workbranch{c012}{(0,-3)}
    \path[->]
      (q0) edge node[above left] {$a,[0,\infty)$} (c12)
      (q0) edge[bend right=15] node[left] {$b,(0,3)$} (c012)
      (q0) edge node[above,pos=0.3] {$b,[3,7]$} (q1)
      (q0) edge[bend left=15] node[pos=0.3,right] {$b,(7,\infty)$} (c012)
      (q1) edge[bend left=10] node[below right,pos=0.7] {$a,[0,\infty)$} (c012)
      (q1) edge[loop above] node {$b,[0,\infty)$} (q1)
      (q2) edge[bend right=10] node[above right,pos=0.8] {$a,[0,\infty)$} (q0)
      (q2) edge node[right] {$b,[0,\infty)$} (q1)
      (c12) edge (q1)
      (c12) edge (q2)
      (c012) edge (q0)
      (c012) edge[bend left=10] (q1)
      (c012) edge (q2);
  \end{tikzpicture}
  \caption{Fifth hypothesis $\hypothesisA^5$.}%
  \label{figure:hypothesis:5}
 \end{subfigure}
 \caption{Observation tables and ARTAs in the example in \cref{section:worked_example}. Shaded rows in the observation tables show a minimum-cardinality monotone basis.}%
 \label{figure:worked_example:1}
\end{figure}

\cref{figure:observation_tables:1} shows the initial observation table $\Table_1$, where
we have $\PrefixSet=\SuffixSet=\{\emptyword\}$ and $\Basis=\emptyset$.
Note that $\TableRow{\emptyword}$ is represented by $\bigvee_{r \in \emptyset} r$.
The corresponding hypothesis ARTA $\hypothesisA^1$ is such that $\Loc = \Final = \emptyset$ and $\InitLoc = \bot$, which does not accept any timed word.
The learner asks an equivalence query, and the teacher returns a counterexample $(b,7)$, which belongs to $\targetLg$ but is rejected by $\hypothesisA^1$.

The learner then adds $(b,7)$ to $\SuffixSet$ and makes the observation table cohesive.
\cref{figure:observation_tables:2,figure:hypothesis:2} show the resulting table $\Table_2$ and the corresponding hypothesis ARTA $\hypothesisA^2$, respectively.
Here, we have $\Basis = \{\emptyword\}$; $\TableRow{(b,7)}$ is represented by $\bigwedge_{r \in \emptyset} r$. 
The learner asks an equivalence query, and the teacher returns a counterexample $(b,7.5)$, which does not belong to $\targetLg$ but is accepted by $\hypothesisA^2$.

The learner then adds $(b,7.5)$ to $\SuffixSet$ and makes the observation table cohesive.
The resulting observation table $\Table_3$ and the corresponding hypothesis ARTA $\hypothesisA^3$ are shown in \cref{figure:observation_tables:3,figure:hypothesis:3}.
We still have $\Basis=\{\emptyword\}$; the row of the new prefix timed word $(b,7.5)$ is the same as that of $\emptyword$.
The learner asks an equivalence query, and the teacher returns a counterexample $(b,2.5)$\LongVersion{, which does not belong to $\targetLg$ but is accepted by $\hypothesisA^3$}.

The learner then adds $(b,2.5)$ to $\SuffixSet$ and makes the observation table cohesive.
The resulting observation table $\Table_4$ and the corresponding hypothesis ARTA $\hypothesisA^4$ are shown in \cref{figure:observation_tables:4,figure:hypothesis:4}.
We again have $\Basis=\{\emptyword\}$; the new timed word $(b,2.5)$ also has the same row as $\emptyword$.
The learner asks an equivalence query, and the teacher returns a counterexample $(a,3)(b,0)$\LongVersion{, which belongs to $\targetLg$ but is rejected by $\hypothesisA^4$}.

The learner then adds the suffixes of $(a,3)(b,0)$ to $\SuffixSet$ and makes the observation table cohesive.
The resulting observation table $\Table_5$ and the corresponding hypothesis ARTA $\hypothesisA^5$ are shown in \cref{figure:observation_tables:5,figure:hypothesis:5}.
Here, the monotone basis changes to $\Basis=\{\emptyword,\Athree,\Bseven\}$, and the corresponding hypothesis has three locations.
The learner asks an equivalence query, and the teacher returns a counterexample $(a,3.5)(b,7.5)$\LongVersion{, which does not belong to $\targetLg$ but is accepted by $\hypothesisA^5$}.

The fifth hypothesis ARTA already has three locations that match the target ARTA.
The learner makes four additional equivalence queries, mainly to refine the transitions rather than to introduce qualitatively new behavior.
After the ninth equivalence query, the teacher returns no counterexample, and the final hypothesis is equivalent to the target ARTA in \cref{figure:running_arta}.

\subsection{Monotone basis identification via BIP}\label{section:exact_basis_milp}

In \ALstarRTA{}, we reduce the identification of a minimum-cardinality monotone basis to a binary integer program (BIP) over the distinct row vectors $\mathrm{Rows}(\PrefixSet)$.
For each row vector $x \in \mathrm{Rows}(\PrefixSet)$, we let $\mathrm{Pos}(x)=\{\suffix \in \SuffixSet \mid x(\suffix)=\top\}$ and $\mathrm{Neg}(x)=\{\suffix \in \SuffixSet \mid x(\suffix)=\bot\}$.
For each triple $(x,\suffix^+,\suffix^-)$ with $x \in \mathrm{Rows}(\PrefixSet)$, $\suffix^+ \in \mathrm{Pos}(x)$, and $\suffix^- \in \mathrm{Neg}(x)$, we identify one \emph{separator obligation}.
A row $r \in \mathrm{Rows}(\PrefixSet)$ \emph{covers} $(x,\suffix^+,\suffix^-)$ if $r(\suffix^+) = \top$ and $r(\suffix^-)=\bot$.
We then introduce a binary variable $y_{r} \in \{0,1\}$ for each $r \in \mathrm{Rows}(\PrefixSet)$ and solve the following BIP:
\begin{align*}
  \min & \sum_{r \in \mathrm{Rows}(\PrefixSet)} y_{r} & \\
  \text{s.t.} &
  \sum_{\substack{r \in \mathrm{Rows}(\PrefixSet), r(\suffix^+) = \top, r(\suffix^-) = \bot}} y_{r} \geq 1
  && \forall x \in \mathrm{Rows}(\PrefixSet), \suffix^+ \in \mathrm{Pos}(x), \suffix^- \in \mathrm{Neg}(x). %
\end{align*}
Let $U = \{r \in \mathrm{Rows}(\PrefixSet) \mid y_{r}=1\}$ be the selected row set.
We map each $r \in U$ back to $\prefix_{r} \in \PrefixSet$ satisfying $\TableRow{\prefix_{r}}= r$ and set $\Basis = \{\prefix_{r} \mid r \in U\}$.

\newcommand{\MILPBasisEncodingStatement}{%
 For $R \subseteq \{\top,\bot\}^{\SuffixSet}$ and $U \subseteq R$,
 $U$ is a monotone basis of $R$ if and only if for any $x \in R$, $\suffix^+ \in \mathrm{Pos}(x)$, and $\suffix^- \in \mathrm{Neg}(x)$, there is $r \in U$ covering $(x,\suffix^+,\suffix^-)$.
}
\begin{proposition}
 \label{proposition:milp_basis_encoding}
 \MILPBasisEncodingStatement{}
 \qed{}
\end{proposition}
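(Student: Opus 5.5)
We prove the two directions separately. Throughout, we use the characterization of positive Boolean combinations of row vectors as monotone functions: a vector $x \in \{\top,\bot\}^{\SuffixSet}$ lies in $\Boolean{U}$ if and only if $x = b_{U}(x)$. Here $b_U$ is the decomposition defined before \cref{definition:evidence_afa}, with $U$ in place of $\RowsP$, namely $b_U(x) = \bigvee_{\suffix \in \mathrm{Pos}(x)} M_U(\suffix)$ and $M_U(\suffix) = \bigwedge_{r \in U, r(\suffix)=\top} r$.

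We first establish a pointwise description of $M_U(\suffix)$ and $b_U(x)$. For every $\suffix$ we have $M_U(\suffix)(\suffix) = \top$, since each conjunct is $\top$ at $\suffix$ and the empty conjunction is $\top$. Also, $M_U(\suffix)$ is the least vector in $\Boolean{U}$ that is $\top$ at $\suffix$, because any positive expression evaluating to $\top$ at $\suffix$ dominates the conjunction of all basis vectors true at $\suffix$. It follows that $b_U(x) \geq x$ pointwise, and that $x \in \Boolean{U}$ implies $b_U(x) \leq x$. The latter holds because each $M_U(\suffix)$ with $x(\suffix)=\top$ is below $x$ by minimality. Hence $x \in \Boolean{U} \iff b_U(x) \leq x$.

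($\Leftarrow$) Assume the covering condition and fix $x \in R$. It suffices to show $b_U(x)(\suffix^-) = \bot$ for every $\suffix^- \in \mathrm{Neg}(x)$, that is, $M_U(\suffix^+)(\suffix^-) = \bot$ for every $\suffix^+ \in \mathrm{Pos}(x)$. The covering row $r \in U$ for $(x,\suffix^+,\suffix^-)$ satisfies $r(\suffix^+) = \top$, so it is a conjunct of $M_U(\suffix^+)$. Since $r(\suffix^-)=\bot$, the conjunction is $\bot$ at $\suffix^-$. Thus $b_U(x) \leq x$, and therefore $x \in \Boolean{U}$. Since $x \in R$ was arbitrary, $R \subseteq \Boolean{U}$.

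($\Rightarrow$) Assume $R \subseteq \Boolean{U}$ and fix $x,\suffix^+,\suffix^-$. Suppose, for contradiction, that no $r \in U$ covers the triple. Then every $r \in U$ with $r(\suffix^+)=\top$ also has $r(\suffix^-)=\top$. We show by structural induction that every $\phi \in \Boolean{U}$ satisfies $\phi(\suffix^+) = \top \Rightarrow \phi(\suffix^-) = \top$. The constants $\top$ and $\bot$ satisfy this, the generators in $U$ satisfy it by assumption, and the property is preserved by $\land$ and $\lor$ because both operations are monotone and act pointwise. Now $x \in \Boolean{U}$ and $x(\suffix^+)=\top$, so $x(\suffix^-)=\top$, which contradicts $\suffix^- \in \mathrm{Neg}(x)$.

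The only step that needs care is the minimality claim for $M_U(\suffix)$ used in the first paragraph. It also follows from the same induction: any $\phi \in \Boolean{U}$ with $\phi(\suffix)=\top$ satisfies $\phi \geq M_U(\suffix)$. Alternatively, the ($\Leftarrow$) direction can avoid this claim entirely by directly checking $b_U(x) = x$. The inequality $\geq$ holds from $M_U(\suffix)(\suffix)=\top$, and $\leq$ holds by the covering argument above.
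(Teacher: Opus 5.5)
Your proof is correct and essentially matches the paper's. The ($\Rightarrow$) direction is the same structural induction on the positive expression, which you phrase contrapositively. For ($\Leftarrow$) you use the canonical decomposition $b_U(x)=\bigvee_{\suffix^+\in\mathrm{Pos}(x)}M_U(\suffix^+)$, whereas the paper builds $\bigvee_{\suffix^+}\bigwedge_{\suffix^-} r_{\suffix^+,\suffix^-}$ from the chosen covering rows; this is only a cosmetic difference, since every such $r_{\suffix^+,\suffix^-}$ is a conjunct of $M_U(\suffix^+)$. Your extra characterization $x\in\Boolean{U}\iff b_U(x)\le x$ is sound, but, as you note yourself, the proposition does not need it.
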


By \cref{proposition:milp_basis_encoding}, the feasible solutions of the BIP are exactly the monotone bases of $R$, and solving it yields a minimum-cardinality monotone basis.
In our implementation, this optimization is solved \emph{approximately} using a relative gap tolerance and a time limit, for the sake of efficiency.

\subsection{Query complexity of \ALstarRTA{}}

Finally, we show the query complexity of \ALstarRTA{} using the following bound on the number of distinct rows in observation tables.
In what follows, we assume that the BIP from \cref{section:exact_basis_milp} is exactly solved, and we always obtain a minimum-cardinality monotone basis $\Basis$.
We also assume that the mapping from $\mathrm{Rows}(\PrefixSet)$ to $\PrefixSet$ is deterministic; once $\prefix \in \PrefixSet$ for a row vector $r \in \{\top, \bot\}^\SuffixSet$ is selected, the same $\prefix$ is always chosen for $r$ until $\SuffixSet$ changes.

\newcommand{\rowBoundStatement}{%
 Let $n$ be the number of locations of the minimal DRTA recognizing $\targetLg$.
 Every observation table $\Table$ has at most $n$ distinct row vectors\LongVersion{, \ie{} $|\mathrm{Rows}(\PrefixSet)| \leq n$ holds}.
}
\begin{lemma}%
 \label{lemma:row_bound}
 \rowBoundStatement{}
 \qed{}
\end{lemma}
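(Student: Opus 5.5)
The plan is to show that the row vector $\TableRow{\prefix}$ of a prefix depends only on the $\nerode{\targetLg}$-class of $\prefix$. The bound then follows from the Myhill--Nerode characterization recalled after \cref{theorem:myhill_nerode}.

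\textbf{Rows are determined by Nerode classes.} By \cref{definition:timed_observation_table}, every cell satisfies $\TableCell{\prefix}{\suffix}=\top$ if and only if $\prefix\cdot\suffix\in\targetLg$. Take $\prefix_1,\prefix_2\in\PrefixSet$ with $\prefix_1\nerode{\targetLg}\prefix_2$. By the definition of $\nerode{\targetLg}$, for every $\suffix\in\SuffixSet\subseteq\TimedWords$ we have $\prefix_1\cdot\suffix\in\targetLg\iff\prefix_2\cdot\suffix\in\targetLg$. Hence $\TableRow{\prefix_1}=\TableRow{\prefix_2}$. The map $\prefix\mapsto\TableRow{\prefix}$ therefore factors through the quotient $\PrefixSet/{\nerode{\targetLg}}$, which embeds into $\TimedWords/{\nerode{\targetLg}}$. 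This gives
\[
|\mathrm{Rows}(\PrefixSet)|\le|\TimedWords/{\nerode{\targetLg}}|.
\]

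\textbf{Counting Nerode classes.} It remains to identify $|\TimedWords/{\nerode{\targetLg}}|$ with $n$. Since $\targetLg$ is a real-time language, \cref{theorem:myhill_nerode} says the number of classes is finite. The subsequent remark says the unique minimal DRTA has exactly that many locations, so $|\TimedWords/{\nerode{\targetLg}}|=n$.

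For robustness I would also check the inequality $\le n$ directly. Let $\TA$ be the minimal DRTA and extend it with a sink so it is complete; I will argue the sink adds no extra class. If two timed words reach the same location $\loc$ of $\TA$, then they have the same residual language, namely the language accepted from $\loc$, and so they are $\nerode{\targetLg}$-equivalent. Any words that have no run at all share the empty residual; this either coincides with some location's residual or is accounted for by the minimal DRTA already containing a sink. Either way the number of classes is at most the number of locations.

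\textbf{Main obstacle.} There is no real difficulty; the only delicate point is this convention about partial DRTAs versus the number $n$. I would settle it by relying on the paper's stated equality between the location count of the minimal DRTA and the Nerode index, rather than re-deriving it. Note also that the argument holds for every observation table satisfying \cref{definition:timed_observation_table}, independently of cohesiveness or of the choice of $\Basis$.
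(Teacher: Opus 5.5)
Your argument is correct and essentially the paper's. The paper shows that two prefixes with distinct rows are separated by some suffix and so lie in different $\nerode{\targetLg}$-classes, which is the contrapositive of your step that the row map factors through the quotient. It then invokes the same stated equality between the number of locations of the minimal DRTA and the index of $\nerode{\targetLg}$, so your extra sink/partiality check is unnecessary once you rely on that equality.
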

\newcommand{\complexityAnalysisStatement}{%
 Let $\targetLg \subseteq \TimedWords$ be a real-time language,
 let $n$ be the number of locations of the minimal DRTA recognizing $\targetLg$, and
 let $h$ be the maximum length of counterexamples returned by the teacher.
 Fix $K\in\setN$ as in \cref{theorem:myhill_nerode} for $\targetLg$.
 \cref{algorithm:ALstarRTA} terminates and returns an ARTA recognizing $\targetLg$ using at most
 $M + 1$ equivalence queries and
 $M^3h^2n + M^2hn + Mh + 1$ membership queries, where $M = 2^n + |\Alphabet| (2K + 2)$.
}
\begin{theorem}[query complexity]%
 \label{theorem:complexity_analysis}
 \complexityAnalysisStatement{}
\end{theorem}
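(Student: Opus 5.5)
Proof plan. We bound the number of equivalence queries by a potential argument, then bound the table size and hence the membership queries.

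\textbf{Progress measure.} After $i$ counterexamples, consider the pair $\bigl(|\mathrm{Rows}(\PrefixSet)|, |\PhiTable|\bigr)$, or more precisely a sum of two bounded quantities. By \cref{lemma:row_bound}, $|\mathrm{Rows}(\PrefixSet)| \le n$. Since the minimal basis is a set of distinct rows and we assume exact minimization, the basis ranges over subsets of $\mathrm{Rows}(\PrefixSet)$. This gives at most $2^n$ possible ``basis configurations''.

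The letters with delays come from normalized counterexamples, so for each $\action$ the delays are integers in $[0,K]$, or $m+\alpha$ with $m<K$, or values above $K$. Values above $K$ are all equivalent by \cref{theorem:myhill_nerode}, and only finitely many distinct normalized values are useful. We want at most $2K+2$ relevant delay classes per letter, which gives $|\Alphabet|(2K+2)$ in total.

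\textbf{Each counterexample makes progress.} Let $\hypothesisA$ be the hypothesis and $\cex$ the returned counterexample. By \cref{theorem:normalized_counterexample}, $\normalized{\cex}$ is still a counterexample. After its suffixes are added to $\SuffixSet$, \cref{theorem:faithfulness} implies the old hypothesis cannot be faithful to the new table, because $\epsilon\cdot \normalized{\cex}$ has $\epsilon\in\PrefixSet$ and $\normalized{\cex}\in\SuffixSet$.

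We then argue, in the style of \ALstar{}, that the hypothesis built after re-closing must differ in one of two ways:
\begin{ienumeration}
  \item a new region class of delays appears in $\PhiTable$, or
  \item the evidence AFA over the existing letters changes.
\end{ienumeration}
Case (ii) forces either a new distinct row vector or a change of the minimal basis. Combined with the determinism assumption on the mapping from rows to prefixes, this should bound such events by $2^n$. Case (i) can happen at most $|\Alphabet|(2K+2)$ times. Together these give at most $M$ counterexamples and hence $M+1$ equivalence queries.

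\textbf{Main obstacle.} The hard step is case (ii). Unlike \Lstar{}, the number of rows can stay fixed while the basis shifts, so we must rule out cycling among bases. I would try a lexicographic argument. Every counterexample either
\begin{ienumeration}
  \item strictly increases $|\mathrm{Rows}(\PrefixSet)|$ (at most $n$ times), or
  \item strictly refines the column set, so that rows previously expressible become separated.
\end{ienumeration}
Since row vectors over the growing $\SuffixSet$ correspond injectively to subsets of the $\le n$ Nerode classes, I would bound the number of distinct ``row-partition / basis'' states by $2^n$ and show none repeats. Delays above $K$ or with floor $\ge K$ must be handled via \cref{theorem:hypothesis_bound}, which ensures guards do not distinguish them.

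\textbf{Membership count.} After at most $M$ rounds, $|\SuffixSet| \le Mh+1$. The basis has at most $n$ elements, and $|\PhiTable| \le M h$ (each counterexample contributes at most $h$ letters). Evidence-closedness then gives $|\PrefixSet| \le n(Mh) + $ (added basis prefixes). Multiplying $|\PrefixSet|\cdot|\SuffixSet|$, and accounting for the prefixes that enter over the $M$ rounds, yields roughly $M^3h^2n + M^2hn + Mh + 1$. This final step is routine bookkeeping; the $+1$ covers the initial cell $(\emptyword,\emptyword)$.
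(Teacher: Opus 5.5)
\textbf{The $2^n$ half of the equivalence-query bound is not established.} The overall shape is right, and it matches the paper: the split $M = 2^n + |\Alphabet|(2K+2)$, the use of \cref{theorem:normalized_counterexample} and \cref{theorem:faithfulness} (with $\emptyword\in\PrefixSet$ and $\normalize(\cex)\in\SuffixSet$) to force progress, and your count of $2K+2$ relevant delays per letter. But the $2^n$ part, which you yourself flag as the main obstacle, is missing, and the route you sketch for it does not work.

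Your claim that a change of the evidence AFA over old letters ``forces either a new distinct row vector or a change of the minimal basis'' is false. The transition $b_{\RowsP}(\TableRow{\basis\cdot\sigma})$ depends on the columns. A new column $\suffix$ with $\TableRow{\basis\cdot\sigma}(\suffix)=\top$ adds the disjunct $M_{\RowsP}(\suffix)$, which can be a genuinely new conjunction of basis rows (say $r_2\wedge r_3$ next to an old $r_1\wedge r_2$). This changes the transition semantically while splitting no rows and possibly leaving $\Basis$ closed and unchanged.

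Counting ``row-partition/basis states'' by $2^n$ is also wrong: partitions of $n$ Nerode classes alone already exceed $2^n$ for $n\ge 5$. And ``none repeats'' is exactly what remains unproved. On the row side, the only monotone bounded quantity is $|\mathrm{Rows}(\PrefixSet)|\le n$, and it does not capture the change described above.

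\textbf{The missing idea: measure progress on columns, not rows.} It is suffixes, not rows, that correspond to subsets of the $n$ Nerode classes. Fix representatives $\word_1,\dots,\word_n$ of the classes of $\nerode{\targetLg}$. Give each $\suffix\in\SuffixSet$ the type $t(\suffix)\in\{\top,\bot\}^n$, where $t(\suffix)(i)=\top$ iff $\word_i\cdot\suffix\in\targetLg$. Since $\SuffixSet$ only grows, the set of types seen only grows, and it has at most $2^n$ elements.

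The key lemma is then the following. Suppose processing $\cex$ adds no new type, and no new letter $(\action,\delay)$ with $\delay\le K$ (nor a first one with $\delay>K$ for $\action$). Then:
\begin{itemize}
\item each new column agrees with an old column on every timed word;
\item each new evidence row equals an existing row, by the second condition of \cref{theorem:myhill_nerode};
\item hence the old $\Basis$ still closes the new table, and the ARTA built from the new table with the old $\Basis$ is semantically the old hypothesis.
\end{itemize}
This contradicts \cref{theorem:faithfulness} on $\normalize(\cex)$, so every counterexample yields a new type or a new relevant letter. Because the argument reasons about the ARTA built with the old, still valid basis, it does not matter which basis the algorithm actually selects. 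So cycling among bases never has to be ruled out.

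\textbf{Membership bookkeeping.} Separately, your bound on $|\PrefixSet|$ is too loose. Bases may differ between rounds, so each of the at most $M$ rounds can add up to $|\PhiTable|\cdot n\le Mhn$ prefixes. This gives $|\PrefixSet|\le M^2hn+1$, and the stated bound is the expansion of $(M^2hn+1)(Mh+1)$.
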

\begin{proof}[sketch]
 \textbf{Number of equivalence queries.}
 For each suffix $\suffix \in \SuffixSet$, we let $t(\suffix) \in \{\bot,\top\}^n$ as follows.
 Let $\word_1, \word_2, \dots, \word_n \in \TimedWords$ such that for each $i \neq j$, $\word_i \not\nerode{}\word_j$.
 Then, we let $t(\suffix)(i) = \top \iff \word_i \cdot \suffix \in \targetLg$.
 Since there are at most $2^n$ distinct $t(\suffix)$, discovery of a new $t(\suffix)$ occurs at most $2^n$ times in \cref{algorithm:ALstarRTA}.

 Assume the counterexample $\cex$ returned by an equivalence query does not reveal a new $t(\suffix)$.
 Still, processing $\normalize(\cex)$ must change the observation table so that $\normalize(cex)$ is not a counterexample to the next hypothesis ARTA because
 we have $\normalize(\cex)$ as a suffix, and thus, the next hypothesis must classify $\normalize(\cex)$ correctly by \cref{theorem:faithfulness,theorem:normalized_counterexample}.
 In such a case, we must discover a new $(\action, \delay)$ in $\letters{\PrefixSet \cup \SuffixSet}$ such that either $\delay \leq K$ or $\delay > K$ and there is no $(\action, \delay') \in \letters{\PrefixSet \cup \SuffixSet}$ with $\delay' > K$; otherwise, $\cex$ may still be a counterexample to the hypothesis ARTA after processing it.
 The number of such refinements of $\letters{\PrefixSet \cup \SuffixSet}$ by equivalence queries is at most $|\Alphabet| (2K + 2)$.
 Overall, the number of equivalence queries is bounded by $2^n + |\Alphabet| (2K + 2) + 1$.

 \noindent
 \textbf{Number of membership queries.}
 After each failed equivalence query, at most $h$ new suffixes are added.
 Thus, we have $|\SuffixSet| \leq M h + 1$.
 Moreover, each failed equivalence query increases $\letters{\PrefixSet \cup \SuffixSet}$ at most by $h$, and we have
 $|\letters{\PrefixSet \cup \SuffixSet}| \leq M h$.
 Since $\Basis$ is a monotone basis of $\Table$, $|\Basis| = |\mathrm{Rows}(\Basis)| \leq |\mathrm{Rows}(\PrefixSet)| \leq n$ follows from \cref{lemma:row_bound}.
 $\PrefixSet$ increases only to ensure evidence-closedness at \cref{algorithm:ALstarRTA:evidence_closed}.
 By \cref{lemma:row_bound}, there are at most $n$ distinct row vectors in the observation tables appearing in this computation.
 By determinism of the mapping from $\mathrm{Rows}(\PrefixSet)$ to $\PrefixSet$ in the monotone basis identification,
 there are at most $n$ distinct $\prefix \in \PrefixSet$ in the monotone bases between each equivalence query.
 Thus, between each equivalence query, 
 $\PrefixSet$ increases at most by $|\letters{\PrefixSet \cup \SuffixSet}| \times n \leq M h n$, and we have $|\PrefixSet| \leq M^2 hn + 1$.
 Overall, the number of membership queries is bounded by
 $|\PrefixSet| \times |\SuffixSet| \leq (M^2hn + 1) \times (Mh + 1) = M^3h^2n + M^2hn + Mh + 1$.
 \qed{}
\end{proof}

In \cref{theorem:complexity_analysis}, we have an exponential query bound for \ALstarRTA{}, while it is polynomial for \NLstarRTA{}.
Nevertheless, in \cref{section:experiments}, we empirically observe that the difference in the number of equivalence queries is moderate in practice.

\section{Empirical evaluation}\label{section:experiments}

We empirically compare our \ALstarRTA{} algorithm with the \NLstarRTA{} algorithm.
We use our prototype library \ourTool{}\footnote{\ourTool{} is publicly available at \url{https://github.com/MasWag/LearnARTA}. The artifact is available at \url{https://doi.org/10.5281/zenodo.19650471}.} as an implementation of \ALstarRTA{} and \textsc{NRTALearning}~\cite{Leslieaj/NRTALearning} as an implementation of \NLstarRTA{}.
In \ourTool{}, monotone basis identification is implemented by approximately solving the BIP from \cref{section:exact_basis_milp}.
In both implementations, each equivalence query is answered by an exhaustive analysis, such as a reachability analysis.

We used the 190 randomly generated NRTAs taken from~\cite{DBLP:journals/tecs/AnZZZ21}.
The benchmark consists of 17 groups, each with different numbers of locations and different alphabet sizes.
The constant appearing in the guards is at most 20.

For each execution, we measured 
\begin{myitemize}
 \item the number of membership and equivalence queries and
 \item the total execution time, including the time to answer the queries.
\end{myitemize}
For membership queries, we count the number of unique queried timed words.
We conducted all the experiments on a computing server running Ubuntu 24.04 LTS with an Intel Xeon w5-3435X and 125 GiB of RAM.
\begin{table}[t]
  \centering
  \small
  \caption{Summary of the results of experiments. We show the mean value for each group. The columns ``\# EqQ'' and ``\# MemQ'' show the number of equivalence and membership queries, respectively. The columns ``$|\Loc_{\hypothesisA}|$'' show the number of locations of the learned automaton.\@ The columns ``Total Time'' show the total execution time including the time to answer queries.}
  \label{table:summary_results}
  \begin{tabular}{lrrrrrrrr}
    \toprule
    \multirow{2}{*}{$(|\Loc|,|\Alphabet|)$} & \multicolumn{4}{c}{\ALstarRTA{}} & \multicolumn{4}{c}{\NLstarRTA{}} \\
    \cmidrule(lr){2-5}\cmidrule(lr){6-9}
 & \# EqQ & \# MemQ & $|\Loc_{\hypothesisA}|$ & Total Time & \# EqQ & \# MemQ & $|\Loc_{\hypothesisA}|$ & Total Time \\
\midrule
(3,2) & 13.1 & 975.05 & \tbcolor{}3 & \tbcolor{}0:00.01 & \tbcolor{}10.9 & \tbcolor{}303.75 & 4 & 0:00.07 \\
(4,2) & 17.15 & 2449.9 & \tbcolor{}4.05 & \tbcolor{}0:00.02 & \tbcolor{}15.65 & \tbcolor{}810.4 & 5.05 & 0:00.12 \\
(5,2) & 22.6 & 6280.5 & \tbcolor{}5 & \tbcolor{}0:00.06 & \tbcolor{}20.7 & \tbcolor{}1279.1 & 6 & 0:00.17 \\
(6,2) & \tbcolor{}22.3 & 10440.1 & \tbcolor{}6.1 & \tbcolor{}0:00.18 & 22.4 & \tbcolor{}1737.5 & 7.2 & 0:00.31 \\
(8,2) & 31.3 & 28058.3 & \tbcolor{}8 & 0:00.77 & \tbcolor{}27.6 & \tbcolor{}3093.6 & 9.1 & \tbcolor{}0:00.45 \\
(8,4) & 48.4 & 55362.2 & \tbcolor{}8 & \tbcolor{}0:00.66 & \tbcolor{}46.4 & \tbcolor{}8449.5 & 9.1 & 0:02.05 \\
(10,2) & 38.2 & 50558.9 & \tbcolor{}10 & 0:01.61 & \tbcolor{}36.4 & \tbcolor{}5665.3 & 11.1 & \tbcolor{}0:01.02 \\
(10,4) & 52.4 & 106987.2 & \tbcolor{}10 & \tbcolor{}0:02.28 & \tbcolor{}49.2 & \tbcolor{}9664.4 & 11.1 & 0:02.42 \\
(10,6) & 71.7 & 171537.9 & \tbcolor{}10 & \tbcolor{}0:02.92 & \tbcolor{}62.8 & \tbcolor{}17346.3 & 11 & 0:06.16 \\
(10,8) & 97 & 443801.3 & \tbcolor{}10 & 0:20.75 & \tbcolor{}83.1 & \tbcolor{}20298.7 & 11 & \tbcolor{}0:07.55 \\
(10,10) & 114.2 & 723024.4 & \tbcolor{}10 & 0:33.37 & \tbcolor{}92 & \tbcolor{}26558.5 & 11 & \tbcolor{}0:13.48 \\
(12,2) & 56 & 144836.1 & \tbcolor{}12.3 & 0:08.06 & \tbcolor{}48.4 & \tbcolor{}9245.6 & 13.3 & \tbcolor{}0:02.13 \\
(12,4) & 70 & 295512.5 & \tbcolor{}12 & 0:06.59 & \tbcolor{}66 & \tbcolor{}14968.3 & 13 & \tbcolor{}0:04.39 \\
(14,4) & 81.7 & 590253.4 & \tbcolor{}14.1 & 0:40.19 & \tbcolor{}74 & \tbcolor{}20697.2 & 15.1 & \tbcolor{}0:07.69 \\
(16,4) & 91.4 & 812425.1 & \tbcolor{}16 & 1:39.27 & \tbcolor{}85.5 & \tbcolor{}41004.4 & 17.1 & \tbcolor{}0:28.44 \\
(18,4) & 100.7 & 1136254 & \tbcolor{}18.3 & 2:08.25 & \tbcolor{}97.5 & \tbcolor{}35404 & 19.4 & \tbcolor{}0:18.07 \\
(20,4) & 105.7 & 1122431.9 & \tbcolor{}19.8 & 1:01.14 & \tbcolor{}103.9 & \tbcolor{}56806.9 & 21 & \tbcolor{}0:46.75 \\
\bottomrule
\end{tabular}
\end{table}

\paragraph{Results and discussion.}
\cref{table:summary_results} summarizes the results.
\LongVersion{In \cref{table:summary_results}, we}\ShortVersion{We} observe that \ALstarRTA{} returned a smaller automaton than \NLstarRTA{}.
This is consistent with the better worst-case succinctness of ARTAs (\cref{theorem:arta_drta_succinctness}) than NRTAs~\cite{DBLP:journals/tecs/AnZZZ21}.

In contrast, \ALstarRTA{} almost always used more queries than \NLstarRTA{}.
This is consistent with the theoretical analysis in \cref{theorem:complexity_analysis}.
The increase in the number of equivalence queries is also reported in untimed cases~\cite{DBLP:conf/ijcai/AngluinEF15}, where \ALstar{} required more equivalence queries than \NLstar{} and \Lstar{}.
This increase is particularly evident when the alphabet is large.
This is likely because in \ALstarRTA{}, the set of prefixes is initialized with $\{\emptyword\}$ rather than $\{\emptyword\} \cup \{(a, 0)\mid a \in \Alphabet\}$, which prevents irrelevant prefixes but often requires more equivalence queries.

The increase in the number of membership queries is partly due to the different counterexample handling strategies:
\NLstarRTA{} uses an efficient counterexample handling based on Rivest and Schapire's algorithm~\cite{DBLP:journals/iandc/RivestS93}, which reduces the size of observation tables and reduces the number of membership queries.
In contrast, \ALstarRTA{} uses a simple counterexample handling, much like \NLstar{}.

\LongVersion{We also observe that }\ALstarRTA{} often took longer time than \NLstarRTA{}.
This is likely due to the difficulty of basis minimization in \ALstarRTA{}, in addition to the larger number of queries.

\section{Related work}\label{section:related_work}
\paragraph{Learning nondeterministic or alternating automata.}

\NLstar{}~\cite{DBLP:conf/ijcai/BolligHKL09} is an \Lstar{}-style algorithm for learning residual NFAs, which is a subclass of NFAs with some desired properties of DFAs.
Thanks to the succinctness of residual NFAs, \NLstar{} can learn smaller automata than \Lstar{}.
\ALstar{}~\cite{DBLP:conf/ijcai/AngluinEF15} is an extension of \NLstar{} to learn an AFA rather than an NFA.
Although the authors conjectured that \ALstar{} produces residual AFAs, it is later shown that \ALstar{} does not always produce a residual AFA~\cite{DBLP:journals/iandc/BerndtLLR22}, and \ALdstar{}, a modification of \ALstar{} to learn residual AFAs, is proposed.
Our \ALstarRTA{} is mainly based on \ALstar{} rather than \ALdstar{} because we aim at identifying smaller ARTAs that are not necessarily residual.

\paragraph{Learning timed languages.}

\Lstar{}-style learning algorithms have been proposed for various variants of timed automata, including
one-clock deterministic timed automata~\cite{DBLP:conf/tacas/AnCZZZ20,DBLP:conf/atva/XuAZ22},
(multi-clock) deterministic timed automata~\cite{DBLP:conf/cav/Waga23,DBLP:conf/hybrid/TengZ024,DBLP:journals/chinaf/TengCMZAZ25},
Mealy machines with timers~\cite{DBLP:conf/icfem/KogelKG23,DBLP:conf/qestformats/KogelSG24,DBLP:conf/qestformats/BruyereGPSV25},
event-recording automata~\cite{DBLP:journals/tcs/GrinchteinJL10}, and
RTAs~\cite{DBLP:journals/chinaf/AnWZZZ21,DBLP:journals/tecs/AnZZZ21}.
As we discussed in \cref{section:introduction}, most of these algorithms focus on learning deterministic automata.
Extending these algorithms to learn automata other than RTAs with nondeterministic or alternating branching is a future direction.

\paragraph{Learning symbolic automata.}

Symbolic automata~\cite{DBLP:conf/popl/DAntoniV14} generalize classical automata so that each transition is labeled with a predicate rather than a letter.
Notably, symbolic automata capture real-time automata.
So far, no algorithm has been proposed for learning nondeterministic or alternating symbolic automata.
Extending learning algorithms for symbolic automata, such as $\Lambda^*$~\cite{DBLP:conf/tacas/DrewsD17} and $\mathit{MAT}^*$~\cite{DBLP:conf/cav/ArgyrosD18}, or generalizing the theoretical analysis in~\cite{DBLP:journals/lmcs/FismanFZ23} to support nondeterministic or alternating branching is another future direction.

\section{Conclusions and future directions}\label{section:conclusions}

We studied ARTAs, especially focusing on an \Lstar{}-style learning algorithm for them.
Although the introduction of alternating branching does not increase expressive power, it improves succinctness.
Our \ALstarRTA{} algorithm can learn ARTAs with a termination guarantee.
Our empirical evaluation suggests that \ALstarRTA{} generally learns smaller automata than \NLstarRTA{} at the cost of more queries.

Investigating efficient counterexample handling for ARTAs, particularly along the lines of Rivest and Schapire, is a future direction.
Generalizing alternating automata learning for other subclasses of timed automata, \eg{} event-recording automata, is another future direction.

\subsubsection*{Acknowledgements.}

This work is partially supported
    by
    JST BOOST Grant No.\ JPMJBY24H8,
    JST PRESTO Grant No.\ JPMJPR22CA, and
    JSPS KAKENHI Grant No.\ 22K17873.

\clearpage

\bibliographystyle{splncs04}
\bibliography{references}

\appendix
\LongVersion{%

\section{Omitted proofs}
\subsection{Proof of \cref{theorem:expressive_power}}

\recallResult{theorem:expressive_power}{\expressivePowerStatement}

\begin{proof}
The first direction is immediate because a DRTA is a special case of an ARTA, where all transition targets are atomic locations.

For the converse, let $\A=(\Loc,\Alphabet,\InitLoc,\Final,\Edge)$ be an ARTA and let $K \in \setN$ be the maximum integer constant that appears in the guards of $\Edge$.
Let $n = |\Loc|$.
Let $\mathrm{Reg}$ be the set of regions for $K$.
Let $\A_{\mathrm{AFA}}$ be the AFA $\A_{\mathrm{AFA}} = (\Loc, (\Alphabet \times \mathrm{Reg}), \InitLoc, \Final, \transition_{\mathrm{AFA}})$ with a transition function $\transition_{\mathrm{AFA}}\colon \Loc \times (\Alphabet\times\mathrm{Reg}) \to \Boolean{\Loc}$ defined as
\[
  \transition_{\mathrm{AFA}}(\loc,(\action,\rho)) =
  \begin{cases}
    \formula & \text{if there is }(\loc,\action,I,\formula)\in\Edge\text{ with }\rho \subseteq I,\\
    \bot & \text{otherwise.}
  \end{cases}
\]
This is well-defined because for any interval $I$ with integer endpoints and any $\delay_1,\delay_2\in\Rnn$ with $\llbracket\delay_1\rrbracket=\llbracket\delay_2\rrbracket$, we have $\delay_1\in I \iff \delay_2\in I$.
Moreover, the determinism condition on $\Edge$ ensures that the target formula is unique.
For any timed word $\word=(\action_1,\delay_1), (\action_2, \delay_2), \dots, (\action_m,\delay_m)$, 
we have $\word \in \Lg(\A)\iff \mathrm{reg}(\word)\in\Lg(\A_{\mathrm{AFA}})$, where 
$\mathrm{reg}(\word)=(\action_1,\llbracket\delay_1\rrbracket), (\action_2, \llbracket\delay_2\rrbracket),\dots,(\action_m,\llbracket\delay_m\rrbracket)$.

For any AFA with $n$ locations, there is a DFA $\mathcal{D}$ with at most $2^{2^{n}}$ locations satisfying $\Lg(\mathcal{D})=\Lg(\A_{\mathrm{AFA}})$ (\eg{} \cite[Theorem~5.2]{DBLP:journals/jacm/ChandraKS81}).
We convert $\mathcal{D}$ into a DRTA $\TA_{\mathcal{D}}$ by replacing each transition $q \xrightarrow{(\action,\rho)} q'$ over $\Alphabet \times \mathrm{Reg}$ with a transition $q \xrightarrow{\action, \rho} q'$.
Because regions $\rho\in\mathrm{Reg}$ are pairwise disjoint, $\TA_{\mathcal{D}}$ is deterministic.
Thus, $\TA_{\mathcal{D}}$ has at most $2^{2^{n}}$ locations.
By construction, for any timed word $\word$, we have
\[
  \word\in\Lg(\TA_{\mathcal{D}})
  \iff \mathrm{reg}(\word)\in\Lg(\mathcal{D})
  \iff \mathrm{reg}(\word)\in\Lg(\A_{\mathrm{AFA}})
  \iff \word\in\Lg(\A).
\]
\qed{}
\end{proof}
\subsection{Proof of \cref{theorem:arta_drta_succinctness}}

\recallResult{theorem:arta_drta_succinctness}{\succinctnessStatement}

\begin{proof}
By the classical succinctness results for AFAs~\cite{DBLP:journals/tcs/Leiss81,DBLP:journals/tcs/Leiss85a}, there exist a constant $c_0 > 0$ and a family of AFAs $(\mathcal{B}_k)_{k \geq 1}$ with at most $k$ locations such that for any sufficiently large $k$,
any DFA recognizing $\Lg(\mathcal{B}_k)$ has at least $2^{2^{c_0 k}}$ locations.
Write $\mathcal{B}_k = (\Loc_k,\Alphabet,\InitLoc^{(k)},\Final_k,\transition_k)$, where $\transition_k\colon \Loc_k\times\Alphabet\to\Boolean{\Loc_k}$.

For each $k$, we define an ARTA $\TA_k=(\Loc_k,\Alphabet,\InitLoc^{(k)},\Final_k,\Edge_k)$ with
$\Edge_k = \{ (\loc,\action,[0,\infty),\transition_k(\loc,\action)) \mid \loc\in\Loc_k, \action \in \Alphabet\}$.
Such ARTAs satisfy $|\Loc_k| \leq k$ and all guards are $[0,\infty)$.
By construction, for any $\loc\in\Loc_k$, $\action\in\Alphabet$, and $\delay\in\Rnn$, we have
$\transition_{\TA_k}(\loc,(\action,\delay))=\transition_k(\loc,\action)$, where $\transition_{\TA_{k}}$ is the transition function induced by $\Edge_k$.
For a word $\word = \action_1, \action_2, \cdots, \action_m \in \Alphabet^*$, we let
$\tilde{\word} = (\action_1, 0), (\action_2, 0), \dots, (\action_m,0) \in \TimedWords$.
By induction on $|\word|$, for any $\word\in\Alphabet^*$, we have $\word\in\Lg(\mathcal{B}_k) \iff \tilde{\word}\in\Lg(\TA_k)$.

Let $\mathcal{D}_k=(\Loc^{\mathcal{D}}_k,\Alphabet,\loc^{\mathcal{D}}_{0,k}, \Final^{\mathcal{D}}_{k},\Edge^{\mathcal{D}}_{k})$ be a DRTA that satisfies $\Lg(\mathcal{D}_k)=\Lg(\TA_k)$.
We let $\mathcal{M}_k$ be a DFA $\mathcal{M}_k = (\Loc^{\mathcal{D}}_k \cup \{\loc_{\mathit{sink}}\}, \Alphabet, \loc^{\mathcal{D}}_{0,k}, \Final^{\mathcal{D}}_k, \transition^{\mathcal{D}}_{k})$, where
\[
  \transition^{\mathcal{D}}_{k}(\loc,\action) =
  \begin{cases}
    \loc' & \text{if $(\loc,\action,I,\loc')\in\Edge^{\mathcal{D}}_{k}$ and $0\in I$ for some $I\in\Intervals$,}\\
    \loc_{\mathit{sink}} & \text{otherwise,}
  \end{cases}
\]
and $\transition^{\mathcal{D}}_{k}(\loc_{\mathit{sink}},\action) = \loc_{\mathit{sink}}$.
$\transition^{\mathcal{D}}_{k}(\loc,\action)$ is well-defined because $\mathcal{D}_k$ is deterministic.
By induction on $|\word|$, for any $\word\in\Alphabet^*$, we have
$\word\in\Lg(\mathcal{M}_{k}) \iff \tilde{\word} \in \Lg(\mathcal{D}_k)$.

Overall, we have $\Lg(\mathcal{M}_{k})=\Lg(\mathcal{B}_k)$.
Therefore, $|\Loc^{\mathcal{D}}_k| + 1 \geq 2^{2^{c_0 k}}$ for any sufficiently large $k$.
Then, for any constant $c < c_0$, for any sufficiently large $k$, we have $|\Loc^{\mathcal{D}}_k|\ge 2^{2^{c k}}$.
\qed{}
\end{proof}
\subsection{Proof of \cref{theorem:faithfulness}}

Before proving \cref{theorem:faithfulness}, we show some lemmas.

\begin{lemma}%
 \label{lemma:basis_row_correctness}
 Assume $\Table$ with index sets $\PrefixSet, \SuffixSet \subseteq \TimedWords$ and monotone basis $\Basis\subseteq \PrefixSet$ is cohesive.
 Let $\evidenceA$ be the evidence AFA of $\Table$.
 For any $\basis\in\Basis$ and
 for any $\suffix\in\SuffixSet$, we have
 $r(\suffix)=\top \iff E_{\evidenceA}(\transition_{\evidenceA}^*(r,\suffix))=\top$,
 where $r=\TableRow{\basis}\in\RowsP$.
\end{lemma}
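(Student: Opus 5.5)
The plan is to prove a stronger, formula-level statement by induction on the length of the suffix. For a location formula $\formula \in \Boolean{\RowsP}$, write $\llbracket \formula \rrbracket \in \{\top,\bot\}^{\SuffixSet}$ for the row vector obtained by evaluating $\formula$ pointwise, with each location $r\in\RowsP$ read as the row vector $r$. The claim I would establish is
\[
  E_{\evidenceA}\bigl(\transition_{\evidenceA}^*(\formula,\suffix)\bigr)=\llbracket\formula\rrbracket(\suffix)
  \quad\text{for all } \formula\in\Boolean{\RowsP},\ \suffix\in\SuffixSet .
\]
Taking $\formula=r=\TableRow{\basis}$ gives exactly the lemma. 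Because $\transition_{\evidenceA}^*$ and $E_{\evidenceA}$ both commute with $\land$, $\lor$, $\top$ and $\bot$, it suffices to prove the claim for atomic $\formula=r\in\RowsP$. The other cases then follow by a routine structural induction on $\formula$, for each fixed $\suffix$.

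The first ingredient is an exactness property of the decomposition: for every $x\in\Boolean{\RowsP}$ (viewed as a set of row vectors), $\llbracket b_{\RowsP}(x)\rrbracket = x$. I expect this to be the main point requiring care. I would argue the two inequalities separately.

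For $\llbracket b_{\RowsP}(x)\rrbracket \ge x$: if $x(\suffix)=\top$, then $M_{\RowsP}(\suffix)$ is a disjunct of $b_{\RowsP}(x)$. It is a conjunction of rows that are all $\top$ at $\suffix$, so it evaluates to $\top$ at $\suffix$.

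For $\llbracket b_{\RowsP}(x)\rrbracket \le x$: write $x$ in disjunctive normal form over $\RowsP$ (constants $\top$ and $\bot$ handled as the empty conjunction and the empty disjunction, respectively). If $x(\suffix)=\top$, some conjunct $C$ is $\top$ at $\suffix$. Every row occurring in $C$ is $\top$ at $\suffix$, so all of them appear in $M_{\RowsP}(\suffix)$. Hence $\llbracket M_{\RowsP}(\suffix)\rrbracket\le \llbracket C\rrbracket\le x$ pointwise. Each disjunct of $b_{\RowsP}(x)$ is therefore below $x$, and so is their disjunction.

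Now I would induct on $|\suffix|$ for atomic $r=\TableRow{\basis}$.

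Base case $\suffix=\emptyword$: $E_{\evidenceA}(r)=\top$ iff $r\in\Final_{\evidenceA}$ iff $r(\emptyword)=\top$, which is the definition of $\Final_{\evidenceA}$.

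Step $\suffix=\sigma\cdot\suffix'$:
\begin{itemize}
  \item Since $\SuffixSet$ is suffix-closed, $\suffix'\in\SuffixSet$, and $\sigma\in\PhiTable$.
  \item By evidence-closedness, $\basis\cdot\sigma\in\PrefixSet$, so $\transition_{\evidenceA}(r,\sigma)=b_{\RowsP}(\TableRow{\basis\cdot\sigma})$ is defined.
  \item Then $\transition_{\evidenceA}^*(r,\suffix)=\transition_{\evidenceA}^*\bigl(b_{\RowsP}(\TableRow{\basis\cdot\sigma}),\suffix'\bigr)$.
  \item The induction hypothesis for $\suffix'$, lifted to arbitrary formulas as above, gives $E_{\evidenceA}$ of this equal to $\llbracket b_{\RowsP}(\TableRow{\basis\cdot\sigma})\rrbracket(\suffix')$.
  \item By $\Basis$-closedness, $\TableRow{\basis\cdot\sigma}\in\Boolean{\RowsP}$. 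Exactness then rewrites the value as $\TableRow{\basis\cdot\sigma}(\suffix')$.
  \item This is $\top$ iff $\basis\cdot\sigma\cdot\suffix'\in\targetLg$, i.e.\ iff $\TableCell{\basis}{\sigma\cdot\suffix'}=\top$, i.e.\ iff $r(\suffix)=\top$.
\end{itemize}

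Besides exactness, the remaining subtlety is bookkeeping: the induction must be over formulas rather than bare locations, because one transition step already produces a compound formula.
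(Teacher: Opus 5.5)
Your proof is correct and uses the same ingredients as the paper's: induction on $|\suffix|$, with suffix-closedness, evidence-closedness, and $\Basis$-closedness invoked at the same points. The difference is in how the step is organized.

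The paper stays at the level of single locations and proves the two directions of the step separately. For ($\Rightarrow$) it applies the induction hypothesis to each conjunct of the disjunct $M_{\RowsP}(\suffix')$ of $b_{\RowsP}(\TableRow{\basis\cdot\sigma})$. For ($\Leftarrow$) it observes that every disjunct $M_{\RowsP}(\suffix'')$ contains a conjunct $\rho$ with $\rho(\suffix'')=\top$ and $\rho(\suffix')=\bot$, and applies the induction hypothesis to that $\rho$. Such a separator exists because $\TableRow{\basis\cdot\sigma}\in\Boolean{\RowsP}$ and positive Boolean combinations are monotone.

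You instead isolate the automaton-independent identity $\llbracket b_{\RowsP}(x)\rrbracket=x$ for $x\in\Boolean{\RowsP}$. Its two halves are exactly these two combinatorial facts, except that you obtain the separator from a DNF of $x$ rather than from monotonicity. You then carry the formula-level invariant $E_{\evidenceA}(\transition_{\evidenceA}^*(\formula,\suffix))=\llbracket\formula\rrbracket(\suffix)$ through the induction.

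The paper's route avoids the extra semantic bracket and the structural lifting step. Yours needs that small amount of setup but gets more out of it. Taking $\formula=b_{\RowsP}(x)$ in your invariant and applying exactness yields \cref{lemma:decomposition_correctness} immediately. The paper instead proves that lemma separately by repeating essentially the same separator argument.
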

\begin{proof}
 We prove by induction on the length of $\suffix$.

 When $\suffix=\emptyword$,
 from the definition of accepting locations of $\evidenceA$, $r$ is accepting if and only if $r(\emptyword)=\top$.
 Thus, $E_{\evidenceA}(\transition_{\evidenceA}^*(r,\emptyword))=\top \iff r(\emptyword)=\top$ holds.

 Assume $\suffix \neq \emptyword$ and
 fix $\suffix=\sigma\cdot \suffix'$ with $\sigma\in\PhiTable$ and $\suffix'\in\SuffixSet$.
 Since $\Table$ is evidence-closed, we have $\basis\cdot \sigma\in\PrefixSet$.
 Thus, the row vector $\TableRow{\basis\cdot\sigma}$ is defined.
 Moreover, $\suffix'\in\SuffixSet$ holds by suffix-closedness of $\SuffixSet$.
 We have $r(\sigma\cdot \suffix') = \TableRow{\basis}(\sigma\cdot \suffix') = \TableCell{\basis}{\sigma\cdot \suffix'} = \TableCell{\basis\cdot \sigma}{\suffix'} = \TableRow{\basis\cdot\sigma}(\suffix')$.
 By \cref{definition:evidence_afa}, the one-step transition on $\sigma$ satisfies $\transition_{\evidenceA}(r,\sigma)=b_{\RowsP}(\TableRow{\basis\cdot\sigma})$, where
 \begin{equation}
  b_{\RowsP}(x)=\bigvee_{\substack{\suffix''\in\SuffixSet\\x(\suffix'')=\top}} M_{\RowsP}(\suffix'')
   \qquad\text{and}\qquad
   M_{\RowsP}(\suffix'')=\bigwedge_{\substack{\rho\in\RowsP\\\rho(\suffix'')=\top}}\rho.\label{eq:lemma:basis_row_correctness_1}
 \end{equation}

 \emph{($\Rightarrow$).}
 Assume $r(\sigma\cdot \suffix')=\top$.
 Then, we have $\TableRow{\basis\cdot\sigma}(\suffix')=\top$, and
 by \cref{eq:lemma:basis_row_correctness_1},
 the conjunction $M_{\RowsP}(\suffix')$ appears as a disjunct of $b_{\RowsP}(\TableRow{\basis\cdot\sigma})$, and hence as a disjunct of $\transition_{\evidenceA}(r,\sigma)$.
 Therefore, $E_{\evidenceA}(\transition_{\evidenceA}^*(M_{\RowsP}(\suffix'), \suffix')) = \top$ implies $E_{\evidenceA}(\transition_{\evidenceA}^*(\transition_{\evidenceA}(r,\sigma), \suffix')) = \top$.
 Every conjunct $\rho$ of $M_{\RowsP}(\suffix')$ satisfies $\rho(\suffix')=\top$.
 By the induction hypothesis, for any such $\rho$, we have $E_{\evidenceA}(\transition_{\evidenceA}^*(\rho,\suffix'))=\top$.
 This implies 
 \[
  E_{\evidenceA}(\transition_{\evidenceA}^*(r,\suffix)) = E_{\evidenceA}(\transition_{\evidenceA}^*(r,\sigma \cdot \suffix')) = E_{\evidenceA}(\transition_{\evidenceA}^*(\transition_{\evidenceA}(r,\sigma), \suffix')) = \top.
 \]

 \emph{($\Leftarrow$).}
 We prove the contrapositive.
 Assume $r(\sigma\cdot \suffix')=\bot$.
 Then, we have $\TableRow{\basis\cdot\sigma}(\suffix')=\bot$.
 Let $x= \TableRow{\basis\cdot\sigma}\in\{\top,\bot\}^{\SuffixSet}$.
 Since $\basis\cdot\sigma\in\PrefixSet$ and $\Table$ is $\Basis$-closed, we have $x\in\Boolean{\RowsP}$. 
 Consider any disjunct $M_{\RowsP}(\suffix'')$ of $b_{\RowsP}(x)$.
 By \cref{eq:lemma:basis_row_correctness_1}, we have $x(\suffix'')=\top$.
 Since $x(\suffix')=\bot$ and $x\in\Boolean{\RowsP}$, there is $\rho\in\RowsP$ satisfying $\rho(\suffix'')=\top$ and $\rho(\suffix')=\bot$.
 By the induction hypothesis, we have $E_{\evidenceA}(\transition_{\evidenceA}^*(\rho,\suffix'))=\bot$.
 Since $\rho$ is a conjunct of $M_{\RowsP}(\suffix'')$, we have $E_{\evidenceA}(\transition_{\evidenceA}^*(M_{\RowsP}(\suffix''),\suffix'))=\bot$.
 As the argument holds for every disjunct of $b_{\RowsP}(x)$, we have 
 $E_{\evidenceA}(\transition_{\evidenceA}^*(\transition_{\evidenceA}(r, \sigma),\suffix'))=\bot$.
 \qed{}
\end{proof}
\begin{lemma}%
 \label{lemma:decomposition_correctness}
 Assume $\Table$ with index sets $\PrefixSet, \SuffixSet \subseteq \TimedWords$ and monotone basis $\Basis\subseteq \PrefixSet$ is cohesive.
 Let $\evidenceA$ be the evidence AFA of $\Table$.
 Let $x\in\{\top,\bot\}^{\SuffixSet}$ be a row vector satisfying $x\in\Boolean{\RowsP}$.
 For any $\suffix\in\SuffixSet$, $x(\suffix)=\top \iff E_{\evidenceA}(\transition_{\evidenceA}^*(b_{\RowsP}(x),\suffix))=\top$ holds.
\end{lemma}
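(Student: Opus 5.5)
We fix $x\in\Boolean{\RowsP}$ and $\suffix\in\SuffixSet$, and we will not need any induction on the length of $\suffix$. The whole argument reduces to \cref{lemma:basis_row_correctness}, applied to every basis row $\rho\in\RowsP$ at the fixed suffix $\suffix$. Since each $\rho\in\RowsP$ equals $\TableRow{\basis}$ for some $\basis\in\Basis$, that lemma gives $\rho(\suffix)=\top\iff E_{\evidenceA}(\transition_{\evidenceA}^*(\rho,\suffix))=\top$.

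Next we use the fact that $\transition^*_{\evidenceA}(\cdot,\suffix)$ distributes over $\land$ and $\lor$, and that $E_{\evidenceA}$ is a Boolean homomorphism. Together these give
\[
E_{\evidenceA}(\transition_{\evidenceA}^*(b_{\RowsP}(x),\suffix))=\bigvee_{\suffix''\in\SuffixSet,\,x(\suffix'')=\top}\ \bigwedge_{\rho\in\RowsP,\,\rho(\suffix'')=\top} E_{\evidenceA}(\transition_{\evidenceA}^*(\rho,\suffix)).
\]
The constant cases ($\top$ for an empty conjunction, $\bot$ for an empty disjunction) are preserved by $\transition^*$ and by $E$, so they cause no trouble. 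Substituting the equivalence from \cref{lemma:basis_row_correctness}, the right-hand side becomes $\bigvee_{x(\suffix'')=\top}\bigwedge_{\rho(\suffix'')=\top}\rho(\suffix)$. This is exactly $b_{\RowsP}(x)(\suffix)$, where $b_{\RowsP}(x)$ is evaluated as a row vector.

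It therefore remains to show that $b_{\RowsP}(x)=x$ as row vectors whenever $x\in\Boolean{\RowsP}$. We check the two inclusions pointwise at $\suffix$.

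\emph{($\Rightarrow$).} Suppose $x(\suffix)=\top$. Then the disjunct indexed by $\suffix''=\suffix$ is present. Each conjunct $\rho$ of $M_{\RowsP}(\suffix)$ satisfies $\rho(\suffix)=\top$, so that disjunct evaluates to $\top$.

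\emph{($\Leftarrow$).} Suppose $x(\suffix)=\bot$, and consider any disjunct, indexed by some $\suffix''$ with $x(\suffix'')=\top$. We need some $\rho\in\RowsP$ with $\rho(\suffix'')=\top$ and $\rho(\suffix)=\bot$. The direction of \cref{proposition:milp_basis_encoding} from monotone basis to covering provides such a $\rho$, applied with $R=\RowsP\cup\{x\}$ and $U=\RowsP$. Alternatively, one can argue directly. Positive Boolean combinations are monotone, so if every $\rho$ with $\rho(\suffix'')=\top$ also had $\rho(\suffix)=\top$, then every element of $\Boolean{\RowsP}$ that is true at $\suffix''$ would be true at $\suffix$. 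This follows by structural induction on the expression, and it would contradict $x(\suffix'')=\top$ and $x(\suffix)=\bot$.

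The main obstacle is this covering step. It is precisely where the hypothesis $x\in\Boolean{\RowsP}$ is used, and it mirrors the $(\Leftarrow)$ direction of \cref{lemma:basis_row_correctness}. Apart from this step, the argument is routine bookkeeping of the distributivity of $\transition^*$ and $E$.
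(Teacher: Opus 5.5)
Your proposal is correct and follows essentially the same route as the paper: apply \cref{lemma:basis_row_correctness} to each basis row $\rho\in\RowsP$ at the fixed suffix, use the disjunct $M_{\RowsP}(\suffix)$ for ($\Rightarrow$), and for ($\Leftarrow$) use monotonicity of positive Boolean combinations (given $x\in\Boolean{\RowsP}$) to find, in every disjunct, a conjunct $\rho$ with $\rho(\suffix)=\bot$. Factoring the argument through the identity $b_{\RowsP}(x)=x$ of row vectors only repackages the two steps that the paper carries out inline.
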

\begin{proof}
 Fix $\suffix\in\SuffixSet$.
 From the definition of $b_{\RowsP}$, we have
 \begin{displaymath}
  b_{\RowsP}(x)=\bigvee_{\substack{\suffix'\in\SuffixSet\\x(\suffix')=\top}} M_{\RowsP}(\suffix')
   \qquad\text{and}\qquad
   M_{\RowsP}(\suffix')=\bigwedge_{\substack{\rho\in\RowsP\\\rho(\suffix')=\top}}\rho.
 \end{displaymath}
 If $x(\suffix)=\top$, $M_{\RowsP}(\suffix)$ is one of the disjuncts of $b_{\RowsP}(x)$.
 Every conjunct $\rho$ of $M_{\RowsP}(\suffix)$ satisfies $\rho(\suffix)=\top$.
 By \cref{lemma:basis_row_correctness}, we have $E_{\evidenceA}(\transition_{\evidenceA}^*(\rho,\suffix))=\top$ for any such $\rho$.
 Thus, $E_{\evidenceA}(\transition_{\evidenceA}^*(b_{\RowsP}(x),\suffix)) = E_{\evidenceA}(\transition_{\evidenceA}^*(M_{\RowsP}(\suffix),\suffix)) =\top$ holds.

 For the converse, we prove the contrapositive.
 Assume $x(\suffix)=\bot$.
 Consider any disjunct $M_{\RowsP}(\suffix')$ of $b_{\RowsP}(x)$ with $x(\suffix')=\top$.
 Since $x\in\Boolean{\RowsP}$, there is a $\rho\in\RowsP$ with $\rho(\suffix')=\top$ but $\rho(\suffix)=\bot$.
 This is because if $\rho'(\suffix')=\top$ implies $\rho'(\suffix)=\top$ for any $\rho' \in \RowsP$,
 $x'(\suffix')=\top$ also implies $x'(\suffix)=\top$ for any $x' \in \Boolean{\RowsP}$
 due to the monotonicity of positive Boolean combinations.
 By \cref{lemma:basis_row_correctness}, we have $E_{\evidenceA}(\transition_{\evidenceA}^*(\rho,\suffix))=\bot$.
 Since $\rho$ is a conjunct of $M_{\RowsP}(\suffix')$, we have $E_{\evidenceA}(\transition_{\evidenceA}^*(M_{\RowsP}(\suffix'),\suffix))=\bot$.
 As this holds for every disjunct of $b_{\RowsP}(x)$, we have $E_{\evidenceA}(\transition_{\evidenceA}^*(b_{\RowsP}(x),\suffix))=\bot$.
 \qed{}
\end{proof}
\begin{lemma}%
 \label{lemma:prefix_row_invariant}
 Assume $\Table$ with index sets $\PrefixSet, \SuffixSet \subseteq \TimedWords$ and monotone basis $\Basis\subseteq \PrefixSet$ is cohesive.
 Let $\evidenceA$ be the evidence AFA of $\Table$.
 For any $\prefix\in\PrefixSet$ and for any $\suffix\in\SuffixSet$, we have
 $E_{\evidenceA}(\transition_{\evidenceA}^*(\transition_{\evidenceA}^*(b_{\RowsP}(\TableRow{\emptyword}),\prefix),\suffix)) = E_{\evidenceA}(\transition_{\evidenceA}^*(b_{\RowsP}(\TableRow{\prefix}),\suffix))$.
\end{lemma}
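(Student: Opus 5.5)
The plan is to argue by induction on the length of $\prefix \in \PrefixSet$, using that $\PrefixSet$ is prefix-closed, so every nonempty $\prefix$ has the form $\prefix'\cdot\sigma$ with $\prefix' \in \PrefixSet$ and $\sigma \in \PhiTable$. The base case $\prefix = \emptyword$ is immediate, since $\transition_{\evidenceA}^*(\formula,\emptyword) = \formula$. For the inductive step, I would first note two elementary facts, each proved by structural induction on location formulas. First, $\transition_{\evidenceA}^*$ distributes over $\land$ and $\lor$ and satisfies $\transition_{\evidenceA}^*(\formula, \word\cdot\word') = \transition_{\evidenceA}^*(\transition_{\evidenceA}^*(\formula,\word),\word')$. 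Second, for each fixed suffix, $E_{\evidenceA}(\transition_{\evidenceA}^*(\cdot,\suffix))$ is a homomorphism from positive Boolean formulas to $\{\top,\bot\}$. Together these give the rewriting
\[
E_{\evidenceA}\bigl(\transition_{\evidenceA}^*(\transition_{\evidenceA}^*(b_{\RowsP}(\TableRow{\emptyword}),\prefix'\sigma),\suffix)\bigr)
= E_{\evidenceA}\bigl(\transition_{\evidenceA}^*(\transition_{\evidenceA}^*(b_{\RowsP}(\TableRow{\emptyword}),\prefix'),\sigma\cdot\suffix)\bigr).
\]

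The right-hand side of the target equation is handled by \cref{lemma:decomposition_correctness}. Because the table is $\Basis$-closed, $\TableRow{\prefix}\in\Boolean{\RowsP}$, so $E_{\evidenceA}(\transition_{\evidenceA}^*(b_{\RowsP}(\TableRow{\prefix}),\suffix)) = \TableCell{\prefix}{\suffix}$. The claim therefore reduces to showing $E_{\evidenceA}(\transition_{\evidenceA}^*(b_{\RowsP}(\TableRow{\emptyword}),\prefix\cdot\suffix)) = \TableCell{\prefix}{\suffix}$, which is faithfulness of the evidence AFA on $\PrefixSet\cdot\SuffixSet$.

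To make the induction go through, I would strengthen the hypothesis so that it concerns the formula $\transition_{\evidenceA}^*(b_{\RowsP}(\TableRow{\emptyword}),\prefix')$ itself rather than only its values on $\SuffixSet$. Concretely, I would show that this formula is obtained from $b_{\RowsP}(\TableRow{\prefix'})$ by replacing each basis location $r=\TableRow{\basis}$ with a formula that behaves like $r$ under every further transition along letters of $\PhiTable$. The steps I intend to use are these. After reading $\sigma$, each basis location $r$ becomes $b_{\RowsP}(\TableRow{\basis\cdot\sigma})$, which is well defined by evidence-closedness. By \cref{lemma:basis_row_correctness} and \cref{lemma:decomposition_correctness}, this formula agrees with $\TableRow{\basis\cdot\sigma}$ on $\SuffixSet$. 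Finally, $\TableRow{\prefix'\sigma}$ is compared with the positive combination of the rows $\TableRow{\basis\cdot\sigma}$ dictated by $b_{\RowsP}(\TableRow{\prefix'})$.

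The main obstacle is exactly this last comparison. It requires that the row of $\prefix'\sigma$ coincide, on every $\suffix\in\SuffixSet$, with the combination of the successor rows $\TableRow{\basis\cdot\sigma}$ given by the decomposition of $\TableRow{\prefix'}$. Equivalently, $\TableCell{\prefix'}{\sigma\suffix}$ should be determined through the decomposition of $\TableRow{\prefix'}$. When $\sigma\cdot\suffix\in\SuffixSet$ this follows directly from the decomposition lemma applied at the suffix $\sigma\suffix$. In general, though, $\sigma\cdot\suffix$ need not lie in $\SuffixSet$. I would first try to exploit the way \cref{algorithm:ALstarRTA} builds $\PrefixSet$: non-basis prefixes arise only as $\basis\cdot\sigma$ for $\basis$ in some basis, and this could allow an induction that reuses \cref{lemma:basis_row_correctness} at the one-step level. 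If that fails, I would look for a consistency-type invariant of cohesive tables to supply the missing equalities.
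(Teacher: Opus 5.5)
Your induction is the paper's induction, and you have isolated exactly the right step. To pass from $\prefix'$ to $\prefix'\cdot\sigma$, two things are needed:
\begin{itemize}
\item $\TableCell{\prefix'\cdot\sigma}{\suffix}$ must equal the value of $b_{\RowsP}(\TableRow{\prefix'})$ under $\TableRow{\basis}\mapsto\TableCell{\basis\cdot\sigma}{\suffix}$;
\item the induction hypothesis must be available at the suffix $\sigma\cdot\suffix$.
\end{itemize}
Both are only guaranteed when $\sigma\cdot\suffix\in\SuffixSet$. Your proposal leaves this open, so it is not yet a proof. However, the gap cannot be closed from the stated hypotheses. The paper's proof crosses this step by applying the induction hypothesis at $\sigma\cdot\suffix$, and by equating the value of $b_{\RowsP}(\TableRow{\prefix'})$ under $\nu_{\sigma\cdot\suffix}$ with $\TableRow{\prefix'\cdot\sigma}(\suffix)$. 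Both moves are justified only if $\sigma\cdot\suffix$ is a column. In fact the statement is false, even for tables produced by \cref{algorithm:ALstarRTA}.

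\textbf{Counterexample.} Take $\Alphabet=\{a,b\}$ and let $\targetLg$ be the timed words whose untimed projection lies in $\{a,ab,aab,bb,bab,baab,bbb,bbab\}$. Write $a,b$ for $(a,0),(b,0)$.
\begin{itemize}
\item The first hypothesis accepts nothing. If the teacher returns $ab$, the algorithm reaches $\SuffixSet=\{\emptyword,b,ab\}$, $\Basis=\{\emptyword,b\}$, $\PrefixSet=\{\emptyword,a,b,ba,bb\}$.
\item Over the columns $\emptyword,b,ab$ the rows are $\TableRow{\emptyword}=(\bot,\bot,\top)$, $\TableRow{b}=\TableRow{ba}=(\bot,\top,\top)$, and $\TableRow{a}=\TableRow{bb}=(\top,\top,\top)$.
\item This table is cohesive and $\Basis$ has minimum size, so a hypothesis is built.
\item With $r_0=\TableRow{\emptyword}$ and $r_1=\TableRow{b}$, one gets $M_{\RowsP}(\emptyword)=\top$, $M_{\RowsP}(b)=r_1$, $M_{\RowsP}(ab)=r_0\land r_1$. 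There are no accepting locations, and the initial formula is $b_{\RowsP}(r_0)=r_0\land r_1$.
\item Then $\transition_{\evidenceA}^*(r_0\land r_1,a)=b_{\RowsP}(\TableRow{a})\land b_{\RowsP}(\TableRow{ba})$, whose evaluation is $\top\land\bot=\bot$. By contrast, $E_{\evidenceA}(b_{\RowsP}(\TableRow{a}))=\top$.
\end{itemize}
So the lemma fails at $\prefix=a$, $\suffix=\emptyword$, and with it \cref{theorem:faithfulness}: $\TableCell{a}{\emptyword}=\top$, yet $\hypothesisA$ rejects $(a,0)$. Here $\sigma\cdot\suffix=a\notin\SuffixSet$. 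The formula $r_0\land r_1$ agrees with $r_0$ on the columns of $\SuffixSet$ but not on the virtual column $a$, since $a\in\targetLg$ while $ba\notin\targetLg$.

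Neither rescue you suggest can succeed as stated. The failure occurs in a genuine run, and the offending prefix $a=\emptyword\cdot a$ hangs off a current basis element. Reusing \cref{lemma:basis_row_correctness} one step at a time would need $b_{\RowsP}(\TableRow{\basis})$ to behave like the location $\TableRow{\basis}$ beyond $\SuffixSet$. That fails here, since $b_{\RowsP}(r_0)=r_0\land r_1$.

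What is missing is an extra consistency-type hypothesis relating $\sigma\cdot\suffix$ to the existing columns. One sufficient condition: for every $\sigma\in\PhiTable$ and $\suffix\in\SuffixSet$ there is $\suffix'\in\SuffixSet$ with $\TableCell{\prefix\cdot\sigma}{\suffix}=\TableCell{\prefix}{\suffix'}$ for all $\prefix$ with $\prefix\cdot\sigma\in\PrefixSet$. Under it, your induction closes when phrased for the row vector of $\transition_{\evidenceA}^*(b_{\RowsP}(\TableRow{\emptyword}),\prefix)$:
\begin{itemize}
\item evaluate each basis location at $\sigma\cdot\suffix$ via \cref{lemma:decomposition_correctness} applied to $\TableRow{\basis\cdot\sigma}$;
\item switch to the column $\suffix'$;
\item apply \cref{lemma:basis_row_correctness} and the induction hypothesis at $\suffix'$;
\item use the condition once more for $\prefix'$.
\end{itemize}
Alternatively, the claim must be restricted to $\prefix=\emptyword$, where it is trivial. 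That is the only case the termination argument actually uses, since \cref{lemma:refinement_cases,lemma:letter_refinement} derive their contradiction at the cell $(\emptyword,\normalize(\cex))$.
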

\begin{proof}
 For $\word\in\PhiTable^*$ and $r\in\RowsP$, let $\nu_{\word}(r)=E_{\evidenceA}(\transition_{\evidenceA}^*(r,\word))$.
 Since $\transition_{\evidenceA}^*(\cdot,\word)$ distributes over $\land$ and $\lor$ and $E_{\evidenceA}$ is homomorphic, for any $\formula\in\Boolean{\RowsP}$, the value $E_{\evidenceA}(\transition_{\evidenceA}^*(\formula,\word))$ is obtained by propositional evaluation of $\formula$ under $\nu_{\word}$.
 We prove the claim by induction on the length of $\prefix\in\PrefixSet$.

 When $\prefix=\emptyword$ holds,
 we have $\transition_{\evidenceA}^*(b_{\RowsP}(\TableRow{\emptyword}),\emptyword)=b_{\RowsP}(\TableRow{\emptyword})$ from the definition of $\transition_{\evidenceA}^*$.

 Let $\prefix=\prefix'\cdot\sigma$ with $\sigma\in\PhiTable$.
 For any $\suffix\in\SuffixSet$, we have the following:
 \begin{align*}
   & E_{\evidenceA}(\transition_{\evidenceA}^*(\transition_{\evidenceA}^*(b_{\RowsP}(\TableRow{\emptyword}),\prefix'\cdot\sigma),\suffix))\\
   =& E_{\evidenceA}(\transition_{\evidenceA}^*(\transition_{\evidenceA}^*(b_{\RowsP}(\TableRow{\emptyword}),\prefix'),\sigma\cdot\suffix)) & \text{(definition of $\transition^*$)}\\
   =& E_{\evidenceA}(\transition_{\evidenceA}^*(b_{\RowsP}(\TableRow{\prefix'}),\sigma\cdot\suffix)). & \text{(induction hypothesis)}
 \end{align*}

 For any $r=\TableRow{\basis}\in\RowsP$, we have
 \begin{align*}
    \transition_{\evidenceA}^*(r,\sigma\cdot\suffix)
   =& \transition_{\evidenceA}^*(\transition_{\evidenceA}(r,\sigma),\suffix) & \text{(definition of $\transition^*$)}\\
   =& \transition_{\evidenceA}^*(b_{\RowsP}(\TableRow{\basis\cdot\sigma}),\suffix). & \text{(\cref{definition:evidence_afa})}
 \end{align*}
 Since $\basis\cdot\sigma\in\PrefixSet$ holds by cohesiveness and $\Table$ is $\Basis$-closed, we have $\TableRow{\basis\cdot\sigma}\in\Boolean{\RowsP}$.
 Thus, we have
 $E_{\evidenceA}(\transition_{\evidenceA}^*(r,\sigma\cdot\suffix))=\TableRow{\basis\cdot\sigma}(\suffix)$
 from \cref{lemma:decomposition_correctness}.
 Since $E_{\evidenceA}(\transition_{\evidenceA}^*(\cdot,\sigma\cdot\suffix))$ respects propositional equivalence,
 we have
 \begin{align*}
  & E_{\evidenceA}(\transition_{\evidenceA}^*(b_{\RowsP}(\TableRow{\prefix'}),\sigma\cdot\suffix))\\
  = & \TableRow{\prefix'\cdot\sigma}(\suffix) \\
  = & E_{\evidenceA}(\transition_{\evidenceA}^*(b_{\RowsP}(\TableRow{\prefix'\cdot\sigma}),\suffix)). & \text{(\cref{lemma:decomposition_correctness} with $x=\TableRow{\prefix'\cdot\sigma}$)}
 \end{align*}
 \qed{}
\end{proof}
\begin{lemma}[transition preservation]
 \label{lemma:observed_transition_preservation}
 Assume $\Table$ with index sets $\PrefixSet, \SuffixSet \subseteq \TimedWords$ and monotone basis $\Basis\subseteq \PrefixSet$ is cohesive.
 Let $\evidenceA = (\RowsP, \PhiTable, b_{\RowsP}(\TableRow{\emptyword}), \Final, \transition_{\evidenceA})$
 be the evidence AFA and let $\hypothesisA$ be the ARTA constructed from $\evidenceA$ with \cref{algorithm:arta_construction}.
 For any $\formula\in\Boolean{\RowsP}$ and any $w\in\PhiTable^*$, we have
 $\transition_{\evidenceA}^*(\formula, w) = \transition_{\hypothesisA}^*(\formula, w)$.
 In particular, $\Lg(\evidenceA)\cap \PhiTable^*=\Lg(\hypothesisA)\cap \PhiTable^*$.
\end{lemma}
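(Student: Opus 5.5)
The key reduction is to the one-step claim: for every $r\in\RowsP$ and every $\sigma=(\action,\delay)\in\PhiTable$,
\[
  \transition_{\hypothesisA}(r,\sigma)=\transition_{\evidenceA}(r,\sigma).
\]
Once this holds, the statement follows by structural induction on $\formula$ and induction on $|w|$. Both $\transition_{\evidenceA}^*$ and $\transition_{\hypothesisA}^*$ are defined by the same clauses: $\top,\bot$ are fixed, $\land$ and $\lor$ distribute, and on an atom $r$ we have $\transition^*(r,\sigma\cdot w')=\transition^*(\transition(r,\sigma),w')$. So if the one-step transitions agree on letters of $\PhiTable$, the two extended functions agree syntactically on $\Boolean{\RowsP}\times\PhiTable^*$. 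The ``in particular'' clause then follows because both automata share the initial formula $b_{\RowsP}(\TableRow{\emptyword})$, the accepting set, and hence the evaluation function $E$.

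For the one-step claim, fix $r$ and $\action$, and let $\delay_1<\dots<\delay_n$ be all delays with $(\action,\delay_i)\in\PhiTable$. Let $l'=\delay'_1,\dots,\delay'_{n'}$ be the sequence left after the removal loop of \cref{algorithm:arta_construction}. The loop deletes $\delay_{i-1}$ exactly when its successor $\delay_i$ carries the same target formula. Hence each maximal block of consecutive original delays with equal targets is represented by its last element, and the targets of consecutive survivors differ. (The loop compares against the current neighbour, but since removal only ever happens on equality, ``same target as the next original delay'' is preserved.)

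Floor-distinctness of $\Table$ is inherited by the subsequence $l'$, so $\partition(l')=I_1,\dots,I_{n'}$ is well defined. Inspecting \cref{definition:partition_function}, I expect to verify three facts:
\begin{enumerate}
  \item the intervals are pairwise disjoint and consecutive, and their union is $\Rnn$ (with $I_1$ starting at $0$ since $\delay_0=-0.5$);
  \item $\delay'_i\in I_i$;
  \item each $I_i$ contains exactly those original delays $\delay_j$ lying strictly after $\delay'_{i-1}$ and up to $\delay'_i$.
\end{enumerate}
Fact 3 is the delicate point. A noninteger $\delay'_{i-1}$ gives left endpoint $\lceil\delay'_{i-1}\rceil$, and a noninteger $\delay'_i$ gives right endpoint $\lceil\delay'_i\rceil$ (open). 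Floor-distinctness guarantees that no original delay lies in $(\delay'_{i-1},\lceil\delay'_{i-1}\rceil)$ or in $[\delay'_i,\lceil\delay'_i\rceil)\setminus\{\delay'_i\}$. Any other delay in such a gap would be a noninteger with the same floor, or the next integer, which would need a separate check. The case split on integer versus noninteger endpoints follows the six cases of $\partition$.

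Given fact 3, every original $\delay_j$ lies in the interval $I_i$ whose representative $\delay'_i$ is the last element of $\delay_j$'s block. Therefore $\transition_{\evidenceA}(r,(\action,\delay_j))=\transition_{\evidenceA}(r,(\action,\delay'_i))$, which is exactly the target placed on the edge with guard $I_i$. By disjointness this edge is the unique one matching, so $\transition_{\hypothesisA}(r,(\action,\delay_j))$ equals that target. The case $n=0$ is vacuous.

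The main obstacle is fact 3, specifically the boundary cases where a removed delay sits near an integer neighbour of a surviving delay, e.g.\ an integer $k$ and a noninteger in $(k,k+1)$. There I would argue directly from the interval shapes: $(\delay_{i-1},\cdot]$ excludes the integer $\delay_{i-1}$ but includes the reals just above it, while $[\lceil\delay_{i-1}\rceil,\cdot)$ begins right after the unit cell of $\delay_{i-1}$. By floor-distinctness that unit cell contains no other observed noninteger delay.
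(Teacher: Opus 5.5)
Your proposal is correct and follows the paper's proof. Both reduce to the one-step equality $\transition_{\hypothesisA}(r,\sigma)=\transition_{\evidenceA}(r,\sigma)$ for $\sigma\in\PhiTable$, then lift it by structural induction on $\formula$ and induction on $|w|$, and get the ``in particular'' clause from the shared initial formula and accepting set. The paper only asserts the one-step equality ``by construction'', so your block and partition argument via floor-distinctness adds detail the paper omits. The ``next integer'' case you flag needs no separate check: the gap $(\delay'_{i-1},\lceil\delay'_{i-1}\rceil)$ is open at $\lceil\delay'_{i-1}\rceil$, which is the closed left endpoint of $I_i$.
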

\begin{proof}
 Fix an observed letter $\sigma=(\action,\delay)\in\PhiTable$.
 By construction of $\hypothesisA$ in \cref{algorithm:arta_construction}, for each  $r\in\RowsP$, the unique outgoing $\action$-transition from $r$ whose guard contains $\delay$ has target $\transition_{\evidenceA}(r,\sigma)$.
 Therefore, $\transition_{\evidenceA}(r, \sigma)=\transition_{\hypothesisA}(r, \sigma)$ holds for any $r\in\RowsP$.
 By structural induction on $\formula\in\Boolean{\RowsP}$, we obtain
 $\transition_{\evidenceA}^*(\formula,\sigma)=\transition_{\hypothesisA}^*(\formula,\sigma)$.

 We now show the claim for any $\word\in\PhiTable^*$ by induction on the length of $\word$.
 The base case $\word=\emptyword$ is immediate.
 For the induction step, write $\word = \sigma \cdot \word'$ with $\sigma \in \PhiTable$.
 Then, we obtain the following.
 \begin{align*}
  \transition_{\evidenceA}^*(\formula,\sigma\cdot \word')
  & = \transition_{\evidenceA}^*(\transition_{\evidenceA}^*(\formula,\sigma),\word') & \text{(definition of $\transition^*$)}\\
  & = \transition_{\evidenceA}^*(\transition_{\hypothesisA}^*(\formula,\sigma),\word') & \text{(one-step equality above)}\\
  & = \transition_{\hypothesisA}^*(\transition_{\hypothesisA}^*(\formula,\sigma),\word') & \text{(induction hypothesis)}\\
  & = \transition_{\hypothesisA}^*(\formula,\sigma\cdot \word') & \text{(definition of $\transition^*$)}
 \end{align*}
 The claim on the languages follows by taking $\formula=\InitLoc=b_{\RowsP}(\TableRow{\emptyword})$.
 \qed{}
\end{proof}

The following proves \cref{theorem:faithfulness}.

\recallResult{theorem:faithfulness}{\faithfulnessStatement}

\begin{proof}%
 [\cref{theorem:faithfulness}]
 Let $\evidenceA$ be the evidence AFA constructed from $\Table$.
 By \cref{lemma:observed_transition_preservation}, we have
 $\Lg(\evidenceA)\cap \PhiTable^*=\Lg(\hypothesisA)\cap \PhiTable^*$.
 Therefore, it suffices to prove the two equivalences in the statement with $\evidenceA$ in place of $\hypothesisA$, and then lift them to $\hypothesisA$ using this equality.

 For any $\prefix\in\PrefixSet$ and $\suffix\in\SuffixSet$, we have
 \begin{align*}
  &\TableCell{\prefix}{\suffix}=\top\\
  \iff& \TableRow{\prefix}(\suffix)=\top\\
  \iff& E_{\evidenceA}(\transition_{\evidenceA}^*(b_{\RowsP}(\TableRow{\prefix}),\suffix))=\top & \text{(\cref{lemma:decomposition_correctness} with $x=\TableRow{\prefix} \in \Boolean{\RowsP}$)}\\
  \iff& E_{\evidenceA}(\transition_{\evidenceA}^*(\transition_{\evidenceA}^*(b_{\RowsP}(\TableRow{\emptyword}),\prefix),\suffix))=\top & \text{(\cref{lemma:prefix_row_invariant})}\\
  \iff& E_{\evidenceA}(\transition_{\evidenceA}^*(b_{\RowsP}(\TableRow{\emptyword}),\prefix \cdot \suffix))=\top & \text{(definition of $\transition^*$)}\\
  \iff& E_{\evidenceA}(\transition_{\evidenceA}^*(\InitLoc,\prefix \cdot \suffix))=\top & \text{($\InitLoc=b_{\RowsP}(\TableRow{\emptyword})$ by \cref{definition:evidence_afa})}\\
  \iff& \prefix \cdot \suffix\in\Lg(\evidenceA) & \text{(semantics of AFAs)} \\
  \iff& \prefix \cdot \suffix\in\Lg(\hypothesisA). & \text{(\cref{lemma:observed_transition_preservation} and $\prefix \cdot \suffix\in\PhiTable^*$)}.
 \end{align*}
 \qed{}
\end{proof}
\subsection{Proof of \cref{theorem:hypothesis_bound}}

\recallResult{theorem:hypothesis_bound}{\hypothesisBoundStatement}

\begin{proof}
 Let $\hypothesisA = (\RowsP,\Alphabet,\InitLoc,\Final, \Edge)$.
 Fix a location $r \in \RowsP$ and an action $\action\in\Alphabet$.
 Let $\delay_1, \delay_2, \dots, \delay_n$ be the increasing sequence of all delays $\delay$ such that $(\action, \delay) \in \PhiTable$.
 In \cref{algorithm:arta_construction}, we delete some elements from this list and obtain a subsequence
 $l= \tilde{\delay_1}, \tilde{\delay_2}, \cdots, \tilde{\delay_m}$, and compute guard intervals $I_1, I_2, \dots,I_m=\partition(l)$.
 By \cref{definition:partition_function}, every finite endpoint of some $I_i$ is either $0$ or $\lceil\tilde{\delay_j}\rceil$ with some $j < m$.

 Let $\basis\in\Basis$ satisfy $\TableRow{\basis}=r \in \{\top, \bot\}^{\SuffixSet}$.
 Since $\Table$ is cohesive, for every $(\action,d)\in\PhiTable$ we have $\basis\cdot(\action,d)\in\PrefixSet$,
 so the row vector $\TableRow{\basis\cdot(\action,d)}$ is defined.
 Now take any two observed delays $d,d'>K$ with $(\action,d),(\action,d')\in\PhiTable$.
 By the second condition in \cref{theorem:myhill_nerode}, we have $\basis\cdot(\action,d)\nerode{\targetLg}\basis\cdot(\action,d')$, and thus,
 $\TableRow{\basis\cdot(\action,d)}=\TableRow{\basis\cdot(\action,d')}$ holds.
 Therefore, the target formulas coincide, \ie{} we have $\transition_{\evidenceA}(r,(\action,d)) = b_{\RowsP}(\TableRow{\basis\cdot(\action,d)}) = b_{\RowsP}(\TableRow{\basis\cdot(\action,d')}) = \transition_{\evidenceA}(r,(\action,d'))$.
 Since $\transition_{\evidenceA}(r,(\action,\delay))$ is constant for any observed delay $\delay$ strictly greater than $K$,
 at most one delay greater than $K$ can be in $l$, and it must be the last element.
 Therefore, we have $\lceil \tilde{\delay_j} \rceil \le K$ for every $j<m$, and thus, the natural number in any interval $I_i$ with $i \leq m$ is at most $K$.
 \qed{}
\end{proof}
\subsection{Proof of \cref{theorem:normalized_counterexample}}

Before proving \cref{theorem:normalized_counterexample}, we show some lemmas.

\begin{lemma}[membership preservation of $\normalizeLetter$]
 \label{lemma:normalization_letter_preserves}
 Let $\A$ be an ARTA and let $\alpha\in(0,1)$ be fixed.
 For every location formula $\formula\in\Boolean{\Loc}$ and every letter with delay $(\action,\delay)\in\Alphabet\times\Rnn$, we have
 $\transition_{\A}^*(\formula,(\action,\delay))=\transition_{\A}^*(\formula,\normalizeLetter((\action,\delay)))$.
\end{lemma}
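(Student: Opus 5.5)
The plan is to reduce the claim to a one-step statement about single locations and then lift it to arbitrary location formulas by structural induction on $\formula$. Fix $(\action,\delay)$ and write $(\action,\delay') = \normalizeLetter((\action,\delay))$. If $\delay\in\setN$ then $\delay'=\delay$ and there is nothing to prove. So the only interesting case is $\delay\notin\setN$. There $\delay' = \lfloor\delay\rfloor+\alpha$, and since $\alpha\in(0,1)$, both $\delay$ and $\delay'$ lie in the open unit interval $J=(\lfloor\delay\rfloor,\lfloor\delay\rfloor+1)$.

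The key observation is that every interval $I\in\Intervals$ has endpoints in $\setN\cup\{\infty\}$, so no endpoint of $I$ lies strictly inside $J$. Hence $J$ is either contained in $I$ or disjoint from $I$. In particular, $\delay\in I \iff \delay'\in I$ for every guard $I$. I would state this as a small auxiliary fact and prove it by a short case analysis on the lower and upper endpoints of $I$: for instance, if the lower endpoint $m\in\setN$ satisfies $m\le\lfloor\delay\rfloor$, then both points exceed $m$, and if $m\ge\lfloor\delay\rfloor+1$, then both points are below $m$. The upper endpoint is handled symmetrically, and $\infty$ is trivial. This is essentially the same reasoning the paper uses for the well-definedness of $\transition_{\mathrm{AFA}}$ in the proof of \cref{theorem:expressive_power}, but with unit intervals instead of $K$-regions, so no bound $K$ is needed.

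From this, the one-step equality $\transition_{\A}(\loc,(\action,\delay))=\transition_{\A}(\loc,(\action,\delay'))$ follows for every $\loc\in\Loc$. If some edge $(\loc,\action,I,\psi)\in\Edge$ has $\delay\in I$, then also $\delay'\in I$. By the determinism condition on $\Edge$, this edge is the unique one for $\loc,\action$ containing either delay, so both values equal $\psi$. Conversely, if no edge contains $\delay$, then no edge contains $\delay'$, and both values are $\bot$.

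Finally, I would do structural induction on $\formula$ using the defining equations of $\transition^*$ on one-letter words. For $\top$ and $\bot$ both sides are constant. For an atom $\loc$, $\transition^*(\loc,(\action,\delay))=\transition^*(\transition(\loc,(\action,\delay)),\emptyword)=\transition(\loc,(\action,\delay))$, which equals the normalized version by the one-step equality. The cases $\land$ and $\lor$ follow from the distributive clauses and the induction hypothesis. I expect no real obstacle. The only point requiring care is the interval case analysis: half-open and closed endpoints must be checked so that integer endpoints never separate $\delay$ from $\delay'$, which holds because both points are non-integers in the same unit interval.
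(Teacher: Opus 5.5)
Your proposal is correct and follows essentially the same route as the paper. Both proofs dispose of the trivial case $\delay\in\setN$ and observe that guards with endpoints in $\setN\cup\{\infty\}$ cannot separate $\delay$ from $\lfloor\delay\rfloor+\alpha$. Both then derive the one-step equality for each location via determinism and lift it by structural induction on $\formula$. Your explicit endpoint case analysis only spells out a step that the paper states in one line.
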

\begin{proof}
 Let $(\action,\delay') = \normalizeLetter((\action,\delay))$.
 If $\delay\in\setN$, then $\delay'=\delay$ and the claim is immediate.
 Assume $\delay\notin\setN$.
 Then, we have $\delay'=\lfloor\delay\rfloor+\alpha$.

 For any interval guard $I\in\Intervals$, since the endpoints are integers or $\infty$, we have $\delay\in I \iff \delay'\in I$.
 For any $\loc\in\Loc$, 
 from the definition of $\transition_{\A}$, the set of $\action$-edges $(\loc,\action,I,\phi)\in\Edge$ enabled by $\delay$ is the same as the set enabled by $\delay'$.
 By the determinism of ARTAs, we have $\transition_{\A}(\loc,(\action,\delay))=\transition_{\A}(\loc,(\action,\delay'))$.

 We lift the equality to arbitrary location formulas by structural induction on $\formula\in\Boolean{\Loc}$.
 The cases $\top$ and $\bot$ are immediate.
 For an atomic location $\loc$, we have $\transition_{\A}^*(\loc,(\action,\delay))=\transition_{\A}(\loc,(\action,\delay)) = \transition_{\A}(\loc,(\action,\delay')) = \transition_{\A}^*(\loc,(\action,\delay'))$.
 The Boolean cases follow directly from the definition.
 Overall, we have $\transition_{\A}^*(\formula,(\action,\delay))=\transition_{\A}^*(\formula,(\action,\delay'))=\transition_{\A}^*(\formula,\normalizeLetter((\action,\delay)))$.
 \qed{}
\end{proof}
\begin{lemma}[membership preservation of $\normalize$]
 \label{lemma:normalization_preserves}
 Let $\A$ be an ARTA and let $\alpha\in(0,1)$ be fixed.
 For every location formula $\formula\in\Boolean{\Loc}$ and every timed word $\word \in \TimedWords$, we have
 $\transition_{\A}^*(\formula, \word)=\transition_{\A}^*(\formula,\normalize(\word))$.
 In particular, for any $\word\in\TimedWords$, we have $\word \in \Lg(\A) \iff \normalize(\word) \in \Lg(\A)$.
\end{lemma}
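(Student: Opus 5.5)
The plan is induction on the length of $\word$, with \cref{lemma:normalization_letter_preserves} supplying the one-step case. The statement is quantified over all location formulas $\formula$. This matters, because after reading one letter the current formula is an arbitrary element of $\Boolean{\Loc}$ rather than a single location. I will therefore carry the universal quantifier over $\formula$ inside the induction hypothesis.

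\emph{Base case.} For $\word=\emptyword$ we have $\normalize(\emptyword)=\emptyword$. Both sides then equal $\formula$ by the definition of $\transition_{\A}^*$.

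\emph{Inductive step.} Write $\word=(\action,\delay)\cdot\word'$. Since $\normalize$ acts letterwise, $\normalize(\word)=\normalizeLetter((\action,\delay))\cdot\normalize(\word')$. I will use the splitting identity $\transition_{\A}^*(\formula,\sigma\cdot\word')=\transition_{\A}^*(\transition_{\A}^*(\formula,\sigma),\word')$. This holds for every $\formula\in\Boolean{\Loc}$ and follows by a routine structural induction on $\formula$:
\begin{itemize}
 \item the $\top$ and $\bot$ cases are constant;
 \item the atomic case is exactly the defining clause $\transition^*(\loc,\sigma\cdot\word')=\transition^*(\transition(\loc,\sigma),\word')$, together with $\transition^*(\loc,\sigma)=\transition(\loc,\sigma)$;
 \item the $\land$ and $\lor$ cases distribute.
\end{itemize}
With this identity, the step is the chain
\begin{align*}
 \transition_{\A}^*(\formula,\word)
 &= \transition_{\A}^*(\transition_{\A}^*(\formula,(\action,\delay)),\word')\\
 &= \transition_{\A}^*(\transition_{\A}^*(\formula,\normalizeLetter((\action,\delay))),\word')\\
 &= \transition_{\A}^*(\transition_{\A}^*(\formula,\normalizeLetter((\action,\delay))),\normalize(\word'))\\
 &= \transition_{\A}^*(\formula,\normalize(\word)).
\end{align*}
The second equality uses \cref{lemma:normalization_letter_preserves}. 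The third uses the induction hypothesis, applied to the formula $\transition_{\A}^*(\formula,\normalizeLetter((\action,\delay)))$; this is where the quantification over all formulas is needed.

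\emph{Membership consequence.} Instantiate $\formula=\InitLoc$ and apply $E_{\A}$ to both sides. Since the two formulas are literally equal, their evaluations agree, so $\word\in\Lg(\A)\iff\normalize(\word)\in\Lg(\A)$.

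The only real obstacle is the splitting identity. The paper defines $\transition^*$ by clauses that peel the first letter only at atomic locations, so the identity has to be justified for compound formulas rather than assumed. The structural induction above handles this. Everything else is a direct appeal to \cref{lemma:normalization_letter_preserves}.
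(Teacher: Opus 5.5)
Your proof is correct and matches the paper's argument: induction on the length of $\word$ with the same four-step chain, using \cref{lemma:normalization_letter_preserves} for the one-letter step and applying the induction hypothesis to the formula $\transition_{\A}^*(\formula,\normalizeLetter((\action,\delay)))$. The only addition is your structural-induction justification of the splitting identity $\transition_{\A}^*(\formula,\sigma\cdot\word')=\transition_{\A}^*(\transition_{\A}^*(\formula,\sigma),\word')$, which the paper attributes directly to the definition of $\transition^*_{\A}$.
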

\begin{proof}
 The proof is by induction on the length of $\word$.
 For $\word=\emptyword$, the claim is immediate.
 For the induction step, let $\word=(\action,\delay)\cdot \word'$.
 Then, we have 
 \begin{align*}
  \transition_{\A}^*(\formula,(\action,\delay)\cdot \word') &= \transition_{\A}^*(\transition_{\A}^*(\formula,(\action,\delay)),\word') & \text{(by definition of $\transition^*_{\A}$)}\\
  &= \transition_{\A}^*(\transition_{\A}^*(\formula,\normalizeLetter((\action,\delay))),\word') & \text{(by \cref{lemma:normalization_letter_preserves})}\\
  &= \transition_{\A}^*(\transition_{\A}^*(\formula,\normalizeLetter((\action,\delay))),\normalize(\word')) & \text{(by induction hypothesis)}\\
  &= \transition_{\A}^*(\formula,\normalizeLetter((\action,\delay))\cdot \normalize(\word')) & \text{(by definition of $\transition^*_{\A}$)}\\
  &= \transition_{\A}^*(\formula,\normalize(\word)) & \text{(by definition of $\normalize$)}.
 \end{align*}

 The claim about membership in $\lang$ is immediately derived from the first claim by taking $\formula=\InitLoc$.
 \qed{}
\end{proof}

The following proves \cref{theorem:normalized_counterexample}

\recallResult{theorem:normalized_counterexample}{\normalizedCounterexampleStatement}

\begin{proof}
 Let $\tilde\A$ be an ARTA recognizing $\lang$.
 By \cref{lemma:normalization_preserves}, we have
 $\word \in \Lg(\A) \iff \normalize(\word) \in \Lg(\A)$ and 
 $\word \in \Lg(\tilde\A) \iff \normalize(\word) \in \Lg(\tilde\A)$.
 Thus, we also have
 $\word \in \lang \setdiff \Lg(\A) \iff \normalize(\word) \in \lang \setdiff \Lg(\A)$.
 \qed{}
\end{proof}
\subsection{Proof of \cref{proposition:milp_basis_encoding}}

\recallResult{proposition:milp_basis_encoding}{\MILPBasisEncodingStatement}

\begin{proof}
 Assume that $U$ is a monotone basis of $R$, and fix $x \in R$, $\suffix^+ \in \mathrm{Pos}(x)$, and $\suffix^- \in \mathrm{Neg}(x)$.
 Since $x \in \Boolean{U}$, some positive Boolean expression over $U$ evaluates to $\top$ at $\suffix^+$ and to $\bot$ at $\suffix^-$.
 By induction on that expression, one finds a variable $r \in U$ that is $\top$ at $\suffix^+$ and $\bot$ at $\suffix^-$.
 Therefore, some $r \in U$ covers $(x,\suffix^+,\suffix^-)$.

 Assume that every separator obligation is covered.
 Fix $x \in R$.
 For each pair $(\suffix^+,\suffix^-)\in \mathrm{Pos}(x)\times\mathrm{Neg}(x)$, we let $r_{\suffix^+,\suffix^-}\in U$ be a row covering $(x,\suffix^+,\suffix^-)$.
 For each $\suffix^+\in\mathrm{Pos}(x)$, we let $\psi_{\suffix^+}=\bigwedge_{\suffix^- \in \mathrm{Neg}(x)} r_{\suffix^+,\suffix^-}$.
 Note that this conjunction is $\top$ if $\mathrm{Neg}(x)=\emptyset$.
 Then, we let $\phi_x=\bigvee_{\suffix^+\in\mathrm{Pos}(x)} \psi_{\suffix^+}$.
 Note that this disjunction is $\bot$ if $\mathrm{Pos}(x)=\emptyset$.
 By construction, $\phi_x \in \Boolean{U}$.
 Moreover, $\phi_x(\suffix^+) = \top$ for each $\suffix^+ \in \mathrm{Pos}(x)$ because every conjunct of $\psi_{\suffix^+}$ is $\top$ at $\suffix^+$, and $\phi_x(\suffix^-) = \bot$ for each $\suffix^- \in \mathrm{Neg}(x)$ because every disjunct contains the conjunct $r_{\suffix^+,\suffix^-}$, which is $\bot$ at $\suffix^-$.
 Thus, $\phi_x = x$, and $x \in \Boolean{U}$ holds.
 Since this holds for every $x \in R$, $U$ is a monotone basis of $R$.
\qed{}
\end{proof}
\subsection{Proof of \cref{lemma:row_bound}}

\recallResult{lemma:row_bound}{\rowBoundStatement}

\begin{proof}
 By the definition of row vectors, 
 for any $\prefix, \prefix' \in \PrefixSet$ satisfying $\TableRow{\prefix} \neq \TableRow{\prefix'}$,
 there is $\suffix\in\SuffixSet$ such that
 $\prefix \cdot \suffix \in \targetLg$ but $\prefix'\cdot\suffix\notin\targetLg$ (or vice versa).
 Therefore, $\prefix \not\nerode{\targetLg} \prefix'$ by definition of $\nerode{\targetLg}$ in \cref{theorem:myhill_nerode}.
 By \cref{theorem:myhill_nerode}, the minimal DRTA recognizing $\targetLg$ has exactly as many locations as there are equivalence classes of $\nerode{\targetLg}$.
 Thus, there are at most $n$ distinct row vectors in $\Table$.
 \qed{}
\end{proof}
\subsection{Proof of \cref{theorem:complexity_analysis}}

Before proving \cref{theorem:complexity_analysis}, we show an auxiliary definition and lemmas.

\begin{definition}[$t_{\Table}(\suffix)$]
 Let $\targetLg$ be a real-time language and let $\nerode{\targetLg}$ be the equivalence relation on $\TimedWords$ for $\targetLg$ in \cref{theorem:myhill_nerode}.
 Let $n$ be the number of equivalence classes of $\nerode{\targetLg}$.
 For an observation table $\Table$ with index sets $\PrefixSet, \SuffixSet \subseteq \TimedWords$,
 we define a function $t_{\Table}\colon \SuffixSet \to \{\bot,\top\}^n$ as follows.
 Fix $\word_1, \word_2, \dots, \word_n \in \TimedWords$ such that for each $i \neq j$, $\word_i \not\nerode{}\word_j$.
 Then, we let $t_{\Table}(\suffix)(i) = \top \iff \word_i \cdot \suffix \in \targetLg$. 
\end{definition}
\begin{lemma}%
 \label{lemma:refinement_cases}
 Let $\targetLg$ be the target real-time language.
 Let $\Table$ be a cohesive observation table with index sets $\PrefixSet, \SuffixSet \subseteq \TimedWords$ and monotone basis $\Basis\subseteq \PrefixSet$.
 Assume that the hypothesis ARTA $\hypothesisA$ constructed from $\Table$ does not recognize $\targetLg$ and let $\cex$ be the counterexample returned at the failed equivalence query.
 Then, for the cohesive observation table $\Table'$ after processing $\cex$ with index sets $\PrefixSet', \SuffixSet' \subseteq \TimedWords$,
 we have at least one of the following.
 \begin{itemize}
  \item There is $\suffix \in \SuffixSet' \setminus \SuffixSet$ such that $t_{\Table'}(\suffix) \neq t_{\Table'}(\suffix')$ for any $\suffix' \in \SuffixSet$.
  \item $\letters{\PrefixSet' \cup \SuffixSet'}$ is a strict superset of $\letters{\PrefixSet \cup \SuffixSet}$.
 \end{itemize}
\end{lemma}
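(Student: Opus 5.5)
I would argue by contradiction: assume neither item holds and derive that the normalized counterexample $\normalized{\cex}=\normalize(\cex)$ is still misclassified by the hypothesis built from $\Table'$. That contradicts \cref{theorem:faithfulness}, because $\normalized{\cex}=\emptyword\cdot\normalized{\cex}$ with $\emptyword\in\PrefixSet'$ and $\normalized{\cex}\in\SuffixSet'$. By \cref{theorem:normalized_counterexample}, $\normalized{\cex}\in\targetLg\setdiff\Lg(\hypothesisA)$.

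So assume $\letters{\PrefixSet'\cup\SuffixSet'}=\letters{\PrefixSet\cup\SuffixSet}$, and assume every new suffix $\suffix\in\SuffixSet'\setminus\SuffixSet$ has the same profile as some old suffix, say $t_{\Table'}(\suffix)=t_{\Table'}(\suffix')$ with $\suffix'\in\SuffixSet$. Equal profiles mean that for every $\word\in\TimedWords$ we have $\word\cdot\suffix\in\targetLg\iff\word\cdot\suffix'\in\targetLg$, since every $\word$ is $\nerode{\targetLg}$-equivalent to some representative $\word_i$.

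The key step is to show that adding such columns is inert. For any $\prefix,\prefix'\in\PrefixSet'$, the rows in $\Table'$ agree iff they agree on $\SuffixSet$, because each new column duplicates an old column. Likewise, positive Boolean combinations commute with this column duplication. Consequently:
\begin{itemize}
\item $\Basis$-closedness and evidence-closedness of the old table are preserved. No prefix is added to the basis or to $\PrefixSet$ beyond the rows needed for evidence-closedness, since $\letters{\cdot}$ is unchanged.
\item The minimization step yields a basis of the same cardinality with the same row set (up to the deterministic choice of representatives).
\item The evidence AFA over the extended table is isomorphic to the old one. Its locations are the same rows extended by duplicated coordinates, and the formulas $b_{\RowsP}$ are equivalent: extending $\SuffixSet$ by duplicates only adds disjuncts $M_{\RowsP}(\suffix)=M_{\RowsP}(\suffix')$ that are already present.
\end{itemize}
Because the alphabet $\PhiTable$ is unchanged, \cref{algorithm:arta_construction} produces the same delay sequences, the same partition intervals, and hence an ARTA $\hypothesisA'$ equivalent to $\hypothesisA$. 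In particular, $\normalized{\cex}$ is still misclassified by $\hypothesisA'$, which contradicts faithfulness.

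The main obstacle is the middle step: verifying that the hypothesis really does not change (up to language) when only duplicate-profile columns are added. The risk is that the basis computation could legitimately choose a different minimal basis, giving a structurally different evidence AFA. To get around this, I would avoid claiming structural identity and instead use \cref{lemma:decomposition_correctness} and \cref{lemma:prefix_row_invariant} directly.

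Concretely, I would argue that the acceptance of any $\word\in\PhiTable^*$ by $\hypothesisA'$ is determined by the rows on $\SuffixSet$. I expect the following inductive argument to work. First, show that $E(\transition^*(r,\word))$ for a basis row $r$ equals membership of $\basis\cdot\word$ whenever the successors stay inside $\PrefixSet$. Then compare the two hypotheses on $\normalized{\cex}$ letter by letter, using that both hypotheses act identically on observed letters by \cref{lemma:observed_transition_preservation}, and that $\normalized{\cex}\in\PhiTable^*$ since its letters are already observed.

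If that comparison becomes too delicate, the fallback is to note that the first item is precisely what any informative new column must provide. Under the negation of both items, the only changes to $\Table$ are duplicate columns, so $\Table'$ carries no new information. The algorithm is deterministic given its choices, so it would rebuild $\hypothesisA$, contradicting faithfulness on $\normalized{\cex}$.
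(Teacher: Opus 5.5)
Your overall strategy matches the paper's. You assume both items fail and observe that every new column duplicates an old one, so $\Table'$ stays $\Basis$-closed and evidence-closed. You conclude that the ARTA from $\Table'$ is equivalent to $\hypothesisA$, and you contradict \cref{theorem:faithfulness} on $\emptyword\cdot\normalize(\cex)$.

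The gap is exactly at the step you flag as the main obstacle, and none of your proposed resolutions closes it.
\begin{itemize}
\item Your bullet claiming that minimization returns a basis ``with the same row set'' is unjustified, since the BIP may return a different optimal solution.
\item Your workaround fails. Acceptance of an arbitrary $\word\in\PhiTable^*$ by a hypothesis is \emph{not} determined by the rows of the table. Faithfulness pins the hypothesis down only on $\PrefixSet\cdot\SuffixSet$, and there is no consistency condition. Two monotone bases can decompose a row $\TableRow{\prefix\cdot\sigma}$ differently, and the successor rows need not respect either decomposition. So hypotheses built with different bases can disagree on longer words over observed letters.
\item Your intermediate claim that location $\TableRow{\basis}$ accepts $\word$ iff $\basis\cdot\word\in\targetLg$ holds only for $\word\in\SuffixSet$ (\cref{lemma:basis_row_correctness}).
\item \cref{lemma:observed_transition_preservation} relates the evidence AFA and the ARTA of one and the same table and basis. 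It cannot compare hypotheses from different tables or bases.
\item The fallback (``the algorithm is deterministic, so it rebuilds $\hypothesisA$'') is not an argument, because the BIP instance has changed.
\end{itemize}

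The missing observation is that you never need to know which basis the algorithm actually picks. \cref{theorem:faithfulness} holds for \emph{any} cohesive table and \emph{any} monotone basis. The old $\Basis$ is still a monotone basis of $\Table'$, and $\Table'$ is cohesive with respect to it. So let $\hypothesisA'$ be the ARTA built from $\Table'$ with the old $\Basis$. Your duplicate-column argument shows its evidence AFA agrees with the old one up to propositional equivalence of targets. Hence $\hypothesisA'$ is equivalent to $\hypothesisA$ and still misclassifies $\normalize(\cex)$, contradicting faithfulness of $\hypothesisA'$ at $(\emptyword,\normalize(\cex))$. This is precisely how the paper closes the proof.

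Alternatively, you can use the algorithm directly. \cref{algorithm:ALstarRTA} replaces $\Basis$ only when $|\Basis'|<|\Basis|$, and duplicate columns leave the separator obligations, and hence the BIP optimum, unchanged. With exact solving the algorithm therefore keeps $\Basis$, which also justifies $\PrefixSet'=\PrefixSet$.
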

\begin{proof}
 We prove the claim by contradiction.
 Assume that after processing $\cex$, we have both of the following.
 \begin{itemize}
  \item For every $\suffix \in \SuffixSet' \setminus \SuffixSet$, there is $\suffix' \in \SuffixSet$ such that $t_{\Table'}(\suffix) = t_{\Table'}(\suffix')$.
  \item $\letters{\PrefixSet' \cup \SuffixSet'} = \letters{\PrefixSet \cup \SuffixSet}$.
 \end{itemize}

 Fix $\suffix \in \SuffixSet' \setminus \SuffixSet$.
 Such $\suffix$ exists because we have $\normalize(\cex) \not\in \SuffixSet$ by \cref{theorem:faithfulness,theorem:normalized_counterexample} and $\normalize(\cex) \in \SuffixSet'$ by \cref{algorithm:ALstarRTA}.
 Let $\suffix' \in \SuffixSet$ be such that $t_{\Table'}(\suffix) = t_{\Table'}(\suffix')$.
 From the definition of $t$,
 for any $\word \in \TimedWords$, we have $\word \cdot \suffix \in \targetLg$ if and only if $\word \cdot \suffix' \in \targetLg$.
 In particular, for every $\prefix \in \PrefixSet$, we have $\TableCell{\prefix}{\suffix} = \TableCell{\prefix}{\suffix'}$.
 Therefore, the observation table $\Table'$ is still $\Basis$-closed and $\Basis$ does not increase at \cref{algorithm:ALstarRTA:close} of \cref{algorithm:ALstarRTA}.
 Since $\letters{\PrefixSet' \cup \SuffixSet'} = \letters{\PrefixSet \cup \SuffixSet}$ and $\Table$ is already evidence-closed, we have $\PrefixSet = \PrefixSet'$.

 Let $\hypothesisA'$ be the ARTA constructed from $\Table'$ with the same monotone basis $\Basis$.
 We note that the actual hypothesis ARTA may not be constructed using the same basis $\Basis$, depending on the BIP-based basis computation in \cref{section:exact_basis_milp}; however, $\Basis$ is a valid monotone basis, and $\hypothesisA'$ is a valid ARTA\@.
 Since both $\Basis$ and $\letters{\PrefixSet' \cup \SuffixSet'}$ do not change, the evidence AFA constructed from $\Table'$ with $\Basis$ is semantically equivalent to the one constructed from $\Table$ with $\Basis$.
 Thus, the ARTA $\hypothesisA'$ constructed from $\Table'$ with $\Basis$ is semantically equivalent to $\hypothesisA$.
 However, since $\normalize(\cex)$ is a counterexample to $\hypothesisA$, it is also a counterexample to $\hypothesisA'$, which contradicts \cref{theorem:faithfulness}.
 \qed{}
\end{proof}
\begin{lemma}%
 \label{lemma:letter_refinement}
 Let $\targetLg$ be the target real-time language and let $K$ be the constant in the second condition of \cref{theorem:myhill_nerode}.
 Let $\Table$ be a cohesive observation table with index sets $\PrefixSet, \SuffixSet \subseteq \TimedWords$ and monotone basis $\Basis\subseteq \PrefixSet$.
 Assume that the hypothesis ARTA $\hypothesisA$ constructed from $\Table$ does not recognize $\targetLg$ and
 for any $\suffix \in \SuffixSet' \setminus \SuffixSet$ there is $\suffix' \in \SuffixSet$ satisfying $t_{\Table'}(\suffix) = t_{\Table'}(\suffix')$, where
 $\cex$ is the counterexample returned at the failed equivalence query and
 $\SuffixSet'$ is the set of suffixes after processing $\cex$.
 Then, for the set $\PrefixSet'$ of prefixes after processing $\cex$, we have one of the following.
 \begin{itemize}
  \item There is $(\action, \delay) \in \letters{\PrefixSet' \cup \SuffixSet'} \setminus \letters{\PrefixSet \cup \SuffixSet}$ with $\delay \leq K$.
  \item There is $(\action, \delay) \in \letters{\PrefixSet' \cup \SuffixSet'} \setminus \letters{\PrefixSet \cup \SuffixSet}$ with $\delay > K$ and there is no $(\action, \delay') \in \letters{\PrefixSet \cup \SuffixSet}$ with $\delay' > K$.
 \end{itemize}
\end{lemma}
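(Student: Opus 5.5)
We argue by contradiction. Assume that every new letter $(\action,\delay) \in \letters{\PrefixSet' \cup \SuffixSet'} \setminus \letters{\PrefixSet \cup \SuffixSet}$ is \emph{redundant}: $\delay > K$, and some $(\action,\delay') \in \letters{\PrefixSet \cup \SuffixSet}$ already has $\delay' > K$. The goal is to rebuild the situation of \cref{lemma:refinement_cases}, where $\Basis$ and the hypothesis are unchanged, so that \cref{theorem:faithfulness} gives a contradiction.

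First, we replace each new redundant letter $(\action,\delay)$ by a fixed old representative $(\action,\delay')$ with $\delay' > K$. This defines a map $\theta$ on timed words. By the second condition of \cref{theorem:myhill_nerode}, $\word\cdot(\action,\delay)\cdot \word' \in \targetLg \iff \word\cdot(\action,\delay')\cdot \word' \in \targetLg$ for all $\word, \word'$. Applying this letter by letter (the relation is a right congruence, and each substitution is justified by the prefix-based condition) gives $\word \in \targetLg \iff \theta(\word) \in \targetLg$.

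For the hypothesis, \cref{theorem:hypothesis_bound} and \cref{algorithm:arta_construction} show that all delays $> K$ for letter $\action$ fall into the same last guard interval $(\cdot,\infty)$ of $\hypothesisA$, so they induce the same transitions. The same argument as \cref{lemma:normalization_preserves} then gives $\word \in \Lg(\hypothesisA) \iff \theta(\word) \in \Lg(\hypothesisA)$. Hence $\theta(\normalize(\cex))$ is still a counterexample to $\hypothesisA$, and it uses only letters of $\letters{\PrefixSet \cup \SuffixSet}$.

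Next, we use the hypothesis on $t_{\Table'}$. Every new suffix $\suffix$ has an old $\suffix'$ with $t_{\Table'}(\suffix) = t_{\Table'}(\suffix')$, i.e.\ $\suffix$ and $\suffix'$ have the same membership after every Nerode class. Therefore the new columns duplicate old ones on every row, including rows indexed by new prefixes. As in the proof of \cref{lemma:refinement_cases}, $\Basis$ stays a monotone basis and no new basis element is added at \cref{algorithm:ALstarRTA:close}.

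Now consider the table $\Table''$ with prefixes $\PrefixSet$ and suffixes $\SuffixSet \cup \suffixes(\theta(\normalize(\cex)))$. All its letters are old, so it is evidence-closed with $\PrefixSet$ unchanged. Each added suffix $\theta(\suffix)$ satisfies $t(\theta(\suffix)) = t(\suffix) = t(\suffix')$ for some old $\suffix'$, so its column duplicates an old one. The evidence AFA built from $\Table''$ with basis $\Basis$ is then semantically the same as before: its rows, decompositions $b_{\RowsP}$ and transitions agree on the old columns, and the new columns duplicate them. Hence the ARTA built from $\Table''$ coincides semantically with $\hypothesisA$. But $\theta(\normalize(\cex)) \in \SuffixSet''$ and $\emptyword \in \PrefixSet$, so \cref{theorem:faithfulness} forces $\hypothesisA$ to classify it correctly, which contradicts its being a counterexample.

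The main obstacle is the semantic-equivalence step. The duplication of columns is needed not only on the rows in $\PrefixSet$ but also in the evaluation of $M_{\RowsP}$ and $b_{\RowsP}$; an added column can add disjuncts to $b_{\RowsP}$, but these are identical to existing ones since the column equals an old column on the basis rows. This point needs checking, in the same spirit as the corresponding step of \cref{lemma:refinement_cases}. A secondary check is that $\theta$ preserves membership in $\Lg(\hypothesisA)$, which requires that all delays $> K$ for a given letter share one guard in the hypothesis.
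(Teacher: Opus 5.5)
Your proof is correct, but it takes a different route from the paper's.

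\textbf{The paper's route.} The paper works directly with the table $\Table'$ that the algorithm actually produces. It observes three things:
\begin{itemize}
\item each new prefix $\basis\cdot(\action,\delay)$ with $\delay>K$ has the same row as the old prefix $\basis\cdot(\action,\delay')$;
\item new columns duplicate old ones;
\item the extra large delays get merged away in \cref{algorithm:arta_construction}, so they refine no guard.
\end{itemize}
Hence the ARTA built from $\Table'$ with the old $\Basis$ is semantically $\hypothesisA$, and faithfulness applied to $\normalize(\cex)\in\SuffixSet'$ gives the contradiction.

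\textbf{Your route.} You instead push the redundant letters out of the counterexample. Your substitution $\theta$ is invariant for $\targetLg$ by the second Myhill--Nerode condition. It is invariant for $\hypothesisA$ by \cref{theorem:hypothesis_bound}: for a fixed location and letter, the guard intervals partition $[0,\infty)$ with all finite endpoints at most $K$, so every delay above $K$ lies in the unbounded one. Thus $\theta(\normalize(\cex))$ is a counterexample using only old letters. The auxiliary table $\Table''$ then has the same prefixes, letters and basis as $\Table$, differing only by duplicated columns.

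\textbf{Comparison.} Your version reduces the situation exactly to the no-new-letter argument of \cref{lemma:refinement_cases}. It never has to reason about new prefix rows or the merging step of \cref{algorithm:arta_construction}. The cost is an extra substitution map and a dependence on \cref{theorem:hypothesis_bound}. The paper's version speaks about the table the algorithm really builds, but its guard-preservation claim (``does not contribute to the refinement of guard intervals'') is left informal, and your route makes that point rigorous. Both arguments equally use a hypothetical ARTA with the old basis $\Basis$ rather than a re-minimized one.

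\textbf{Your flagged obstacle.} The column-duplication step does go through. If $\suffix''$ duplicates $\suffix'$ on all rows, then $M_{\RowsP}(\suffix'')$ has exactly the conjuncts of $M_{\RowsP}(\suffix')$. It enters $b_{\RowsP}(x)$ precisely when $M_{\RowsP}(\suffix')$ does, so the decompositions stay propositionally equivalent. Differences in merging between equivalent targets do not change the ARTA's semantics.
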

\begin{proof}
 We prove the claim by contradiction.
 Assume that after processing $\cex$, for any $(\action, \delay) \in \letters{\PrefixSet' \cup \SuffixSet'} \setminus \letters{\PrefixSet \cup \SuffixSet}$, we have $\delay > K$ and there is $(\action, \delay') \in \letters{\PrefixSet \cup \SuffixSet}$ with $\delay' > K$.
 Since we have $\delay > K$ and $\delay' > K$, the second condition in \cref{theorem:myhill_nerode} implies that for any $\prefix \in \PrefixSet$ and $\suffix \in \SuffixSet$, we have $\prefix\cdot(\action,\delay)\cdot\suffix \in \targetLg$ if and only if $\prefix\cdot(\action,\delay')\cdot\suffix \in \targetLg$.
 Since each newly added suffix in $\SuffixSet' \setminus \SuffixSet$ duplicates an old column by assumption,
 each row added at \cref{algorithm:ALstarRTA:evidence_closed} of \cref{algorithm:ALstarRTA} must be the same as some existing row.
 Thus, $\Table$ is still $\Basis$-closed and $\Basis$ does not increase at \cref{algorithm:ALstarRTA:close} of \cref{algorithm:ALstarRTA}.

 Let $\hypothesisA'$ be the ARTA constructed from $\Table'$ with the same monotone basis $\Basis$.
 We note that the actual hypothesis ARTA may not be constructed using the same basis $\Basis$, depending on the BIP-based basis computation in \cref{section:exact_basis_milp}; however, $\Basis$ is a valid monotone basis, and $\hypothesisA'$ is a valid ARTA\@.
 Since $\Basis$ does not change and having both $(\action, \delay)$ and $(\action, \delay')$ with $\delay, \delay' > K$ does not contribute to the refinement of guard intervals in \cref{algorithm:arta_construction},
 the ARTA $\hypothesisA'$ constructed from $\Table'$ with $\Basis$ is semantically equivalent to $\hypothesisA$.
 However, since $\normalize(\cex)$ is a counterexample to $\hypothesisA$, it is also a counterexample to $\hypothesisA'$, which contradicts \cref{theorem:faithfulness}.
 \qed{}
\end{proof}

The following proves \cref{theorem:complexity_analysis}.

\recallResult{theorem:complexity_analysis}{\complexityAnalysisStatement}

\begin{proof}
 \textbf{Number of equivalence queries.}
 By \cref{lemma:refinement_cases}, after processing a counterexample, we discover either a new $t_{\Table}(\suffix)$ or a new $(\action, \delay)$ in $\letters{\PrefixSet \cup \SuffixSet}$.
 Since each $t_{\Table}(\suffix)$ is a Boolean vector of length $n$, there are at most $2^n$ distinct $t_{\Table}(\suffix)$, and thus, 
 discovery of a new $t_{\Table}(\suffix)$ occurs at most $2^n$ times in \cref{algorithm:ALstarRTA}.
 
 Assume the counterexample $\cex$ returned by an equivalence query does not reveal a new $t_{\Table}(\suffix)$.
 By \cref{lemma:letter_refinement}, we discover a new $(\action, \delay)$ in $\letters{\PrefixSet \cup \SuffixSet}$ such that either $\delay \leq K$ or $\delay > K$ and there is no $(\action, \delay') \in \letters{\PrefixSet \cup \SuffixSet}$ with $\delay' > K$.
 For each letter $\action$, there are at most $2K + 2$ such delays, and thus, there are at most $|\Alphabet| (2K + 2)$ refinements of $\letters{\PrefixSet \cup \SuffixSet}$ by equivalence queries.
 Overall, the number of equivalence queries is bounded by $2^n + |\Alphabet| (2K + 2) + 1$.

 \noindent
 \textbf{Number of membership queries.}
 After each failed equivalence query, at most $h$ new suffixes are added.
 Thus, we have $|\SuffixSet| \leq M h + 1$.
 Moreover, each failed equivalence query increases $\letters{\PrefixSet \cup \SuffixSet}$ at most by $h$, and we have
 $|\letters{\PrefixSet \cup \SuffixSet}| \leq M h$.
 Since $\Basis$ is a monotone basis of $\Table$, $|\Basis| = |\mathrm{Rows}(\Basis)| \leq |\mathrm{Rows}(\PrefixSet)| \leq n$ follows from \cref{lemma:row_bound}.
 $\PrefixSet$ increases only to ensure evidence-closedness at \cref{algorithm:ALstarRTA:evidence_closed}.
 By \cref{lemma:row_bound}, there are at most $n$ distinct row vectors in the observation tables appearing in this computation.
 By determinism of the mapping from $\mathrm{Rows}(\PrefixSet)$ to $\PrefixSet$ in the monotone basis identification,
 there are at most $n$ distinct $\prefix \in \PrefixSet$ in the monotone bases between each equivalence query.
 Thus, between each equivalence query, 
 $\PrefixSet$ increases at most by $|\letters{\PrefixSet \cup \SuffixSet}| \times n \leq M h n$, and we have $|\PrefixSet| \leq M^2 hn + 1$.
 Overall, the number of membership queries is bounded by
 $|\PrefixSet| \times |\SuffixSet| \leq (M^2hn + 1) \times (Mh + 1) = M^3h^2n + M^2hn + Mh + 1$.
 \qed{}
\end{proof}
\section{Omitted steps of the example in \cref{section:worked_example}}

Here, we present the concrete steps of the worked example in \cref{section:worked_example}.

The learner then adds $(a,3.5)(b,7.5)$ to $\SuffixSet$ and makes the observation table cohesive.
The resulting observation table $\Table_6$ and the corresponding hypothesis $\hypothesisA^6$ are shown in \cref{figure:observation_tables:6,figure:hypothesis:6}.
Here, we still have $\Basis=\{\emptyword,\Athree,\Bseven\}$, but the $a$-guard is refined from $[0,\infty)$ to $[0,3]$.
The learner asks an equivalence query, and the teacher returns a counterexample $(b,1.5)(b,7)$, which is rejected by the target ARTA but accepted by $\hypothesisA^6$.

The learner then adds $(b,1.5)(b,7)$ to $\SuffixSet$ and makes the observation table cohesive.
The resulting observation table $\Table_7$ and the corresponding hypothesis $\hypothesisA^7$ are shown in \cref{figure:observation_tables:7,figure:hypothesis:7}.
Here, we still have $\Basis=\{\emptyword,\Athree,\Bseven\}$, and the $b$-behavior is refined by separating the interval $[2,3)$ from the region below $2$.
The learner asks an equivalence query, and the teacher returns a counterexample $(a,3)(a,3.5)(b,7)$, which is accepted by the target ARTA but rejected by $\hypothesisA^7$.

\begin{figure}[tbp]
 \begin{subfigure}{\linewidth}
  \centering
  \footnotesize
  \begin{tabular}{r|ccccccc}
   & $\emptyword$ & $\Bzero$ & $\Btwohalf$ & $\Bseven$ & $\Bsevenhalf$ & $\Athree\Bzero$ & $\Athreehalf\Bsevenhalf$\\\hline
   \rowcolor{gray!15}
   $\emptyword$ & $\bot$ & $\bot$ & $\bot$ & $\top$ & $\bot$ & $\top$ & $\bot$\\
   \rowcolor{gray!15}
   $\Athree$ & $\bot$ & $\top$ & $\top$ & $\top$ & $\top$ & $\bot$ & $\bot$\\
   $\Athreehalf$ & $\bot$ & $\bot$ & $\bot$ & $\bot$ & $\bot$ & $\bot$ & $\bot$\\
   $\Bzero$ & $\bot$ & $\bot$ & $\bot$ & $\bot$ & $\bot$ & $\bot$ & $\bot$\\
   $\Btwohalf$ & $\bot$ & $\bot$ & $\bot$ & $\top$ & $\bot$ & $\bot$ & $\bot$\\
   \rowcolor{gray!15}
   $\Bseven$ & $\top$ & $\top$ & $\top$ & $\top$ & $\top$ & $\bot$ & $\bot$\\
   $\Bsevenhalf$ & $\bot$ & $\bot$ & $\bot$ & $\top$ & $\bot$ & $\bot$ & $\bot$\\
   $\Athree\Athree$ & $\bot$ & $\bot$ & $\bot$ & $\top$ & $\bot$ & $\top$ & $\bot$\\
   $\Athree\Athreehalf$ & $\bot$ & $\bot$ & $\bot$ & $\top$ & $\bot$ & $\top$ & $\bot$\\
   $\Athree\Bzero$ & $\top$ & $\top$ & $\top$ & $\top$ & $\top$ & $\bot$ & $\bot$\\
   $\Athree\Btwohalf$ & $\top$ & $\top$ & $\top$ & $\top$ & $\top$ & $\bot$ & $\bot$\\
   $\Athree\Bseven$ & $\top$ & $\top$ & $\top$ & $\top$ & $\top$ & $\bot$ & $\bot$\\
   $\Athree\Bsevenhalf$ & $\top$ & $\top$ & $\top$ & $\top$ & $\top$ & $\bot$ & $\bot$\\
   $\Bseven\Athree$ & $\bot$ & $\bot$ & $\bot$ & $\top$ & $\bot$ & $\bot$ & $\bot$\\
   $\Bseven\Athreehalf$ & $\bot$ & $\bot$ & $\bot$ & $\bot$ & $\bot$ & $\bot$ & $\bot$\\
   $\Bseven\Bzero$ & $\top$ & $\top$ & $\top$ & $\top$ & $\top$ & $\bot$ & $\bot$\\
   $\Bseven\Btwohalf$ & $\top$ & $\top$ & $\top$ & $\top$ & $\top$ & $\bot$ & $\bot$\\
   $\Bseven\Bseven$ & $\top$ & $\top$ & $\top$ & $\top$ & $\top$ & $\bot$ & $\bot$\\
   $\Bseven\Bsevenhalf$ & $\top$ & $\top$ & $\top$ & $\top$ & $\top$ & $\bot$ & $\bot$
  \end{tabular}
  \caption{Sixth observation table $\Table_6$.}%
  \label{figure:observation_tables:6}
 \end{subfigure}
 \hfill
 \begin{subfigure}{\linewidth}
  \centering
  \footnotesize
  \begin{tikzpicture}[auto, semithick,node distance=1.5cm,scale=0.78,every node/.style={initial text={},transform shape}]
    \node[state, initial] (q0) at (0, 0) {$\WorkLocZero$};
    \node[state, accepting] (q1) at (4, -3) {$\WorkLocTwo$};
    \node[state] (q2) at (4, 0) {$\WorkLocOne$};
    \workbranch{c12}{(1.8,1.5)}
    \workbranch{c012}{(0,-3)}
    \path[->]
      (q0) edge node[above left] {$a,[0,3]$} (c12)
      (q0) edge[bend right=20] node[left] {$b,(0,3)$} (c012)
      (q0) edge[bend left=8] node[above right,pos=0.3] {$b,[3,7]$} (q1)
      (q0) edge[bend left=20] node[pos=0.3,right] {$b,(7,\infty)$} (c012)
      (q1) edge[bend left=10] node[below] {$a,[0,3]$} (c012)
      (q1) edge[loop right] node {$b,[0,\infty)$} (q1)
      (q2) edge node[above] {$a,[0,\infty)$} (q0)
      (q2) edge node[right] {$b,[0,\infty)$} (q1)
      (c12) edge[bend left=10] (q1)
      (c12) edge (q2)
      (c012) edge (q0)
      (c012) edge[bend left=10] (q1)
      (c012) edge (q2);
  \end{tikzpicture}
  \caption{Sixth hypothesis $\hypothesisA^6$.}%
  \label{figure:hypothesis:6}
 \end{subfigure}
 \caption{Observation table and the corresponding hypothesis in the sixth step.}
\end{figure}
\begin{figure}[tbp]
 \begin{subfigure}{\linewidth}
  \centering
  \scriptsize
  \begin{tabular}{r|cccccccc}
   & $\emptyword$ & $\Bzero$ & $\Btwohalf$ & $\Bseven$ & $\Bsevenhalf$ & $\Athree\Bzero$ & $\Athreehalf\Bsevenhalf$ & $\Bonehalf\Bseven$\\\hline
   \rowcolor{gray!15}
   $\emptyword$ & $\bot$ & $\bot$ & $\bot$ & $\top$ & $\bot$ & $\top$ & $\bot$ & $\bot$\\
   \rowcolor{gray!15}
   $\Athree$ & $\bot$ & $\top$ & $\top$ & $\top$ & $\top$ & $\bot$ & $\bot$ & $\top$\\
   $\Athreehalf$ & $\bot$ & $\bot$ & $\bot$ & $\bot$ & $\bot$ & $\bot$ & $\bot$ & $\bot$\\
   $\Bzero$ & $\bot$ & $\bot$ & $\bot$ & $\bot$ & $\bot$ & $\bot$ & $\bot$ & $\bot$\\
   $\Bonehalf$ & $\bot$ & $\bot$ & $\bot$ & $\bot$ & $\bot$ & $\bot$ & $\bot$ & $\bot$\\
   $\Btwohalf$ & $\bot$ & $\bot$ & $\bot$ & $\top$ & $\bot$ & $\bot$ & $\bot$ & $\bot$\\
   \rowcolor{gray!15}
   $\Bseven$ & $\top$ & $\top$ & $\top$ & $\top$ & $\top$ & $\bot$ & $\bot$ & $\top$\\
   $\Bsevenhalf$ & $\bot$ & $\bot$ & $\bot$ & $\top$ & $\bot$ & $\bot$ & $\bot$ & $\bot$\\
   $\Athree\Athree$ & $\bot$ & $\bot$ & $\bot$ & $\top$ & $\bot$ & $\top$ & $\bot$ & $\bot$\\
   $\Athree\Athreehalf$ & $\bot$ & $\bot$ & $\bot$ & $\top$ & $\bot$ & $\top$ & $\bot$ & $\bot$\\
   $\Athree\Bzero$ & $\top$ & $\top$ & $\top$ & $\top$ & $\top$ & $\bot$ & $\bot$ & $\top$\\
   $\Athree\Bonehalf$ & $\top$ & $\top$ & $\top$ & $\top$ & $\top$ & $\bot$ & $\bot$ & $\top$\\
   $\Athree\Btwohalf$ & $\top$ & $\top$ & $\top$ & $\top$ & $\top$ & $\bot$ & $\bot$ & $\top$\\
   $\Athree\Bseven$ & $\top$ & $\top$ & $\top$ & $\top$ & $\top$ & $\bot$ & $\bot$ & $\top$\\
   $\Athree\Bsevenhalf$ & $\top$ & $\top$ & $\top$ & $\top$ & $\top$ & $\bot$ & $\bot$ & $\top$\\
   $\Bseven\Athree$ & $\bot$ & $\bot$ & $\bot$ & $\top$ & $\bot$ & $\bot$ & $\bot$ & $\bot$\\
   $\Bseven\Athreehalf$ & $\bot$ & $\bot$ & $\bot$ & $\bot$ & $\bot$ & $\bot$ & $\bot$ & $\bot$\\
   $\Bseven\Bzero$ & $\top$ & $\top$ & $\top$ & $\top$ & $\top$ & $\bot$ & $\bot$ & $\top$\\
   $\Bseven\Bonehalf$ & $\top$ & $\top$ & $\top$ & $\top$ & $\top$ & $\bot$ & $\bot$ & $\top$\\
   $\Bseven\Btwohalf$ & $\top$ & $\top$ & $\top$ & $\top$ & $\top$ & $\bot$ & $\bot$ & $\top$\\
   $\Bseven\Bseven$ & $\top$ & $\top$ & $\top$ & $\top$ & $\top$ & $\bot$ & $\bot$ & $\top$\\
   $\Bseven\Bsevenhalf$ & $\top$ & $\top$ & $\top$ & $\top$ & $\top$ & $\bot$ & $\bot$ & $\top$
  \end{tabular}
  \caption{Seventh observation table $\Table_7$.}%
  \label{figure:observation_tables:7}
 \end{subfigure}
 \hfill
 \begin{subfigure}{0.9\linewidth}
  \centering
  \begin{tikzpicture}[auto, semithick,node distance=1.5cm,scale=0.78,every node/.style={initial text={},transform shape}]
    \node[state, initial] (q0) at (0, 0) {$\WorkLocZero$};
    \node[state, accepting] (q1) at (4, -3) {$\WorkLocTwo$};
    \node[state] (q2) at (4, 0) {$\WorkLocOne$};
    \workbranch{c12}{(1.8,1.5)}
    \workbranch{c012}{(0,-3)}
    \path[->]
      (q0) edge node[above left] {$a,[0,3]$} (c12)
      (q0) edge[bend right=15] node[left] {$b,[2,3)$} (c012)
      (q0) edge[bend left=10] node[above right,pos=.3] {$b,[3,7]$} (q1)
      (q0) edge[bend left=15] node[right] {$b,(7,\infty)$} (c012)
      (q1) edge[bend left=10] node[below] {$a,[0,3]$} (c012)
      (q1) edge[loop right] node {$b,[0,\infty)$} (q1)
      (q2) edge node[above,pos=.7] {$a,[0,\infty)$} (q0)
      (q2) edge node[right] {$b,[0,\infty)$} (q1)
      (c12) edge (q1)
      (c12) edge (q2)
      (c012) edge (q0)
      (c012) edge[bend left=10] (q1)
      (c012) edge (q2);
  \end{tikzpicture}
  \caption{Seventh hypothesis $\hypothesisA^7$.}%
  \label{figure:hypothesis:7}
 \end{subfigure}
 \caption{Observation table and the corresponding hypothesis in the seventh step.}
\end{figure}

Then, the learner adds $(a,3)(a,3.5)(b,7)$ to $\SuffixSet$ and makes the observation table cohesive.
The resulting observation table $\Table_8$ and the corresponding hypothesis $\hypothesisA^8$ are shown in \cref{figure:observation_tables:8,figure:hypothesis:8}.
Here, we still have $\Basis=\{\emptyword,\Athree,\Bseven\}$, but the target of the $a$-transition from $\WorkLocZero$ is refined from $\WorkLocOne \lor \WorkLocTwo$ to $\WorkLocTwo$.
The learner asks an equivalence query, and the teacher returns a counterexample $(a,3)(b,7.5)(a,3)(b,7)$, which is rejected by the target ARTA but accepted by $\hypothesisA^8$.

\begin{figure}[tbp]
 \begin{subfigure}{\linewidth}
  \centering
  \scriptsize
  \rotatebox{90}{\begin{tabular}{r|cccccccccc}
   & $\emptyword$ & $\Bzero$ & $\Btwohalf$ & $\Bseven$ & $\Bsevenhalf$ & $\Athree\Bzero$ & $\Athreehalf\Bseven$ & $\Athreehalf\Bsevenhalf$ & $\Bonehalf\Bseven$ & $\Athree\Athreehalf\Bseven$\\\hline
   \rowcolor{gray!15}
   $\emptyword$ & $\bot$ & $\bot$ & $\bot$ & $\top$ & $\bot$ & $\top$ & $\bot$ & $\bot$ & $\bot$ & $\top$\\
   \rowcolor{gray!15}
   $\Athree$ & $\bot$ & $\top$ & $\top$ & $\top$ & $\top$ & $\bot$ & $\top$ & $\bot$ & $\top$ & $\bot$\\
   $\Athreehalf$ & $\bot$ & $\bot$ & $\bot$ & $\bot$ & $\bot$ & $\bot$ & $\bot$ & $\bot$ & $\bot$ & $\bot$\\
   $\Bzero$ & $\bot$ & $\bot$ & $\bot$ & $\bot$ & $\bot$ & $\bot$ & $\bot$ & $\bot$ & $\bot$ & $\bot$\\
   $\Bonehalf$ & $\bot$ & $\bot$ & $\bot$ & $\bot$ & $\bot$ & $\bot$ & $\bot$ & $\bot$ & $\bot$ & $\bot$\\
   $\Btwohalf$ & $\bot$ & $\bot$ & $\bot$ & $\top$ & $\bot$ & $\bot$ & $\bot$ & $\bot$ & $\bot$ & $\bot$\\
   \rowcolor{gray!15}
   $\Bseven$ & $\top$ & $\top$ & $\top$ & $\top$ & $\top$ & $\bot$ & $\bot$ & $\bot$ & $\top$ & $\bot$\\
   $\Bsevenhalf$ & $\bot$ & $\bot$ & $\bot$ & $\top$ & $\bot$ & $\bot$ & $\bot$ & $\bot$ & $\bot$ & $\bot$\\
   $\Athree\Athree$ & $\bot$ & $\bot$ & $\bot$ & $\top$ & $\bot$ & $\top$ & $\bot$ & $\bot$ & $\bot$ & $\top$\\
   $\Athree\Athreehalf$ & $\bot$ & $\bot$ & $\bot$ & $\top$ & $\bot$ & $\top$ & $\bot$ & $\bot$ & $\bot$ & $\top$\\
   $\Athree\Bzero$ & $\top$ & $\top$ & $\top$ & $\top$ & $\top$ & $\bot$ & $\bot$ & $\bot$ & $\top$ & $\bot$\\
   $\Athree\Bonehalf$ & $\top$ & $\top$ & $\top$ & $\top$ & $\top$ & $\bot$ & $\bot$ & $\bot$ & $\top$ & $\bot$\\
   $\Athree\Btwohalf$ & $\top$ & $\top$ & $\top$ & $\top$ & $\top$ & $\bot$ & $\bot$ & $\bot$ & $\top$ & $\bot$\\
   $\Athree\Bseven$ & $\top$ & $\top$ & $\top$ & $\top$ & $\top$ & $\bot$ & $\bot$ & $\bot$ & $\top$ & $\bot$\\
   $\Athree\Bsevenhalf$ & $\top$ & $\top$ & $\top$ & $\top$ & $\top$ & $\bot$ & $\bot$ & $\bot$ & $\top$ & $\bot$\\
   $\Bseven\Athree$ & $\bot$ & $\bot$ & $\bot$ & $\top$ & $\bot$ & $\bot$ & $\bot$ & $\bot$ & $\bot$ & $\bot$\\
   $\Bseven\Athreehalf$ & $\bot$ & $\bot$ & $\bot$ & $\bot$ & $\bot$ & $\bot$ & $\bot$ & $\bot$ & $\bot$ & $\bot$\\
   $\Bseven\Bzero$ & $\top$ & $\top$ & $\top$ & $\top$ & $\top$ & $\bot$ & $\bot$ & $\bot$ & $\top$ & $\bot$\\
   $\Bseven\Bonehalf$ & $\top$ & $\top$ & $\top$ & $\top$ & $\top$ & $\bot$ & $\bot$ & $\bot$ & $\top$ & $\bot$\\
   $\Bseven\Btwohalf$ & $\top$ & $\top$ & $\top$ & $\top$ & $\top$ & $\bot$ & $\bot$ & $\bot$ & $\top$ & $\bot$\\
   $\Bseven\Bseven$ & $\top$ & $\top$ & $\top$ & $\top$ & $\top$ & $\bot$ & $\bot$ & $\bot$ & $\top$ & $\bot$\\
   $\Bseven\Bsevenhalf$ & $\top$ & $\top$ & $\top$ & $\top$ & $\top$ & $\bot$ & $\bot$ & $\bot$ & $\top$ & $\bot$
  \end{tabular}}
  \caption{Eighth observation table $\Table_8$.}%
  \label{figure:observation_tables:8}
 \end{subfigure}
 \hfill
 \begin{subfigure}{0.9\linewidth}
  \centering
  \footnotesize
  \begin{tikzpicture}[auto, semithick,node distance=1.5cm,scale=0.78,every node/.style={initial text={},transform shape}]
    \node[state, initial] (q0) at (0, 0) {$\WorkLocZero$};
    \node[state, accepting] (q1) at (4, -3) {$\WorkLocTwo$};
    \node[state] (q2) at (4, 0) {$\WorkLocOne$};
    \workbranch{c012}{(0,-3)}
    \path[->]
      (q0) edge[bend left=10] node[above] {$a,[0,3]$} (q2)
      (q0) edge[bend right=20] node[left] {$b,[2,3)$} (c012)
      (q0) edge[bend left=10] node[pos=0.7,above right] {$b,[3,7]$} (q1)
      (q0) edge[bend left=20] node[pos=0.3,right] {$b,(7,\infty)$} (c012)
      (q1) edge[bend left=15] node[below] {$a,[0,3]$} (c012)
      (q1) edge[loop right] node {$b,[0,\infty)$} (q1)
      (q2) edge[bend left=10] node[below] {$a,[0,\infty)$} (q0)
      (q2) edge node[right] {$b,[0,\infty)$} (q1)
      (c012) edge (q0)
      (c012) edge (q1)
      (c012) edge (q2);
  \end{tikzpicture}
  \caption{Eighth hypothesis $\hypothesisA^8$.}%
  \label{figure:hypothesis:8}
 \end{subfigure}
 \caption{Observation table and the corresponding hypothesis in the eighth step.}
\end{figure}

Then, the learner adds the suffixes of $(a,3)(b,7.5)(a,3)(b,7)$ to $\SuffixSet$ and makes the observation table cohesive.
The resulting observation table $\Table_9$ and the corresponding hypothesis $\hypothesisA^9$ are shown in \cref{figure:observation_tables:9,figure:running_arta}.
Here, the basis changes to $\Basis=\{\emptyword,\Athree,\Athree\Bzero\}$, so the accepting basis row changes from $\Bseven$ to $\Athree\Bzero$.
The learner asks an equivalence query, and the teacher returns no counterexample.

\begin{figure}[tbp]
  \centering
  \scriptsize
  \scalebox{0.92}{\rotatebox{90}{
  \begin{tabular}{r|ccccccccccccc}
   & $\emptyword$ & $\Bzero$ & $\Btwohalf$ & $\Bseven$ & $\Bsevenhalf$ & $\Athree\Bzero$ & $\Athree\Bseven$ & $\Athreehalf\Bseven$ & $\Athreehalf\Bsevenhalf$ & $\Bonehalf\Bseven$ & $\Athree\Athreehalf\Bseven$ & $\Bsevenhalf\Athree\Bseven$ & $\Athree\Bsevenhalf\Athree\Bseven$\\\hline
   \rowcolor{gray!15}
   $\emptyword$ & $\bot$ & $\bot$ & $\bot$ & $\top$ & $\bot$ & $\top$ & $\top$ & $\bot$ & $\bot$ & $\bot$ & $\top$ & $\top$ & $\bot$\\
   \rowcolor{gray!15}
   $\Athree$ & $\bot$ & $\top$ & $\top$ & $\top$ & $\top$ & $\bot$ & $\top$ & $\top$ & $\bot$ & $\top$ & $\bot$ & $\bot$ & $\top$\\
   $\Athreehalf$ & $\bot$ & $\bot$ & $\bot$ & $\bot$ & $\bot$ & $\bot$ & $\bot$ & $\bot$ & $\bot$ & $\bot$ & $\bot$ & $\bot$ & $\bot$\\
   $\Bzero$ & $\bot$ & $\bot$ & $\bot$ & $\bot$ & $\bot$ & $\bot$ & $\bot$ & $\bot$ & $\bot$ & $\bot$ & $\bot$ & $\bot$ & $\bot$\\
   $\Bonehalf$ & $\bot$ & $\bot$ & $\bot$ & $\bot$ & $\bot$ & $\bot$ & $\bot$ & $\bot$ & $\bot$ & $\bot$ & $\bot$ & $\bot$ & $\bot$\\
   $\Btwohalf$ & $\bot$ & $\bot$ & $\bot$ & $\top$ & $\bot$ & $\bot$ & $\top$ & $\bot$ & $\bot$ & $\bot$ & $\bot$ & $\bot$ & $\bot$\\
   $\Bseven$ & $\top$ & $\top$ & $\top$ & $\top$ & $\top$ & $\bot$ & $\top$ & $\bot$ & $\bot$ & $\top$ & $\bot$ & $\bot$ & $\bot$\\
   $\Bsevenhalf$ & $\bot$ & $\bot$ & $\bot$ & $\top$ & $\bot$ & $\bot$ & $\top$ & $\bot$ & $\bot$ & $\bot$ & $\bot$ & $\bot$ & $\bot$\\
   $\Athree\Athree$ & $\bot$ & $\bot$ & $\bot$ & $\top$ & $\bot$ & $\top$ & $\top$ & $\bot$ & $\bot$ & $\bot$ & $\top$ & $\top$ & $\bot$\\
   $\Athree\Athreehalf$ & $\bot$ & $\bot$ & $\bot$ & $\top$ & $\bot$ & $\top$ & $\top$ & $\bot$ & $\bot$ & $\bot$ & $\top$ & $\top$ & $\bot$\\
   \rowcolor{gray!15}
   $\Athree\Bzero$ & $\top$ & $\top$ & $\top$ & $\top$ & $\top$ & $\bot$ & $\bot$ & $\bot$ & $\bot$ & $\top$ & $\bot$ & $\bot$ & $\bot$\\
   $\Athree\Bonehalf$ & $\top$ & $\top$ & $\top$ & $\top$ & $\top$ & $\bot$ & $\bot$ & $\bot$ & $\bot$ & $\top$ & $\bot$ & $\bot$ & $\bot$\\
   $\Athree\Btwohalf$ & $\top$ & $\top$ & $\top$ & $\top$ & $\top$ & $\bot$ & $\bot$ & $\bot$ & $\bot$ & $\top$ & $\bot$ & $\bot$ & $\bot$\\
   $\Athree\Bseven$ & $\top$ & $\top$ & $\top$ & $\top$ & $\top$ & $\bot$ & $\bot$ & $\bot$ & $\bot$ & $\top$ & $\bot$ & $\bot$ & $\bot$\\
   $\Athree\Bsevenhalf$ & $\top$ & $\top$ & $\top$ & $\top$ & $\top$ & $\bot$ & $\bot$ & $\bot$ & $\bot$ & $\top$ & $\bot$ & $\bot$ & $\bot$\\
   $\Bseven\Athree$ & $\bot$ & $\bot$ & $\bot$ & $\top$ & $\bot$ & $\bot$ & $\top$ & $\bot$ & $\bot$ & $\bot$ & $\bot$ & $\bot$ & $\bot$\\
   $\Bseven\Athreehalf$ & $\bot$ & $\bot$ & $\bot$ & $\bot$ & $\bot$ & $\bot$ & $\bot$ & $\bot$ & $\bot$ & $\bot$ & $\bot$ & $\bot$ & $\bot$\\
   $\Bseven\Bzero$ & $\top$ & $\top$ & $\top$ & $\top$ & $\top$ & $\bot$ & $\bot$ & $\bot$ & $\bot$ & $\top$ & $\bot$ & $\bot$ & $\bot$\\
   $\Bseven\Bonehalf$ & $\top$ & $\top$ & $\top$ & $\top$ & $\top$ & $\bot$ & $\bot$ & $\bot$ & $\bot$ & $\top$ & $\bot$ & $\bot$ & $\bot$\\
   $\Bseven\Btwohalf$ & $\top$ & $\top$ & $\top$ & $\top$ & $\top$ & $\bot$ & $\bot$ & $\bot$ & $\bot$ & $\top$ & $\bot$ & $\bot$ & $\bot$\\
   $\Bseven\Bseven$ & $\top$ & $\top$ & $\top$ & $\top$ & $\top$ & $\bot$ & $\bot$ & $\bot$ & $\bot$ & $\top$ & $\bot$ & $\bot$ & $\bot$\\
   $\Bseven\Bsevenhalf$ & $\top$ & $\top$ & $\top$ & $\top$ & $\top$ & $\bot$ & $\bot$ & $\bot$ & $\bot$ & $\top$ & $\bot$ & $\bot$ & $\bot$\\
   $\Athree\Bzero\Athree$ & $\bot$ & $\bot$ & $\bot$ & $\bot$ & $\bot$ & $\bot$ & $\bot$ & $\bot$ & $\bot$ & $\bot$ & $\bot$ & $\bot$ & $\bot$\\
   $\Athree\Bzero\Athreehalf$ & $\bot$ & $\bot$ & $\bot$ & $\bot$ & $\bot$ & $\bot$ & $\bot$ & $\bot$ & $\bot$ & $\bot$ & $\bot$ & $\bot$ & $\bot$\\
   $\Athree\Bzero\Bzero$ & $\top$ & $\top$ & $\top$ & $\top$ & $\top$ & $\bot$ & $\bot$ & $\bot$ & $\bot$ & $\top$ & $\bot$ & $\bot$ & $\bot$\\
   $\Athree\Bzero\Bonehalf$ & $\top$ & $\top$ & $\top$ & $\top$ & $\top$ & $\bot$ & $\bot$ & $\bot$ & $\bot$ & $\top$ & $\bot$ & $\bot$ & $\bot$\\
   $\Athree\Bzero\Btwohalf$ & $\top$ & $\top$ & $\top$ & $\top$ & $\top$ & $\bot$ & $\bot$ & $\bot$ & $\bot$ & $\top$ & $\bot$ & $\bot$ & $\bot$\\
   $\Athree\Bzero\Bseven$ & $\top$ & $\top$ & $\top$ & $\top$ & $\top$ & $\bot$ & $\bot$ & $\bot$ & $\bot$ & $\top$ & $\bot$ & $\bot$ & $\bot$\\
   $\Athree\Bzero\Bsevenhalf$ & $\top$ & $\top$ & $\top$ & $\top$ & $\top$ & $\bot$ & $\bot$ & $\bot$ & $\bot$ & $\top$ & $\bot$ & $\bot$ & $\bot$
  \end{tabular}}}
  \caption{Ninth observation table $\Table_9$.}%
  \label{figure:observation_tables:9}
\end{figure}
\section{Details of the experiments}\label{section:detail_experiments}

\paragraph{Stopping criteria in approximate basis identification.}

\ourTool{} identifies the basis using the BIP encoding explained in \cref{section:exact_basis_milp}.
To avoid extremely long execution for some corner cases, we used the following approximate optimization.
\begin{itemize}
  \item The optimality (MIP) gap is set to 0.01.
  \item The time limit for solving each optimization problem is 5.0 seconds.
\end{itemize}

\cref{table:full_results:1,table:full_results:2,table:full_results:3,table:full_results:4,table:full_results:5} show the full results of experiments. The column ``DRTA size'' shows the number of states of the minimum DRTA equivalent to the target RTA.\@

\begin{table}[tbp]
  \centering
  \footnotesize
  \caption{The full results of the experiments (part 1).}
  \label{table:full_results:1}
  \begin{tabular}{lrrrrrrrrr}
    \toprule
    \multirow{3}{*}{$(|\Loc|,|\Alphabet|)-i$} & \multirow{3}{*}{\begin{tabular}{c}DRTA \\ size\end{tabular}} & \multicolumn{4}{c}{\ALstarRTA{}} & \multicolumn{4}{c}{\NLstarRTA{}} \\
    \cmidrule(lr){3-6}\cmidrule(lr){7-10}
    & & \# EqQ & \# MemQ & $|\Loc_{\hypothesisA}|$ & Total Time & \# EqQ & \# MemQ & $|\Loc_{\hypothesisA}|$ & Total Time \\
    \midrule
(3,2)-1 & 5 & 9 & 357 & \tbcolor{}3 & \tbcolor{}0:00.00 & \tbcolor{}7 & \tbcolor{}144 & 4 & 0:00.06 \\
(3,2)-2 & 7 & 18 & 1902 & \tbcolor{}3 & \tbcolor{}0:00.01 & \tbcolor{}17 & \tbcolor{}497 & 4 & 0:00.15 \\
(3,2)-3 & 4 & 12 & 785 & \tbcolor{}3 & \tbcolor{}0:00.01 & \tbcolor{}11 & \tbcolor{}184 & 4 & 0:00.07 \\
(3,2)-4 & 8 & 14 & 933 & \tbcolor{}3 & \tbcolor{}0:00.01 & \tbcolor{}9 & \tbcolor{}216 & 4 & 0:00.06 \\
(3,2)-5 & 5 & 15 & 949 & \tbcolor{}3 & \tbcolor{}0:00.01 & \tbcolor{}13 & \tbcolor{}343 & 4 & 0:00.07 \\
(3,2)-6 & 6 & \tbcolor{}13 & 1039 & \tbcolor{}3 & \tbcolor{}0:00.01 & \tbcolor{}13 & \tbcolor{}450 & 4 & 0:00.07 \\
(3,2)-7 & 5 & \tbcolor{}14 & 1326 & \tbcolor{}3 & \tbcolor{}0:00.01 & \tbcolor{}14 & \tbcolor{}540 & 4 & 0:00.09 \\
(3,2)-8 & 4 & 13 & 706 & \tbcolor{}3 & \tbcolor{}0:00.00 & \tbcolor{}10 & \tbcolor{}200 & 4 & 0:00.06 \\
(3,2)-9 & 6 & 13 & 645 & \tbcolor{}3 & \tbcolor{}0:00.01 & \tbcolor{}8 & \tbcolor{}350 & 4 & 0:00.06 \\
(3,2)-10 & 4 & 11 & 819 & \tbcolor{}3 & \tbcolor{}0:00.02 & \tbcolor{}9 & \tbcolor{}156 & 4 & 0:00.05 \\
(3,2)-11 & 6 & 15 & 1271 & \tbcolor{}3 & \tbcolor{}0:00.01 & \tbcolor{}12 & \tbcolor{}208 & 4 & 0:00.07 \\
(3,2)-12 & 8 & 16 & 1153 & \tbcolor{}3 & \tbcolor{}0:00.01 & \tbcolor{}14 & \tbcolor{}405 & 4 & 0:00.07 \\
(3,2)-13 & 8 & 16 & 1035 & \tbcolor{}3 & \tbcolor{}0:00.01 & \tbcolor{}13 & \tbcolor{}396 & 4 & 0:00.07 \\
(3,2)-14 & 4 & 9 & 353 & \tbcolor{}3 & \tbcolor{}0:00.00 & \tbcolor{}7 & \tbcolor{}180 & 4 & 0:00.04 \\
(3,2)-15 & 8 & 15 & 1326 & \tbcolor{}3 & \tbcolor{}0:00.01 & \tbcolor{}12 & \tbcolor{}378 & 4 & 0:00.06 \\
(3,2)-16 & 4 & 7 & 288 & \tbcolor{}3 & \tbcolor{}0:00.00 & \tbcolor{}4 & \tbcolor{}72 & 4 & 0:00.04 \\
(3,2)-17 & 8 & 15 & 2014 & \tbcolor{}3 & \tbcolor{}0:00.02 & \tbcolor{}14 & \tbcolor{}432 & 4 & 0:00.09 \\
(3,2)-18 & 5 & 12 & 908 & \tbcolor{}3 & \tbcolor{}0:00.01 & \tbcolor{}10 & \tbcolor{}156 & 4 & 0:00.06 \\
(3,2)-19 & 5 & 8 & 319 & \tbcolor{}3 & \tbcolor{}0:00.01 & \tbcolor{}7 & \tbcolor{}96 & 4 & 0:00.04 \\
(3,2)-20 & 8 & 17 & 1373 & \tbcolor{}3 & \tbcolor{}0:00.01 & \tbcolor{}14 & \tbcolor{}672 & 4 & 0:00.11 \\
\midrule
(4,2)-1 & 9 & 21 & 3025 & \tbcolor{}4 & \tbcolor{}0:00.02 & \tbcolor{}16 & \tbcolor{}836 & 5 & 0:00.13 \\
(4,2)-2 & 6 & 14 & 1471 & \tbcolor{}4 & \tbcolor{}0:00.01 & \tbcolor{}11 & \tbcolor{}423 & 5 & 0:00.07 \\
(4,2)-3 & 7 & 15 & 1421 & \tbcolor{}4 & \tbcolor{}0:00.01 & \tbcolor{}14 & \tbcolor{}441 & 5 & 0:00.08 \\
(4,2)-4 & 7 & \tbcolor{}13 & 935 & \tbcolor{}4 & \tbcolor{}0:00.01 & \tbcolor{}13 & \tbcolor{}536 & 5 & 0:00.10 \\
(4,2)-5 & 15 & \tbcolor{}25 & 7486 & \tbcolor{}4 & \tbcolor{}0:00.06 & 26 & \tbcolor{}1716 & 5 & 0:00.22 \\
(4,2)-6 & 7 & 15 & 2017 & \tbcolor{}4 & \tbcolor{}0:00.02 & \tbcolor{}12 & \tbcolor{}484 & 5 & 0:00.08 \\
(4,2)-7 & 10 & 15 & 1314 & \tbcolor{}4 & \tbcolor{}0:00.01 & \tbcolor{}12 & \tbcolor{}680 & 5 & 0:00.09 \\
(4,2)-8 & 6 & 15 & 1700 & \tbcolor{}4 & \tbcolor{}0:00.01 & \tbcolor{}11 & \tbcolor{}296 & 5 & 0:00.06 \\
(4,2)-9 & 7 & \tbcolor{}16 & 2019 & \tbcolor{}4 & \tbcolor{}0:00.01 & 17 & \tbcolor{}1188 & 5 & 0:00.15 \\
(4,2)-10 & 12 & 20 & 4007 & \tbcolor{}4 & \tbcolor{}0:00.02 & \tbcolor{}16 & \tbcolor{}855 & 5 & 0:00.12 \\
(4,2)-11 & 6 & 12 & 1804 & \tbcolor{}4 & \tbcolor{}0:00.01 & \tbcolor{}11 & \tbcolor{}384 & 5 & 0:00.06 \\
(4,2)-12 & 16 & 19 & 2853 & \tbcolor{}4 & \tbcolor{}0:00.03 & \tbcolor{}16 & \tbcolor{}792 & 5 & 0:00.14 \\
(4,2)-13 & 9 & 16 & 1617 & \tbcolor{}4 & \tbcolor{}0:00.02 & \tbcolor{}14 & \tbcolor{}440 & 5 & 0:00.08 \\
(4,2)-14 & 9 & 17 & 1362 & \tbcolor{}4 & \tbcolor{}0:00.01 & \tbcolor{}14 & \tbcolor{}763 & 5 & 0:00.17 \\
(4,2)-15 & 14 & 19 & 3510 & \tbcolor{}4 & \tbcolor{}0:00.03 & \tbcolor{}18 & \tbcolor{}1104 & 5 & 0:00.12 \\
(4,2)-16 & 16 & \tbcolor{}22 & 3650 & \tbcolor{}4 & \tbcolor{}0:00.03 & 23 & \tbcolor{}2325 & 5 & 0:00.31 \\
(4,2)-17 & 7 & 16 & 1747 & \tbcolor{}4 & \tbcolor{}0:00.01 & \tbcolor{}14 & \tbcolor{}468 & 5 & 0:00.09 \\
(4,2)-18 & 11 & \tbcolor{}16 & 2109 & \tbcolor{}5 & \tbcolor{}0:00.02 & 19 & \tbcolor{}1045 & 6 & 0:00.15 \\
(4,2)-19 & 8 & \tbcolor{}19 & 3036 & \tbcolor{}4 & \tbcolor{}0:00.03 & \tbcolor{}19 & \tbcolor{}820 & 5 & 0:00.13 \\
(4,2)-20 & 7 & 18 & 1915 & \tbcolor{}4 & \tbcolor{}0:00.01 & \tbcolor{}17 & \tbcolor{}612 & 5 & 0:00.09 \\
\bottomrule
\end{tabular}
\end{table}

\begin{table}[tbp]
  \centering
  \footnotesize
  \caption{The full results of the experiments (part 2).}
  \label{table:full_results:2}
  \begin{tabular}{lrrrrrrrrr}
    \toprule
    \multirow{3}{*}{$(|\Loc|,|\Alphabet|)-i$} & \multirow{3}{*}{\begin{tabular}{c}DRTA \\ size\end{tabular}} & \multicolumn{4}{c}{\ALstarRTA{}} & \multicolumn{4}{c}{\NLstarRTA{}} \\
    \cmidrule(lr){3-6}\cmidrule(lr){7-10}
    & & \# EqQ & \# MemQ & $|\Loc_{\hypothesisA}|$ & Total Time & \# EqQ & \# MemQ & $|\Loc_{\hypothesisA}|$ & Total Time \\
    \midrule
(5,2)-1 & 13 & 20 & 4822 & \tbcolor{}5 & \tbcolor{}0:00.03 & \tbcolor{}19 & \tbcolor{}793 & 6 & 0:00.14 \\
(5,2)-2 & 11 & 18 & 3566 & \tbcolor{}5 & \tbcolor{}0:00.02 & \tbcolor{}16 & \tbcolor{}616 & 6 & 0:00.10 \\
(5,2)-3 & 20 & 32 & 8386 & \tbcolor{}5 & \tbcolor{}0:00.09 & \tbcolor{}26 & \tbcolor{}1558 & 6 & 0:00.26 \\
(5,2)-4 & 10 & \tbcolor{}22 & 8086 & \tbcolor{}5 & \tbcolor{}0:00.05 & 24 & \tbcolor{}2142 & 6 & 0:00.21 \\
(5,2)-5 & 17 & \tbcolor{}18 & 4219 & \tbcolor{}5 & \tbcolor{}0:00.03 & 19 & \tbcolor{}1106 & 6 & 0:00.12 \\
(5,2)-6 & 17 & 28 & 9734 & \tbcolor{}5 & \tbcolor{}0:00.17 & \tbcolor{}26 & \tbcolor{}1818 & 6 & 0:00.23 \\
(5,2)-7 & 21 & 27 & 10321 & \tbcolor{}5 & \tbcolor{}0:00.06 & \tbcolor{}21 & \tbcolor{}1092 & 6 & 0:00.14 \\
(5,2)-8 & 10 & 18 & 4351 & \tbcolor{}5 & \tbcolor{}0:00.03 & \tbcolor{}17 & \tbcolor{}1200 & 6 & 0:00.12 \\
(5,2)-9 & 6 & 20 & 3461 & \tbcolor{}5 & \tbcolor{}0:00.02 & \tbcolor{}15 & \tbcolor{}876 & 6 & 0:00.09 \\
(5,2)-10 & 21 & \tbcolor{}23 & 5859 & \tbcolor{}5 & \tbcolor{}0:00.05 & 24 & \tbcolor{}1590 & 6 & 0:00.28 \\
    \midrule
(6,2)-1 & 25 & \tbcolor{}27 & 19411 & \tbcolor{}7 & \tbcolor{}0:00.30 & 33 & \tbcolor{}2808 & 8 & 0:00.56 \\
(6,2)-2 & 13 & 20 & 8517 & \tbcolor{}6 & \tbcolor{}0:00.04 & \tbcolor{}14 & \tbcolor{}876 & 7 & 0:00.15 \\
(6,2)-3 & 11 & \tbcolor{}22 & 12015 & \tbcolor{}6 & \tbcolor{}0:00.08 & 34 & \tbcolor{}3584 & 8 & 0:00.61 \\
(6,2)-4 & 13 & \tbcolor{}23 & 9258 & \tbcolor{}6 & \tbcolor{}0:00.06 & 24 & \tbcolor{}1692 & 7 & 0:00.26 \\
(6,2)-5 & 28 & \tbcolor{}28 & 18853 & \tbcolor{}7 & 0:00.97 & 32 & \tbcolor{}2672 & 8 & \tbcolor{}0:00.64 \\
(6,2)-6 & 13 & 19 & 6046 & \tbcolor{}6 & \tbcolor{}0:00.02 & \tbcolor{}17 & \tbcolor{}946 & 7 & 0:00.14 \\
(6,2)-7 & 32 & 22 & 5935 & \tbcolor{}6 & \tbcolor{}0:00.12 & \tbcolor{}18 & \tbcolor{}1068 & 7 & 0:00.16 \\
(6,2)-8 & 30 & 24 & 10159 & \tbcolor{}6 & \tbcolor{}0:00.09 & \tbcolor{}17 & \tbcolor{}1078 & 7 & 0:00.16 \\
(6,2)-9 & 11 & 21 & 9300 & \tbcolor{}6 & \tbcolor{}0:00.04 & \tbcolor{}18 & \tbcolor{}1131 & 7 & 0:00.18 \\
(6,2)-10 & 6 & \tbcolor{}17 & 4907 & \tbcolor{}5 & \tbcolor{}0:00.03 & \tbcolor{}17 & \tbcolor{}1520 & 6 & 0:00.22 \\
\midrule
(8,2)-1 & 26 & \tbcolor{}24 & 15739 & \tbcolor{}8 & \tbcolor{}0:00.07 & 26 & \tbcolor{}2178 & 9 & 0:00.40 \\
(8,2)-2 & 60 & 26 & 17560 & \tbcolor{}8 & 0:00.43 & \tbcolor{}19 & \tbcolor{}1764 & 9 & \tbcolor{}0:00.31 \\
(8,2)-3 & 40 & 34 & 24742 & \tbcolor{}8 & \tbcolor{}0:00.26 & \tbcolor{}25 & \tbcolor{}2580 & 9 & 0:00.38 \\
(8,2)-4 & 34 & 27 & 17519 & \tbcolor{}8 & \tbcolor{}0:00.14 & \tbcolor{}24 & \tbcolor{}2538 & 9 & 0:00.28 \\
(8,2)-5 & 97 & 30 & 19216 & \tbcolor{}8 & 0:00.50 & \tbcolor{}27 & \tbcolor{}2920 & 9 & \tbcolor{}0:00.45 \\
(8,2)-6 & 16 & 35 & 23660 & \tbcolor{}8 & 0:00.24 & \tbcolor{}21 & \tbcolor{}1729 & 9 & \tbcolor{}0:00.20 \\
(8,2)-7 & 52 & \tbcolor{}36 & 56084 & \tbcolor{}8 & 0:01.20 & 38 & \tbcolor{}5626 & 9 & \tbcolor{}0:00.73 \\
(8,2)-8 & 64 & 43 & 39172 & \tbcolor{}8 & 0:03.68 & \tbcolor{}37 & \tbcolor{}5616 & 10 & \tbcolor{}0:00.95 \\
(8,2)-9 & 19 & \tbcolor{}22 & 11588 & \tbcolor{}8 & \tbcolor{}0:00.05 & 24 & \tbcolor{}2331 & 9 & 0:00.27 \\
(8,2)-10 & 73 & 36 & 55303 & \tbcolor{}8 & 0:01.08 & \tbcolor{}35 & \tbcolor{}3654 & 9 & \tbcolor{}0:00.54 \\
\midrule
(8,4)-1 & 35 & 41 & 42866 & \tbcolor{}8 & \tbcolor{}0:00.22 & \tbcolor{}36 & \tbcolor{}3969 & 9 & 0:00.85 \\
(8,4)-2 & 37 & \tbcolor{}42 & 47398 & \tbcolor{}8 & \tbcolor{}0:00.39 & 45 & \tbcolor{}5562 & 9 & 0:01.22 \\
(8,4)-3 & 54 & 55 & 50070 & \tbcolor{}8 & \tbcolor{}0:00.81 & \tbcolor{}40 & \tbcolor{}13056 & 9 & 0:03.03 \\
(8,4)-4 & 58 & 51 & 42353 & \tbcolor{}8 & \tbcolor{}0:00.49 & \tbcolor{}48 & \tbcolor{}13175 & 9 & 0:03.39 \\
(8,4)-5 & 52 & \tbcolor{}49 & 42196 & \tbcolor{}8 & \tbcolor{}0:00.33 & \tbcolor{}49 & \tbcolor{}7128 & 9 & 0:01.62 \\
(8,4)-6 & 92 & 56 & 50723 & \tbcolor{}8 & \tbcolor{}0:00.50 & \tbcolor{}53 & \tbcolor{}7072 & 9 & 0:01.56 \\
(8,4)-7 & 37 & 46 & 39821 & \tbcolor{}8 & \tbcolor{}0:00.49 & \tbcolor{}40 & \tbcolor{}5481 & 9 & 0:01.56 \\
(8,4)-8 & 49 & 48 & 85662 & \tbcolor{}8 & \tbcolor{}0:01.46 & \tbcolor{}35 & \tbcolor{}9175 & 9 & 0:01.85 \\
(8,4)-9 & 90 & \tbcolor{}50 & 71963 & \tbcolor{}8 & \tbcolor{}0:01.13 & 74 & \tbcolor{}11877 & 10 & 0:03.23 \\
(8,4)-10 & 42 & 46 & 80570 & \tbcolor{}8 & \tbcolor{}0:00.82 & \tbcolor{}44 & \tbcolor{}8000 & 9 & 0:02.22 \\
\bottomrule
\end{tabular}
\end{table}

\begin{table}[tbp]
  \centering
  \footnotesize
  \caption{The full results of the experiments (part 3).}
  \label{table:full_results:3}
  \begin{tabular}{lrrrrrrrrr}
    \toprule
    \multirow{3}{*}{$(|\Loc|,|\Alphabet|)-i$} & \multirow{3}{*}{\begin{tabular}{c}DRTA \\ size\end{tabular}} & \multicolumn{4}{c}{\ALstarRTA{}} & \multicolumn{4}{c}{\NLstarRTA{}} \\
    \cmidrule(lr){3-6}\cmidrule(lr){7-10}
    & & \# EqQ & \# MemQ & $|\Loc_{\hypothesisA}|$ & Total Time & \# EqQ & \# MemQ & $|\Loc_{\hypothesisA}|$ & Total Time \\
    \midrule
(10,2)-1 & 36 & \tbcolor{}30 & 32129 & \tbcolor{}10 & \tbcolor{}0:00.30 & 36 & \tbcolor{}4176 & 11 & 0:00.59 \\
(10,2)-2 & 63 & 39 & 43102 & \tbcolor{}10 & 0:01.23 & \tbcolor{}34 & \tbcolor{}5819 & 11 & \tbcolor{}0:01.06 \\
(10,2)-3 & 30 & 42 & 48027 & \tbcolor{}10 & \tbcolor{}0:00.59 & \tbcolor{}39 & \tbcolor{}5740 & 11 & 0:00.94 \\
(10,2)-4 & 40 & 39 & 65153 & \tbcolor{}10 & \tbcolor{}0:00.49 & \tbcolor{}38 & \tbcolor{}5330 & 11 & 0:00.87 \\
(10,2)-5 & 42 & 39 & 73581 & \tbcolor{}10 & 0:01.23 & \tbcolor{}33 & \tbcolor{}3630 & 11 & \tbcolor{}0:00.62 \\
(10,2)-6 & 30 & 32 & 46930 & \tbcolor{}9 & \tbcolor{}0:00.48 & \tbcolor{}31 & \tbcolor{}5500 & 11 & 0:00.77 \\
(10,2)-7 & 144 & \tbcolor{}40 & 66313 & \tbcolor{}10 & 0:03.48 & \tbcolor{}40 & \tbcolor{}5772 & 11 & \tbcolor{}0:01.08 \\
(10,2)-8 & 133 & 43 & 41985 & \tbcolor{}11 & 0:04.16 & \tbcolor{}37 & \tbcolor{}7506 & 12 & \tbcolor{}0:01.53 \\
(10,2)-9 & 103 & \tbcolor{}36 & 53640 & \tbcolor{}10 & 0:03.80 & 39 & \tbcolor{}9660 & 11 & \tbcolor{}0:02.16 \\
(10,2)-10 & 44 & 42 & 34729 & \tbcolor{}10 & \tbcolor{}0:00.33 & \tbcolor{}37 & \tbcolor{}3520 & 11 & 0:00.60 \\
\midrule
(10,4)-1 & 32 & \tbcolor{}46 & 61658 & \tbcolor{}10 & \tbcolor{}0:00.71 & 48 & \tbcolor{}13545 & 11 & 0:02.82 \\
(10,4)-2 & 22 & 37 & 37080 & \tbcolor{}10 & \tbcolor{}0:00.28 & \tbcolor{}33 & \tbcolor{}5160 & 11 & 0:00.82 \\
(10,4)-3 & 45 & 58 & 135669 & \tbcolor{}10 & \tbcolor{}0:00.99 & \tbcolor{}53 & \tbcolor{}8510 & 11 & 0:01.62 \\
(10,4)-4 & 82 & \tbcolor{}55 & 129186 & \tbcolor{}11 & \tbcolor{}0:02.42 & 59 & \tbcolor{}10570 & 12 & 0:03.42 \\
(10,4)-5 & 107 & 62 & 258537 & \tbcolor{}10 & 0:09.91 & \tbcolor{}46 & \tbcolor{}9990 & 11 & \tbcolor{}0:02.32 \\
(10,4)-6 & 24 & 43 & 52379 & \tbcolor{}10 & \tbcolor{}0:00.31 & \tbcolor{}37 & \tbcolor{}4975 & 11 & 0:01.13 \\
(10,4)-7 & 150 & 62 & 141180 & \tbcolor{}10 & 0:04.55 & \tbcolor{}56 & \tbcolor{}12750 & 11 & \tbcolor{}0:04.21 \\
(10,4)-8 & 63 & \tbcolor{}58 & 94305 & \tbcolor{}9 & \tbcolor{}0:01.73 & 62 & \tbcolor{}15192 & 11 & 0:04.12 \\
(10,4)-9 & 71 & 59 & 100102 & \tbcolor{}10 & \tbcolor{}0:01.46 & \tbcolor{}50 & \tbcolor{}9880 & 11 & 0:01.94 \\
(10,4)-10 & 31 & \tbcolor{}44 & 59776 & \tbcolor{}10 & \tbcolor{}0:00.46 & 48 & \tbcolor{}6072 & 11 & 0:01.76 \\
\midrule
(10,6)-1 & 66 & 71 & 158398 & \tbcolor{}10 & \tbcolor{}0:02.54 & \tbcolor{}62 & \tbcolor{}22304 & 11 & 0:06.55 \\
(10,6)-2 & 57 & 63 & 125573 & \tbcolor{}10 & \tbcolor{}0:01.76 & \tbcolor{}56 & \tbcolor{}26910 & 11 & 0:07.84 \\
(10,6)-3 & 140 & 75 & 174459 & \tbcolor{}10 & \tbcolor{}0:02.67 & \tbcolor{}62 & \tbcolor{}13062 & 11 & 0:03.15 \\
(10,6)-4 & 31 & 67 & 178443 & \tbcolor{}10 & \tbcolor{}0:01.71 & \tbcolor{}58 & \tbcolor{}8903 & 11 & 0:03.06 \\
(10,6)-5 & 124 & 77 & 261261 & \tbcolor{}10 & 0:07.12 & \tbcolor{}61 & \tbcolor{}15300 & 11 & \tbcolor{}0:06.50 \\
(10,6)-6 & 56 & 59 & 98942 & \tbcolor{}10 & \tbcolor{}0:01.35 & \tbcolor{}52 & \tbcolor{}8876 & 11 & 0:02.28 \\
(10,6)-7 & 80 & 78 & 171743 & \tbcolor{}10 & \tbcolor{}0:02.53 & \tbcolor{}70 & \tbcolor{}14112 & 11 & 0:03.52 \\
(10,6)-8 & 111 & 82 & 177902 & \tbcolor{}10 & \tbcolor{}0:03.89 & \tbcolor{}76 & \tbcolor{}21560 & 11 & 0:10.47 \\
(10,6)-9 & 57 & 78 & 191526 & \tbcolor{}10 & \tbcolor{}0:02.78 & \tbcolor{}70 & \tbcolor{}19418 & 11 & 0:07.23 \\
(10,6)-10 & 68 & 67 & 177132 & \tbcolor{}10 & \tbcolor{}0:02.88 & \tbcolor{}61 & \tbcolor{}23018 & 11 & 0:11.02 \\
\midrule
(10,8)-1 & 98 & 99 & 772855 & \tbcolor{}10 & 1:05.00 & \tbcolor{}80 & \tbcolor{}14756 & 11 & \tbcolor{}0:07.23 \\
(10,8)-2 & 67 & 93 & 340084 & \tbcolor{}10 & \tbcolor{}0:05.97 & \tbcolor{}73 & \tbcolor{}21120 & 11 & 0:06.67 \\
(10,8)-3 & 173 & 103 & 678927 & \tbcolor{}10 & 0:32.54 & \tbcolor{}97 & \tbcolor{}19188 & 11 & \tbcolor{}0:08.45 \\
(10,8)-4 & 79 & 109 & 675535 & \tbcolor{}10 & 0:19.93 & \tbcolor{}78 & \tbcolor{}17112 & 11 & \tbcolor{}0:05.32 \\
(10,8)-5 & 100 & 95 & 442030 & \tbcolor{}10 & 0:28.30 & \tbcolor{}83 & \tbcolor{}21217 & 11 & \tbcolor{}0:06.55 \\
(10,8)-6 & 94 & 106 & 335891 & \tbcolor{}10 & \tbcolor{}0:07.54 & \tbcolor{}95 & \tbcolor{}21801 & 11 & 0:09.82 \\
(10,8)-7 & 114 & \tbcolor{}91 & 226714 & \tbcolor{}10 & \tbcolor{}0:05.93 & 93 & \tbcolor{}29886 & 11 & 0:10.38 \\
(10,8)-8 & 160 & 99 & 564469 & \tbcolor{}10 & 0:35.65 & \tbcolor{}83 & \tbcolor{}22725 & 11 & \tbcolor{}0:08.25 \\
(10,8)-9 & 98 & 97 & 243198 & \tbcolor{}10 & \tbcolor{}0:03.67 & \tbcolor{}84 & \tbcolor{}18172 & 11 & 0:05.86 \\
(10,8)-10 & 34 & 78 & 158310 & \tbcolor{}10 & \tbcolor{}0:02.95 & \tbcolor{}65 & \tbcolor{}17010 & 11 & 0:06.99 \\
\bottomrule
\end{tabular}
\end{table}

\begin{table}[tbp]
  \centering
  \footnotesize
  \caption{The full results of the experiments (part 4).}
  \label{table:full_results:4}
  \begin{tabular}{lrrrrrrrrr}
    \toprule
    \multirow{3}{*}{$(|\Loc|,|\Alphabet|)-i$} & \multirow{3}{*}{\begin{tabular}{c}DRTA \\ size\end{tabular}} & \multicolumn{4}{c}{\ALstarRTA{}} & \multicolumn{4}{c}{\NLstarRTA{}} \\
    \cmidrule(lr){3-6}\cmidrule(lr){7-10}
    & & \# EqQ & \# MemQ & $|\Loc_{\hypothesisA}|$ & Total Time & \# EqQ & \# MemQ & $|\Loc_{\hypothesisA}|$ & Total Time \\
    \midrule
(10,10)-1 & 69 & 104 & 653576 & \tbcolor{}10 & 0:20.78 & \tbcolor{}81 & \tbcolor{}17568 & 11 & \tbcolor{}0:08.09 \\
(10,10)-2 & 134 & 105 & 248412 & \tbcolor{}10 & 0:07.60 & \tbcolor{}83 & \tbcolor{}20516 & 11 & \tbcolor{}0:07.42 \\
(10,10)-3 & 89 & 137 & 723723 & \tbcolor{}10 & 0:26.16 & \tbcolor{}117 & \tbcolor{}29355 & 11 & \tbcolor{}0:11.28 \\
(10,10)-4 & 82 & 128 & 875092 & \tbcolor{}10 & 0:53.12 & \tbcolor{}97 & \tbcolor{}20874 & 11 & \tbcolor{}0:10.06 \\
(10,10)-5 & 129 & 101 & 277874 & \tbcolor{}10 & \tbcolor{}0:08.70 & \tbcolor{}82 & \tbcolor{}56840 & 11 & 0:31.43 \\
(10,10)-6 & 62 & 83 & 254169 & \tbcolor{}10 & 0:04.95 & \tbcolor{}67 & \tbcolor{}12078 & 11 & \tbcolor{}0:03.81 \\
(10,10)-7 & 99 & 126 & 1426815 & \tbcolor{}10 & 1:33.94 & \tbcolor{}94 & \tbcolor{}33369 & 11 & \tbcolor{}0:17.00 \\
(10,10)-8 & 131 & 138 & 1139758 & \tbcolor{}10 & 0:53.15 & \tbcolor{}118 & \tbcolor{}33264 & 11 & \tbcolor{}0:26.41 \\
(10,10)-9 & 205 & 102 & 737974 & \tbcolor{}10 & 0:22.71 & \tbcolor{}92 & \tbcolor{}22260 & 11 & \tbcolor{}0:10.52 \\
(10,10)-10 & 92 & 118 & 892851 & \tbcolor{}10 & 0:42.62 & \tbcolor{}89 & \tbcolor{}19461 & 11 & \tbcolor{}0:08.74 \\
\midrule
(12,2)-1 & 125 & 61 & 178443 & \tbcolor{}12 & 0:03.00 & \tbcolor{}42 & \tbcolor{}9331 & 13 & \tbcolor{}0:02.07 \\
(12,2)-2 & 81 & 48 & 60450 & \tbcolor{}12 & \tbcolor{}0:01.10 & \tbcolor{}46 & \tbcolor{}6699 & 13 & 0:01.31 \\
(12,2)-3 & 327 & 52 & 133136 & \tbcolor{}12 & 0:15.19 & \tbcolor{}42 & \tbcolor{}7776 & 13 & \tbcolor{}0:01.49 \\
(12,2)-4 & 162 & 59 & 108006 & \tbcolor{}13 & 0:05.31 & \tbcolor{}48 & \tbcolor{}11808 & 14 & \tbcolor{}0:02.55 \\
(12,2)-5 & 105 & 55 & 131301 & \tbcolor{}12 & 0:03.26 & \tbcolor{}46 & \tbcolor{}10208 & 13 & \tbcolor{}0:02.93 \\
(12,2)-6 & 121 & 48 & 140763 & \tbcolor{}12 & 0:03.77 & \tbcolor{}41 & \tbcolor{}7956 & 13 & \tbcolor{}0:01.27 \\
(12,2)-7 & 92 & 56 & 112251 & \tbcolor{}12 & 0:02.37 & \tbcolor{}51 & \tbcolor{}7910 & 13 & \tbcolor{}0:02.10 \\
(12,2)-8 & 263 & 83 & 289990 & \tbcolor{}14 & 0:35.43 & \tbcolor{}65 & \tbcolor{}9920 & 15 & \tbcolor{}0:02.41 \\
(12,2)-9 & 65 & 46 & 55809 & \tbcolor{}12 & \tbcolor{}0:00.53 & \tbcolor{}44 & \tbcolor{}10240 & 13 & 0:02.43 \\
(12,2)-10 & 176 & \tbcolor{}52 & 238212 & \tbcolor{}12 & 0:10.60 & 59 & \tbcolor{}10608 & 13 & \tbcolor{}0:02.76 \\
\midrule
(12,4)-1 & 17 & 52 & 113416 & \tbcolor{}12 & \tbcolor{}0:00.62 & \tbcolor{}50 & \tbcolor{}8652 & 13 & 0:01.94 \\
(12,4)-2 & 146 & 84 & 386488 & \tbcolor{}12 & 0:09.11 & \tbcolor{}73 & \tbcolor{}18122 & 13 & \tbcolor{}0:07.10 \\
(12,4)-3 & 125 & 62 & 302007 & \tbcolor{}12 & 0:06.79 & \tbcolor{}59 & \tbcolor{}9824 & 13 & \tbcolor{}0:02.52 \\
(12,4)-4 & 174 & \tbcolor{}61 & 211788 & \tbcolor{}12 & \tbcolor{}0:04.46 & 71 & \tbcolor{}18840 & 13 & 0:05.10 \\
(12,4)-5 & 105 & 72 & 161991 & \tbcolor{}12 & \tbcolor{}0:01.67 & \tbcolor{}68 & \tbcolor{}19575 & 13 & 0:05.42 \\
(12,4)-6 & 135 & 84 & 583090 & \tbcolor{}12 & 0:19.07 & \tbcolor{}72 & \tbcolor{}15826 & 13 & \tbcolor{}0:04.26 \\
(12,4)-7 & 85 & 69 & 171065 & \tbcolor{}12 & \tbcolor{}0:01.52 & \tbcolor{}62 & \tbcolor{}13120 & 13 & 0:03.06 \\
(12,4)-8 & 96 & \tbcolor{}76 & 307694 & \tbcolor{}12 & 0:05.66 & 80 & \tbcolor{}15414 & 13 & \tbcolor{}0:05.61 \\
(12,4)-9 & 172 & \tbcolor{}65 & 245963 & \tbcolor{}12 & 0:02.89 & \tbcolor{}65 & \tbcolor{}9214 & 13 & \tbcolor{}0:02.46 \\
(12,4)-10 & 216 & 75 & 471623 & \tbcolor{}12 & 0:14.10 & \tbcolor{}60 & \tbcolor{}21096 & 13 & \tbcolor{}0:06.40 \\
\midrule
(14,4)-1 & 65 & 66 & 208091 & \tbcolor{}14 & \tbcolor{}0:01.83 & \tbcolor{}58 & \tbcolor{}12680 & 15 & 0:03.42 \\
(14,4)-2 & 60 & \tbcolor{}56 & 192180 & \tbcolor{}15 & \tbcolor{}0:01.02 & 71 & \tbcolor{}12506 & 16 & 0:04.01 \\
(14,4)-3 & 70 & 74 & 265824 & \tbcolor{}14 & \tbcolor{}0:03.42 & \tbcolor{}72 & \tbcolor{}21168 & 15 & 0:06.58 \\
(14,4)-4 & 143 & 97 & 986803 & \tbcolor{}14 & 0:28.54 & \tbcolor{}76 & \tbcolor{}31728 & 15 & \tbcolor{}0:14.06 \\
(14,4)-5 & 114 & 71 & 204594 & \tbcolor{}14 & \tbcolor{}0:01.46 & \tbcolor{}68 & \tbcolor{}10290 & 15 & 0:03.24 \\
(14,4)-6 & 74 & \tbcolor{}74 & 280501 & \tbcolor{}14 & \tbcolor{}0:03.64 & \tbcolor{}74 & \tbcolor{}32438 & 15 & 0:13.26 \\
(14,4)-7 & 161 & 75 & 333600 & \tbcolor{}14 & 0:08.02 & \tbcolor{}69 & \tbcolor{}17848 & 15 & \tbcolor{}0:05.71 \\
(14,4)-8 & 326 & 91 & 360942 & \tbcolor{}14 & \tbcolor{}0:07.92 & \tbcolor{}85 & \tbcolor{}20188 & 15 & 0:08.28 \\
(14,4)-9 & 81 & \tbcolor{}77 & 289537 & \tbcolor{}14 & \tbcolor{}0:04.02 & 80 & \tbcolor{}15834 & 15 & 0:04.24 \\
(14,4)-10 & 209 & 136 & 2780462 & \tbcolor{}14 & 5:42.07 & \tbcolor{}87 & \tbcolor{}32292 & 15 & \tbcolor{}0:14.08 \\
\bottomrule
\end{tabular}
\end{table}

\begin{table}[tbp]
  \centering
  \footnotesize
  \caption{The full results of the experiments (part 5).}
  \label{table:full_results:5}
  \begin{tabular}{lrrrrrrrrr}
    \toprule
    \multirow{3}{*}{$(|\Loc|,|\Alphabet|)-i$} & \multirow{3}{*}{\begin{tabular}{c}DRTA \\ size\end{tabular}} & \multicolumn{4}{c}{\ALstarRTA{}} & \multicolumn{4}{c}{\NLstarRTA{}} \\
    \cmidrule(lr){3-6}\cmidrule(lr){7-10}
    & & \# EqQ & \# MemQ & $|\Loc_{\hypothesisA}|$ & Total Time & \# EqQ & \# MemQ & $|\Loc_{\hypothesisA}|$ & Total Time \\
    \midrule
(16,4)-1 & 923 & 116 & 2761765 & \tbcolor{}16 & 10:32.02 & \tbcolor{}85 & \tbcolor{}43155 & 17 & \tbcolor{}0:17.72 \\
(16,4)-2 & 208 & \tbcolor{}88 & 453181 & \tbcolor{}16 & 0:09.22 & 89 & \tbcolor{}22949 & 17 & \tbcolor{}0:07.65 \\
(16,4)-3 & 91 & 93 & 478472 & \tbcolor{}16 & 0:23.27 & \tbcolor{}73 & \tbcolor{}26606 & 17 & \tbcolor{}0:07.46 \\
(16,4)-4 & 463 & \tbcolor{}88 & 674578 & \tbcolor{}16 & 0:48.42 & 103 & \tbcolor{}54015 & 17 & \tbcolor{}0:24.24 \\
(16,4)-5 & 260 & 95 & 1162588 & \tbcolor{}16 & 1:25.93 & \tbcolor{}86 & \tbcolor{}38709 & 17 & \tbcolor{}0:23.21 \\
(16,4)-6 & 245 & \tbcolor{}81 & 294252 & \tbcolor{}17 & 0:08.68 & 86 & \tbcolor{}22400 & 18 & \tbcolor{}0:08.59 \\
(16,4)-7 & 117 & 83 & 483237 & \tbcolor{}15 & 0:16.23 & \tbcolor{}78 & \tbcolor{}23744 & 17 & \tbcolor{}0:08.86 \\
(16,4)-8 & 81 & 78 & 317988 & \tbcolor{}16 & \tbcolor{}0:08.03 & \tbcolor{}74 & \tbcolor{}18258 & 17 & 0:09.22 \\
(16,4)-9 & 489 & 101 & 943769 & \tbcolor{}16 & \tbcolor{}2:08.91 & \tbcolor{}85 & \tbcolor{}132928 & 17 & 2:48.27 \\
(16,4)-10 & 382 & \tbcolor{}91 & 554421 & \tbcolor{}16 & 0:31.97 & 96 & \tbcolor{}27280 & 17 & \tbcolor{}0:09.13 \\
\midrule
(18,4)-1 & 87 & \tbcolor{}67 & 361888 & \tbcolor{}18 & \tbcolor{}0:05.13 & 73 & \tbcolor{}14256 & 19 & 0:05.51 \\
(18,4)-2 & 557 & 131 & 1464980 & \tbcolor{}21 & 9:29.91 & \tbcolor{}128 & \tbcolor{}61464 & 22 & \tbcolor{}0:34.99 \\
(18,4)-3 & 575 & 130 & 1891770 & \tbcolor{}18 & 4:56.15 & \tbcolor{}114 & \tbcolor{}56637 & 19 & \tbcolor{}0:37.49 \\
(18,4)-4 & 343 & 107 & 1749460 & \tbcolor{}18 & 1:49.53 & \tbcolor{}87 & \tbcolor{}37863 & 19 & \tbcolor{}0:15.18 \\
(18,4)-5 & 400 & \tbcolor{}96 & 590103 & \tbcolor{}18 & 0:32.48 & 110 & \tbcolor{}35091 & 19 & \tbcolor{}0:19.70 \\
(18,4)-6 & 255 & \tbcolor{}98 & 1286708 & \tbcolor{}18 & 0:49.59 & 99 & \tbcolor{}32085 & 19 & \tbcolor{}0:12.51 \\
(18,4)-7 & 228 & 86 & 572519 & \tbcolor{}18 & 0:21.90 & \tbcolor{}84 & \tbcolor{}19928 & 19 & \tbcolor{}0:07.95 \\
(18,4)-8 & 516 & 110 & 1514586 & \tbcolor{}18 & 2:01.59 & \tbcolor{}104 & \tbcolor{}35308 & 20 & \tbcolor{}0:21.01 \\
(18,4)-9 & 190 & \tbcolor{}80 & 684605 & \tbcolor{}18 & \tbcolor{}0:10.96 & 93 & \tbcolor{}37408 & 19 & 0:17.25 \\
(18,4)-10 & 156 & 102 & 1245921 & \tbcolor{}18 & 1:05.25 & \tbcolor{}83 & \tbcolor{}24000 & 19 & \tbcolor{}0:09.13 \\
\midrule
(20,4)-1 & 395 & \tbcolor{}102 & 1009877 & \tbcolor{}20 & 0:32.94 & 111 & \tbcolor{}38080 & 21 & \tbcolor{}0:22.18 \\
(20,4)-2 & 612 & 115 & 1429673 & \tbcolor{}20 & 1:55.26 & \tbcolor{}109 & \tbcolor{}27885 & 21 & \tbcolor{}0:15.16 \\
(20,4)-3 & 341 & \tbcolor{}108 & 960338 & \tbcolor{}20 & 0:41.23 & 117 & \tbcolor{}60324 & 21 & \tbcolor{}0:33.09 \\
(20,4)-4 & 237 & \tbcolor{}91 & 911588 & \tbcolor{}20 & 0:27.22 & 97 & \tbcolor{}41400 & 21 & \tbcolor{}0:18.76 \\
(20,4)-5 & 542 & 110 & 885578 & \tbcolor{}20 & \tbcolor{}0:45.93 & \tbcolor{}105 & \tbcolor{}135675 & 21 & 2:38.04 \\
(20,4)-6 & 410 & 128 & 934053 & \tbcolor{}20 & 1:09.11 & \tbcolor{}116 & \tbcolor{}46480 & 21 & \tbcolor{}0:25.00 \\
(20,4)-7 & 437 & 104 & 1464721 & \tbcolor{}20 & 1:21.30 & \tbcolor{}94 & \tbcolor{}37914 & 21 & \tbcolor{}0:14.07 \\
(20,4)-8 & 563 & 98 & 778467 & \tbcolor{}20 & 0:29.29 & \tbcolor{}97 & \tbcolor{}26255 & 21 & \tbcolor{}0:12.46 \\
(20,4)-9 & 455 & 89 & 846221 & \tbcolor{}20 & 0:28.21 & \tbcolor{}86 & \tbcolor{}29097 & 21 & \tbcolor{}0:12.33 \\
(20,4)-10 & 235 & 112 & 2003803 & \tbcolor{}18 & \tbcolor{}2:20.95 & \tbcolor{}107 & \tbcolor{}124959 & 21 & 2:36.37 \\
\bottomrule
\end{tabular}
\end{table}
}

\ifdefined\VersionWithComments%
\section{Additional private information for us, the authors}
\instructions{The final paper and the signed copyright form are due by June 24, 2026 (AoE). The page limit is strictly 16 pages for regular papers and 7 pages for short papers, both excluding references, as in the original submission. Appendices are not allowed}

\begin{lncs}
 \setcounter{tocdepth}{1}
\end{lncs}
\listoftodos{}
\fi

\end{document}